\def\poly{\operatorname{poly}}
\def\polylog{\operatorname{polylog}}
  \theoremstyle{plain}
  \newtheorem{theorem}{Theorem}
  \newtheorem{lemma}[theorem]{Lemma}
  \newtheorem{corollary}[theorem]{Corollary}
  \newtheorem{fact}[theorem]{Fact}
  \newtheorem{observation}[theorem]{Observation}
  \newtheorem{definition}[theorem]{Definition}
\title{A Deterministic Parallel APSP Algorithm and its Applications}
\date{\vspace{-5ex}}
\author[1]{Adam Karczmarz\thanks{\texttt{a.karczmarz@mimuw.edu.pl}. Supported by ERC Consolidator
Grant 772346 TUgbOAT, the Polish National Science Centre 2018/29/N/ST6/00757 grant, and by the Foundation for Polish Science (FNP) via the START programme.}}
\author[1]{Piotr Sankowski\thanks{\texttt{sank@mimuw.edu.pl}. Supported by ERC Consolidator Grant 772346 TUgbOAT.}}
\affil[1]{Institute of Informatics, University of Warsaw, Poland}
\algnewcommand{\LineComment}[1]{\State \(\triangleright\) #1}
\definecolor{darkblue}{rgb}{0,0.08,0.45}
\newif\iffull
\newif\ifspreport
\newcommand{\dist}{\delta}
\newcommand{\wei}{w}
\newcommand{\disth}[1]{\ensuremath{\dist^{#1}}}
\newcommand{\len}{\ell}
\newcommand{\hops}[1]{\ensuremath{|#1|}}
\newcommand{\Ot}{\ensuremath{\widetilde{O}}}
\newcommand{\eps}{\ensuremath{\epsilon}}
\newcommand{\DDG}{DDG}
\newcommand{\bnd}{\partial}
\begin{document}

\maketitle

\thispagestyle{empty}

\begin{abstract}

In this paper we show a deterministic parallel all-pairs shortest paths algorithm for real-weighted directed graphs.
  The algorithm has $\Ot(nm+(n/d)^3)$ work and $\Ot(d)$ depth for any depth parameter $d\in [1,n]$.
  To the best of our knowledge, such a trade-off has only
been previously described for the real-weighted single-source shortest paths problem
using randomization [Bringmann et al., ICALP'17]. Moreover, our result improves
upon the parallelism of the state-of-the-art randomized parallel algorithm for computing transitive closure, which has
$\Ot(nm+n^3/d^2)$ work and $\Ot(d)$ depth [Ullman and Yannakakis, SIAM J. Comput. '91].

Our APSP algorithm turns out to be a powerful tool for designing efficient planar graph algorithms
in both parallel and sequential regimes.
  By suitably adjusting the depth parameter~$d$ and applying known techniques,
  we obtain:
  \begin{enumerate}[label=(\arabic*)]
    \item
  nearly work-efficient $\Ot(n^{1/6})$-depth parallel algorithms for the real-weighted single-source shortest paths problem
  and finding a bipartite perfect matching in a planar graph,
    \item an $\Ot(n^{9/8})$-time sequential strongly polynomial algorithm for
computing a minimum mean cycle or a minimum cost-to-time-ratio cycle of a planar graph,
    \item a slightly faster
  algorithm for computing so-called external dense distance graphs of
  all pieces of a recursive decomposition of a planar graph.
  \end{enumerate}

  One notable ingredient of our parallel APSP algorithm is a simple deterministic \linebreak $\Ot(nm)$-work
  $\Ot(d)$-depth procedure for computing $\Ot(n/d)$-size
hitting sets of shortest \linebreak $d$-hop paths between
all pairs of vertices of a real-weighted digraph. Such hitting~sets have also been called $d$-hub sets.
Hub sets have previously proved especially useful in designing parallel or dynamic shortest paths algorithms
and are typically obtained via random sampling.
Our procedure implies, for example, an $\Ot(nm)$-time deterministic algorithm for
finding a shortest negative cycle of a real-weighted digraph.
Such a near-optimal bound for this problem has been so far only achieved
  using a randomized algorithm~[Orlin et al., Discret. Appl. Math. '18].

\end{abstract}

\clearpage
\setcounter{page}{1}

\section{Introduction}

The \emph{all-pairs shortest paths problem} (APSP) is one of the most fundamental
graph problems. It has been studied in numerous variants, for many computational
models and graph classes. In this paper we study the APSP problem on
real-weighted, possibly sparse graphs in the parallel setting.

The efficiency of a parallel algorithm is usually characterized using
the notions of \emph{work} and \emph{depth} (also called~\emph{span}~or~\emph{time}).
The work is the total number of primitive operations performed.
The depth is the longest chain of sequential dependencies between these operations.
A parallel algorithm of work $W(n)$ and depth $D(n)$ (where $n$ is the problem size)
can be generally scheduled
to run in $\Ot(D(n))$ time using $\Ot(W(n)/D(n))$ processors~\cite{Brent74},
where the $\Ot(\cdot)$ notation suppresses $O(\polylog{n})$ factors.
The quantity $W(n)/D(n)$ is often called the \emph{parallelism} of a parallel algorithm.
An algorithm is called \emph{nearly work-efficient} if $W(n)=\Ot(T(n))$, where $T(n)$
is the best known time bound needed to solve the problem using a sequential algorithm.

There exists a very simple folklore parallel algorithm for the APSP problem via repeatedly squaring the weighted adjacency matrix using min-plus product.
It has $\Ot(n^3)$ work and $O(\polylog{n})$ depth. This
can be slightly tweaked to obtain polylogarithmic-factor improvement
in work and depth~\cite{10.1145/140901.141913}.
However, these algorithms are nearly work-efficient only for dense graphs
since the best known sequential algorithms run in $\Ot(nm)$ time~\cite{Johnson77, Pettie04}.
Hence, we are missing APSP algorithms competitive with the state-of-the-art sequential
algorithms even for moderately dense graphs.

Dealing with sparser graphs in the parallel setting
turns out to be a much more challenging task
even for an easier problem of computing the transitive closure.
To the best of our knowledge, the state-of-the-art for parallel transitive
closure for sparse graphs is the classical
tradeoff of Ullman and Yannakakis~\cite{UllmanY91}.
They showed a Monte Carlo randomized parallel algorithm with $\Ot(nm+n^3/d^2)$ work
and $\Ot(d)$ depth for any parameter $d\in [1,n]$.
Hence, their algorithm is nearly work-efficient for $d=\widetilde{\Omega}(n/\sqrt{m})$
and can achieve parallelism of order $\Ot(m^{3/2})$ while being nearly work-efficient.
\subsection{Our Results and Related Work}
Our main result is a \emph{deterministic} parallel all-pairs shortest paths algorithm for \emph{real-weighted}
directed graphs that improves upon the 30-year-old randomized transitive closure trade-off of~\cite{UllmanY91}.

\begin{restatable}{theorem}{tapsp}\label{t:apsp}
  Let $G$ be a real-weighted digraph. For any $d\in [1,n]$, there
  exists a deterministic parallel algorithm computing all-pairs shortest paths in $G$ with
  $\Ot(nm+(n/d)^3)$ work and $\Ot(d)$ depth.
\end{restatable}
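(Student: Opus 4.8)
The plan is to reduce the full real-weighted APSP problem to a sequence of "bounded-hop" shortest path computations, connected together by a small hitting set (hub set) of shortest $d$-hop paths. The excerpt already advertises the key building block: a deterministic $\Ot(nm)$-work, $\Ot(d)$-depth procedure computing an $\Ot(n/d)$-size $d$-hub set $H$, i.e. a set hitting some shortest $d$-hop path between every pair of vertices. I would take that as a black box (it is the ingredient the paper itself highlights). Note that before any of this one must reduce to nonnegative weights: run one Bellman–Ford-style potential computation (Johnson's trick) — but since we want small depth, this itself is done via $O(\log n)$ rounds of "hop-doubling" min-plus products on the $n$-vertex matrix restricted to $\le d$-hop distances, interleaved with the hub-set machinery; I'd fold the reduction to nonnegative weights into the same recursive scheme rather than treat it separately.

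The core computation has three phases. \emph{Phase 1 (short-range distances):} compute $\disth{d}(u,v)$, the minimum weight of a $u$--$v$ walk using at most $d$ edges, for all pairs $u,v$. The standard way to get this with small depth is $\lceil\log d\rceil$ rounds of min-plus squaring of the ($\le d$-hop) distance matrix, but that costs $\Ot(n^3)$ work, which is too much when $m\ll n^2$. Instead I would run, \emph{from each vertex $s$ in parallel}, a $d$-round Bellman–Ford-type relaxation over the $m$ edges; each round is $\Ot(1)$ depth and $\Ot(m)$ work, giving $\Ot(nm)$ work and $\Ot(d)$ depth total, and yielding exactly $\disth{d}(s,\cdot)$. \emph{Phase 2 (hub graph):} build the complete directed graph $\hubs{}$ on the hub set $H$ where the arc $(x,y)$ has weight $\disth{d}(x,y)$ (already computed in Phase 1, since $H\subseteq V$). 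This graph has $|H|=\Ot(n/d)$ vertices, so its own APSP can be solved by $O(\log(n/d))$ rounds of min-plus squaring in $\Ot((n/d)^3)$ work and $\Ot(\polylog n)$ depth. Because $H$ is a $d$-hub set, for any $u,v$ a shortest $u$--$v$ path decomposes into $\le d$-hop segments whose endpoints (except possibly the first and last) lie in $H$, so the true distance $\dist(u,v)$ is recovered as $\min_{x,y\in H}\bigl(\disth{d}(u,x)+\dist_{\hubs{}}(x,y)+\disth{d}(y,v)\bigr)$, with the degenerate cases ($\dist(u,v)=\disth{d}(u,v)$, or only one hub used) handled by also including the $\disth{d}$ values directly. \emph{Phase 3 (reassembly):} evaluate that min-expression for all $n^2$ pairs. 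Naively this is $n^2\cdot|H|^2=\Ot(n^4/d^2)$ work; to stay within budget one does it as two min-plus products of an $n\times|H|$ matrix with an $|H|\times|H|$ matrix and then with an $|H|\times n$ matrix, costing $\Ot(n\cdot(n/d)^2 + n^2\cdot(n/d)) = \Ot(n^3/d)$ — still too much. The fix is to observe $n^3/d \le \max(nm, (n/d)^3)$ only when $d\le n^{2/3}$ roughly, so for larger $d$ one instead notes $\disth{d}$ already equals $\dist$ restricted to $\le d$ hops and recurses: partition the hop-budget, so that reassembly from $H$ is itself replaced by $O(\log n)$ levels of the same three-phase scheme with geometrically growing hop bounds, at each level the "short-range" part costing $\Ot(nm)$ and the "hub" part $\Ot((n/d)^3)$, and only $O(\log n)$ levels — this is the standard Ullman–Yannakakis-style telescoping that turns $n^3/d^2$ (their bound, for reachability) into the shortest-path analogue.

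The \textbf{main obstacle}, and the place where real weights bite, is Phase 3 / the recursion: for transitive closure Ullman–Yannakakis can afford a single product of the $n\times n$ reachability matrix with the $n/d$-"hub" structure because Boolean matrix products compose cheaply and the hub set need only be a reachability hub; for shortest \emph{distances} one must be careful that (i) the hub set hits a shortest \emph{$d$-hop} path, not just any path — this is exactly why the deterministic hub-set lemma is phrased for $d$-hop paths — and (ii) the $n\times(n/d)$ by $(n/d)\times n$ reassembly product is the work bottleneck, so the algorithm must recurse on the hop bound (doubling the number of allowed hops per level, halving the number of required hub "jumps") rather than doing it in one shot, which is where the $(n/d)^3$ term (as opposed to $n^3/d^2$) comes from and where the depth stays $\Ot(d)$ because only the bottom level pays the $d$-round Bellman–Ford cost while the $O(\log n)$ upper levels are each $\Ot(\polylog n)$ depth. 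I would spend most of the proof making this telescoping precise and checking that the per-level work sums to $\Ot(nm+(n/d)^3)$ and the depth to $\Ot(d)$, and that correctness (every shortest path is captured after $O(\log n)$ hop-doublings through $H$) follows from the $d$-hub property by a clean induction on the number of hops.
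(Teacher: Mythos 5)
Your overall architecture (bounded-hop Bellman--Ford, a deterministic hub set, repeated min-plus squaring on the small hub graph) is the right one, but there is a genuine gap exactly at the point you flag as the main obstacle, and your Phase~1 work bound is simply wrong. Running a $d$-round Bellman--Ford from \emph{every} vertex costs $\Theta(nmd)$ work, not $\Ot(nm)$: each of the $n$ sources pays $\Theta(m)$ per round for $d$ rounds. (This is precisely the trap the paper attributes to the deterministic variant of Bringmann et al., whose first phase is $\Theta(nmd)$.) Consequently your Phase~3 ``fix'' inherits an unsupported claim: you assert that each of the $O(\log n)$ levels of the telescoping has a ``short-range'' part costing $\Ot(nm)$, but as described the short-range computation at hop scale $2^k$ is still run from all (or an unspecified set of) sources, and nothing in the proposal forces the number of sources to shrink as the hop bound grows. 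The paper's algorithm hinges on exactly this balancing: it uses a whole \emph{hierarchy} of hub sets $H_1=V, H_2, H_4,\ldots,H_d$ with $|H_{2^k}|=O((n/2^k)\log n)$ (your single black-boxed $d$-hub set is not enough), and at level $k$ it computes exact distances for pairs in $(H_{2^k}\times V)\cup(V\times H_{2^k})$ by running only $2^{k+1}+1$ Bellman--Ford steps from each $s\in H_{2^k}$ on $G$ augmented with $|H_{2^{k+1}}|$ shortcut edges $sv$ of weight $\dist_G(s,v)$, $v\in H_{2^{k+1}}$, already known from the previous level. That makes every level $\Ot(|H_{2^k}|\cdot 2^k\cdot m)=\Ot(nm)$ work and $\Ot(2^k)$ depth, summing to $\Ot(nm\log d)$ and $\Ot(d)$; the base case at scale $d$ is $(d+1)$-hop Bellman--Ford from the $\Ot(n/d)$ vertices of $H_d$ followed by repeated squaring, giving the $\Ot((n/d)^3)$ term. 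Your proposal gestures at ``halving the number of hub jumps'' but never states the source-set restriction or the shortcut-edge mechanism, so the per-level work bound and the correctness of the lift between scales are not established.

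Two smaller points. First, the detour through Johnson-style reduction to nonnegative weights is both unnecessary and problematic: the paper's algorithm never needs a price function (it uses only Bellman--Ford relaxations and min-plus products, which tolerate negative weights in the absence of negative cycles), and computing a feasible price function in low depth is itself essentially the SSSP problem you are trying to solve. Second, since the hierarchy of hub sets is needed, black-boxing a single $d$-hub set does not suffice; fortunately the paper's deterministic construction produces all scales $H_{2^k}$ (and detects a short negative cycle if one exists) within $\Ot(nm)$ work and $\Ot(d)$ depth, so the fix is to invoke that hierarchy and then carry out the level-by-level lifting described above, proving at each level that $\dist^{2^{k+1}+1}$ in the shortcut-augmented graph equals the true distance.
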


Observe that our algorithm is nearly work-efficient for $d=\Omega(n^{2/3}/m^{1/3})$,
which is $\Omega(n^{1/3})$ for sparse graph.
As a result, as long as the number of used processors is $p=\Ot(n^{1/3}m^{4/3})$,
we can compute all-pairs shortest paths in $\Ot(nm/p)$ parallel time.
To the best of our knowledge, such a tradeoff for \emph{real-weighted} digraphs has only
been achieved for \emph{single-source shortest paths} (SSSP) and \emph{negative cycle detection}
problems~\cite{BringmannHK17}. Both of these results require randomization, whereas our algorithm is deterministic.
Bringmann et al.~\cite{BringmannHK17} also show a deterministic variant of their algorithm
with work $\Ot(nmd+(n/d)^3)$ that is not nearly work-efficient unless the graph is dense.\footnote{Bringmann et al.~\cite[Theorem 19]{BringmannHK17} mistakenly
state the work of their algorithm to be $\Ot(nm+(n/d)^3+n^2d)$ --
even though in Section 4.1.2 they correctly bound the work in
the first step of their algorithm to be $\Theta(nmd)$.}

In the aforementioned previous results~\cite{UllmanY91, BringmannHK17}, randomization is
used only for computing a small 
subset of $V$ that is, roughly speaking, guaranteed to contain some
vertex of each ``long'' shortest path consisting of at least $h$ hops.
Following~\cite{KarczmarzL19}, we call such set a $h$-hub set of $G$ (a formal definition
is deferred to Section~\ref{s:apsp})\footnote{Zwick \cite{Zwick02} uses the name \emph{bridging set} for an analogous concept.
Some works also use the term \emph{hitting set}, but hitting set is a more general notion, which in our paper is used in multiple different contexts.}.
A classical argument of Ullman and Yannakakis~\cite{UllmanY91} shows that a randomly sampled
$\Theta((n/h) \cdot \log{n})$-size subset of $V$ constitutes an $h$-hub set with high probability.

We show that for any $h\in [1,n]$, an $h$-hub set of size $\Ot(n/h)$ can be
computed deterministically using $\Ot(nm)$ work and $\Ot(h)$ depth.
Our procedure works in presence of negative edge weights and even -- to some extent -- when negative cycles are allowed.
The constructed hub set is guaranteed to hit all-pairs shortest $h$-hop paths (which are
well-defined regardless of whether the APSP problem is feasible)
unless the shortest (in terms of hops) negative cycle has at most $h$ edges.
As a by-product, we also obtain the following result in the sequential regime.
\begin{restatable}{theorem}{tcycle}\label{t:cycle}
  Let $G$ be a real-weighted directed graph.
  A shortest (in terms of hops) negative-weight cycle in $G$ can be found
  deterministically in $O(nm\log^2{n})$ time.
\end{restatable}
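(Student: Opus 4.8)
The plan is to reduce the problem of finding a shortest (in terms of hops) negative cycle to the hub-set construction promised earlier in the paper, combined with a binary search on the hop-length of the cycle. Suppose $G$ has a negative cycle, and let $c^*$ be the minimum number of hops on any negative cycle; we want to find such a cycle with $|c^*|$ edges. First I would observe that the hub-set procedure, run with parameter $h$, produces a set $\hubs{h}$ of size $\Ot(n/h)$ that hits every shortest $h$-hop path between all pairs of vertices \emph{unless} the shortest negative cycle has at most $h$ edges. Equivalently, the procedure gives us a detector: if the procedure (or the associated all-pairs $h$-hop shortest path computation on which it is built) reveals a closed walk of negative weight with at most $h$ edges, then $c^* \le h$; otherwise we can trust the hub set to be correct, which in particular means there is no negative cycle of hop-length $\le h$, i.e.\ $c^* > h$. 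Running this for geometrically increasing values $h = 1, 2, 4, \dots$ therefore localizes $c^*$ to an interval $(h/2, h]$ after $O(\log n)$ rounds, at a total cost of $\sum_i \Ot(nm) = \Ot(nm)$ work; in the sequential setting depth is irrelevant, so we simply pay $\Ot(h \cdot nm/h) = \Ot(nm)$ time summed over the geometric sequence, which is $O(nm\log^2 n)$ once the polylog factors from the hub construction and the binary search are accounted for.

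Once we know $c^* \in (h/2, h]$ for the final value of $h$, the second step is to actually extract a negative cycle with exactly the minimum number of hops. Here I would run a hop-bounded Bellman--Ford-type computation: compute, for every vertex $v$ and every hop count $k \le h$, the minimum weight $\dist^{k}(v,v)$ of a closed walk from $v$ to $v$ using exactly $k$ edges (or at most $k$ edges), which takes $O(hm) = \Ot(nm)$ time since $h \le n$. The smallest $k$ for which some $\dist^{k}(v,v) < 0$ is precisely $c^*$, and standard predecessor bookkeeping recovers the walk; a negative closed walk of minimum hop-length is necessarily a simple cycle, so this is the desired output. The reason we first narrow $h$ down rather than running this directly with $h = n$ is only to keep the exposition parallel to the APSP application — in the purely sequential statement one could in fact just run the hop-indexed Bellman--Ford for $k$ up to $n$ in $O(nm)$ time directly, so the hub machinery is not strictly needed for Theorem \ref{t:cycle} in isolation, but it yields the same bound and ties into the paper's framework.

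The main obstacle I anticipate is the bookkeeping needed to make the hop-indexed distance recurrence both correct and efficient when negative cycles are present: one must be careful that $\dist^{k}(u,v)$ really tracks minimum-weight walks with \emph{exactly} $k$ hops (not "at most $k$"), so that the first $k$ at which a negative closed walk appears equals $c^*$ rather than something larger, and one must argue that a minimum-hop negative closed walk cannot repeat a vertex. This last point is the crux: if a minimum-hop negative closed walk revisited a vertex, it would decompose into two shorter closed walks, at least one of which is negative, contradicting minimality — so the output is automatically a simple cycle. The remaining details (handling the $O(\log n)$ overhead from the geometric search, invoking the hub-set theorem as a black box for the "unless $c^* \le h$" guarantee, and confirming the $O(nm\log^2 n)$ arithmetic) are routine.
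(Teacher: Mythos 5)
There is a genuine gap, and it is exactly at the point your proposal declares easy. You claim that one can compute, for every vertex $v$ and every $k\le h$, the quantity $\dist^{k}(v,v)$ in $O(hm)$ time, and you conclude that ``one could in fact just run the hop-indexed Bellman--Ford for $k$ up to $n$ in $O(nm)$ time directly, so the hub machinery is not strictly needed.'' That is false: a hop-indexed Bellman--Ford run gives $\dist^k(s,v)$ for a \emph{single} source $s$, so obtaining the closed-walk values $\dist^k(v,v)$ for all $v$ this way costs one run per source, i.e.\ $O(n\cdot hm)$, which is $O(n^2m)$ for $h=\Theta(n)$. If the minimum-hop negative closed walk were detectable by a single $O(nm)$ sweep, Theorem~\ref{t:cycle} would be a triviality rather than a result matching the randomized bound of Orlin et al.; the paper explicitly notes the best previous deterministic bounds are $\widetilde{\Omega}(n^3)$. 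The entire difficulty is that you cannot afford to test $\dist^k(v,v)<0$ from all $n$ sources, and your proposal never resolves it.

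The paper's route (Lemma~\ref{l:comp} and Theorem~\ref{t:hubs}) closes precisely this hole: it runs the doubling induction you sketch, but at scale $h=2^k$ the negative-cycle test is performed only from the $O((n/h)\log n)$ vertices of the current hub set $H_h$, costing $O(|H_h|\,hm)=O(nm\log n)$ per level and $O(nm\log^2 n)$ overall. The correctness of restricting sources to $H_h$ is the substantive step your proposal omits: if a shortest negative cycle $C$ has hop-length in $(h,2h]$, its $h$-hop prefix $P=v_1\to v_{h+1}$ is minimal (since $C$ has minimum hops and is simple), so by the $h$-hub property one can replace $P$ by a minimal $h$-hop path of no greater length passing through some $z\in H_h$, yielding another shortest negative cycle through $z$ and hence $\dist^{2h}_G(z,z)<0$. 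Only with this surgery argument does ``run $2h$ Bellman--Ford steps from all $z\in H_h$ and take the smallest $k$ with $\dist^k(z,z)<0$'' both run in the claimed time and return the correct minimum hop count. Your geometric-search framing and the observation that a minimum-hop negative closed walk is a simple cycle are fine, but without the hub-restricted detection (or some substitute for it) the proof does not achieve $O(nm\log^2 n)$.
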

So far, an $\Ot(nm)$ bound for the shortest negative cycle problem has
only been obtained using a randomized algorithm by Orlin et al.~\cite{OrlinSW18}\footnote{The algorithm of~\cite{OrlinSW18} runs in $O(nm\log{n})$ expected time. However,
if one aims at high-probability correctness, its running time is $O(nm\log^2{n})$ which matches our bound.}.
The best known deterministic algorithms~\cite{Subramani09, SubramaniWG13}
require $\widetilde{\Omega}(n^3)$ worst-case time.
One can easily argue that our bound is the best possible, up to polylogarithmic factors, unless
an unexpected algorithmic breakthrough is made, that would imply progress for other core problems as well.
In particular, the shortest negative cycle captures
the unweighted directed girth problem (for which the trivial $O(nm)$ bound stands for decades).
Moreover, one should not hope for an $O(n^{3-\eps})$ fast matrix multiplication-based
algorithm since the shortest negative cycle problem also captures the negative triangle detection problem known
to be subcubic-equivalent to the APSP problem~\cite{WilliamsW18}.

We believe that our procedure for computing hub sets might
be useful in obtaining other deterministic sequential and parallel algorithms.
 As a direct application
of our APSP algorithm one can obtain, e.g.,  more efficient parallel algorithms for computing closeness
centrality~\cite{1690659}. Moreover, as shown below,
we can use it to improve algorithms for several planar graph problems.

\subsubsection{Applications for planar graphs}

\paragraph{Nearly work-efficient parallel algorithms.}
Theorem~\ref{t:apsp} can be used to highly parallelize the framework used
by Fakcharoenphol and Rao~\cite{FR} to obtain
nearly optimal algorithms for two fundamental planar graph problems.

\begin{restatable}{theorem}{tparallel}\label{t:parallel}
  Let $G$ be a real-weighted planar digraph. There exists
  a deterministic parallel algorithm for negative cycle detection
  and single-source shortest paths in $G$ with $\Ot(n)$ work
  and $\Ot(n^{1/6})$ depth.
\end{restatable}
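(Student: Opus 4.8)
The plan is to parallelize the Fakcharoenphol--Rao (FR) recursive framework for shortest paths in planar digraphs, using Theorem~\ref{t:apsp} to resolve the bottleneck computations on the boundaries of pieces. First I would recall the standard FR setup: compute a recursive decomposition of $G$ into $O(1)$-size (or constant) leaf pieces via balanced separators, where a piece $P$ with boundary $\bnd P$ of size $O(\sqrt{|P|})$ is split into two children. For each piece $P$, one maintains a \emph{dense distance graph} $\DDG_P$: a complete digraph on $\bnd P$ whose edge weights equal shortest-path distances inside $P$. The key structural fact is that $\DDG_P$ can be built by taking the union of the children's DDGs together with the separator vertices, and then running an all-pairs shortest paths computation on this ``merged'' graph, which has $\Ot(\sqrt{|P|})$ vertices and $\Ot(|P|)$ edges (since a DDG on $b$ boundary vertices has $b^2$ edges). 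The bottleneck in both work and depth is exactly this APSP computation, and this is where Theorem~\ref{t:apsp} enters.

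Next I would choose the depth parameter $d$ in Theorem~\ref{t:apsp} appropriately for each piece. For a piece of size $s$, the merged graph on which we run APSP has $n' = \Ot(\sqrt{s})$ vertices and $m' = \Ot(s)$ edges; applying Theorem~\ref{t:apsp} with parameter $d$ gives $\Ot(n'm' + (n'/d)^3) = \Ot(s^{3/2} + s^{3/2}/d^3)$ work and $\Ot(d)$ depth. Summing $s^{3/2}$ over all pieces at all levels of the recursive decomposition telescopes to $\Ot(n^{3/2})$ --- too much work. The standard fix (as in FR, and refined by later work on the Monge / Klein-style structure) is that one does not actually need a \emph{full} APSP on the merged graph: using the fact that DDGs are represented succinctly via the Monge property (or, in the pure-FR version, via the FR-Dijkstra / ``Left-right'' machinery), one can compute the merged DDG in only $\Ot(s)$ work per piece, so the total over the decomposition is $\Ot(n)$. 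Theorem~\ref{t:apsp} is needed to perform the internal matrix-style relaxations that FR-Dijkstra requires in low depth; one sets $d = \Ot(\sqrt{s}) \le \Ot(n^{1/6})$ at the relevant levels so that the per-piece depth never exceeds $\Ot(n^{1/6})$, while the depth over the $\Ot(\log n)$ levels of the recursion accumulates only polylogarithmic overhead. After all DDGs are built, a single source-to-all-boundary and boundary-to-all computation --- one more APSP/SSSP invocation on a top-level graph of $\Ot(\sqrt n)$ vertices, again with $d=\Ot(n^{1/6})$ --- yields single-source shortest paths, and negative cycle detection falls out of the same computation (a negative cycle manifests as a negative diagonal entry or a failed feasibility check in some DDG).

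The main obstacle will be making the recursion depth-efficient without blowing up the work: naively the levels of the decomposition are processed bottom-up and must be sequential (a parent's DDG depends on its children's), so the $\Ot(\log n)$ levels contribute a multiplicative $\Ot(\log n)$ to depth --- fine --- but each level's APSP work must be balanced against its depth budget. Concretely, the pieces at the level where $s \approx n$ force $d \le \Ot(\sqrt n)$, which is \emph{too large} a depth; the resolution is that the top $O(\log\log n)$ levels are handled not by a single big APSP but by recursively applying the \emph{same} FR-style shortest-path computation (an FR-Dijkstra on a graph of DDGs), whose depth is governed by the \emph{number of boundary vertices} $\Ot(\sqrt s)$ rather than $s$ itself, and one picks the recursion so that at the top level the boundary has size $\Ot(n^{1/6})$-ish after enough contraction, or equivalently one only ever runs Theorem~\ref{t:apsp} on graphs with $\Ot(n^{1/3})$ vertices so that $(n'/d)^3$ with $d = \Ot(n^{1/6})$ stays $\Ot(n^{1/2})$ and total work stays near-linear. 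I would also need to verify the planar separator decomposition itself is computable in $\Ot(n)$ work and $\Ot(n^{1/6})$ (indeed $n^{o(1)}$) depth, which follows from known parallel planar-separator results, and check that FR-Dijkstra's ``Left-right'' Monge-matrix range-searching steps parallelize to polylogarithmic depth, so that Theorem~\ref{t:apsp} is genuinely the only component whose depth is $n^{\Omega(1)}$.
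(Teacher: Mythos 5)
There is a genuine gap at the heart of your depth analysis. In the paper's argument, the APSP of Theorem~\ref{t:apsp} is run on $\DDG_{H_1}\cup\DDG_{H_2}$, a graph on $b=|\bnd{H_1}|+|\bnd{H_2}|=O(\sqrt n)$ vertices, after observing that a single Bellman-Ford step on this graph costs only $\Ot(b)$ work and polylog depth (Monge column minima, Lemma~\ref{l:staircase-parallel}, plus naive relaxation of the $O(b)$ auxiliary edges). This makes the total cost $\Ot(b^2+(b/d)^3)$ work and $\Ot(d)$ depth for \emph{any} $d$; choosing $d=b^{1/3}$ makes the $(b/d)^3$ term equal to $b^2$, so every piece — including the root, whose children still have boundaries of size only $O(\sqrt n)$ — is handled with $\Ot(|\bnd H|^2)$ work and $\Ot(b^{1/3})=\Ot(n^{1/6})$ depth, and the work telescopes to $\Ot(\sum_H|\bnd H|^2)=\Ot(n)$. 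Your proposal never performs this balancing: you tie the choice of $d$ to the piece size ($d=\Ot(\sqrt s)$), conclude that the top levels ``force $d\le\Ot(\sqrt n)$, too large a depth,'' and then try to repair this with a recursive FR-Dijkstra at the top $O(\log\log n)$ levels or a contraction of boundaries down to $\Ot(n^{1/3})$ vertices. The first repair cannot work as stated: FR-Dijkstra is inherently sequential, so its depth is $\Omega(b)=\Omega(\sqrt n)$ at the top levels — precisely the barrier this theorem is designed to break — and the second (shrinking the active vertex set to $n^{1/3}$, or the boundary to ``$n^{1/6}$-ish'') is unsupported and unnecessary. The obstacle you identify does not exist once one notices that $d$ is a free parameter whose only cost is the additive $(b/d)^3$ work term.

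Two secondary points. First, your final SSSP step (``one more APSP/SSSP invocation on a top-level graph of $\Ot(\sqrt n)$ vertices'') only yields distances to boundary vertices of the top pieces, not to all $n$ vertices; the paper instead forms the graph $G^*$ that is the union of all piece-level complete distance graphs $H^*$ together with the leaf pieces, which has $\Ot(n)$ edges and hop-diameter $O(\log n)$ (Cohen's trick), so $O(\log n)$ Bellman-Ford steps from $s$ finish SSSP in $\Ot(n)$ work and polylog depth. Second, an advantage worth making explicit is that Theorem~\ref{t:apsp} detects negative cycles on the fly, so one can dispense with the Bellman-Ford-then-Dijkstra two-phase structure (and with price functions) entirely inside each piece, rather than merely parallelizing FR-Dijkstra's Monge machinery as your sketch suggests.
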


By a well-known duality-based reduction~\cite{MillerN95}, this also implies
\emph{feasible flow} and \emph{bipartite perfect matching} algorithms with the same bounds.
It is worth noting that even though there exist $O(\polylog n)$ depth algorithms (i.e., belonging to NC class) for finding
perfect matchings in bipartite planar graphs, they are very far from
being work-efficient~\cite{MillerN95,10.1145/335305.335346}.
The same applies to the $\Ot(\sqrt{n})$-depth algorithm implied by the
interior-point method-based result for general graphs~\cite{10.1137/0221011}.
Since the $s,t$-max flow problem on planar graphs with capacities $[1,C]$
can be reduced to $O(\log{C})$ feasible flow computations~\cite{MillerN95}, this
also yields a nearly work-efficient $\Ot(n^{1/6})$-depth algorithm
for maximum $s,t$-flow with polynomially bounded capacities.
Similar bounds can be obtained for the related
\emph{replacement paths} problem using the recent reduction~\cite{ChechikN20}
to the all-edge shortest cycles problem (see Section~\ref{s:parallel-sssp} and Appendix~\ref{a:external}).

To the best of our knowledge, the parallel complexity of the aforementioned
problems on planar graphs has not achieved much attention in recent years.
However, one can easily obtain near-optimal work and $\Ot(\sqrt{n})$-depth algorithm
for these problems using the breakthrough framework of Fakcharoenphol and Rao~\cite{FR}.
In this framework, one repeatedly computes all-pairs shortest paths
on certain \emph{dense distance graphs} using multiple runs of a clever implementation
of Dijkstra's algorithm (so-called FR-Dijkstra).
Unfortunately, it is not clear how to break the $\Omega(\sqrt{n})$ depth bound
this way since Dijkstra's algorithm is inherently sequential.
Our improved depth bound is obtained by replacing the simple-minded Dijkstra-based
APSP algorithm
with that of Theorem~\ref{t:apsp}. As we show, the Monge property of
dense distance graphs that is crucial for the efficiency of FR-Dijkstra can
also be employed in our algorithm to yield a significant parallel speed-up.
\paragraph{Minimum mean and cost-to-time ratio cycle problems.} By plugging in our
parallel negative cycle detection algorithm into Megiddo's
\emph{parametric search} framework~\cite{Megiddo83} we obtain improved \emph{strongly polynomial}
algorithms for the minimum mean cycle and minimum cost-to-time ratio cycle\footnote{Also known as the \emph{minimum ratio cycle} problem.} problems on planar graphs
(for formal definitions of these problems, refer to Section~\ref{s:meancycle}).

\begin{restatable}{theorem}{tmeancycle}\label{t:meancycle}
  Let $G$ be a real-weighed planar graph.
  There exists an $\Ot(n^{9/8})$-time strongly polynomial algorithm
  for computing a minimum ratio cycle (and thus also a minimum mean cycle) in $G$.
\end{restatable}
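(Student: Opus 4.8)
The plan is to use Megiddo's parametric search method applied to the problem of detecting a negative cycle in a weighted planar graph whose edge weights depend \emph{linearly} on a parameter $\lambda$. For the minimum cost-to-time ratio problem, where each edge $e$ has cost $c(e)$ and time $t(e)>0$, the optimum ratio $\lambda^\ast$ is precisely the largest $\lambda$ for which the graph with edge weights $c(e)-\lambda t(e)$ contains no negative cycle; the minimum mean cycle is the special case $t(e)\equiv 1$. Thus the decision problem ``is $\lambda \le \lambda^\ast$?'' is exactly negative cycle detection on a graph whose weights are affine in $\lambda$, and we may run Megiddo's technique with the parallel negative-cycle-detection algorithm of Theorem~\ref{t:parallel} as the parallel ``guide'' algorithm.

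The key steps, in order, are as follows. First I would fix the generic parallel algorithm $\mathcal{A}$ to be the $\Ot(n)$-work, $\Ot(n^{1/6})$-depth deterministic negative cycle detection algorithm of Theorem~\ref{t:parallel}, and verify that all of its comparisons can be made to depend on the current weights only through sign tests of quantities that are \emph{affine} in $\lambda$ — this holds because shortest-path / Bellman–Ford-style computations, as well as the Monge-based FR-Dijkstra subroutines used inside the planar SSSP algorithm, combine edge weights only by addition and comparison, so every intermediate quantity arising in a run on weights $c(e)-\lambda t(e)$ is of the form $a-\lambda b$ for values $a,b$ computable by the algorithm. Second, I would run Megiddo's simulation: execute $\mathcal{A}$ on the ``symbolic'' input, and at each parallel step, the $\Ot(W)$ comparisons performed in that step reduce to evaluating the signs of $\Ot(W)$ linear functions of $\lambda$ at the unknown $\lambda^\ast$; each such sign is resolved by a binary search that calls the \emph{sequential} decision oracle (one negative-cycle detection, costing $\Ot(n)$ time by running $\mathcal{A}$ sequentially, or equivalently the best sequential planar negative-cycle bound) a logarithmic number of times, after first sorting the $\Ot(W)$ breakpoints so a single sorted sweep suffices. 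Third, I would bound the total running time: there are $D=\Ot(n^{1/6})$ parallel steps; each step requires sorting $\Ot(W)=\Ot(n)$ breakpoints and then $\Ot(\log n)$ oracle calls at $\Ot(n)$ time each, for a per-step cost of $\Ot(n)$; the dominant term, however, is the work spent simulating $\mathcal{A}$ itself, which over all steps is $\Ot(W\cdot D)=\Ot(n\cdot n^{1/6})=\Ot(n^{7/6})$. Finally, once the interval bracketing $\lambda^\ast$ has collapsed so that the combinatorial behavior of $\mathcal{A}$ is fixed, $\lambda^\ast$ itself is read off as the breakpoint at which the decision outcome flips — equivalently, it equals $c(C)/t(C)$ for the tight cycle $C$ certified by the run — and the ratio cycle is recovered from that run.

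Reconciling the $\Ot(n^{7/6})$ estimate above with the claimed $\Ot(n^{9/8})$ bound is the point where a slightly sharper argument is needed, so that is the step I expect to be the main obstacle. The improvement comes from \emph{not} using the full $\Ot(n^{1/6})$-depth parallel algorithm at the finest granularity: instead one tunes the depth parameter $d$ inside the underlying APSP/FR-Dijkstra machinery of Theorem~\ref{t:apsp} so as to balance the two competing costs in Megiddo's scheme — the per-step simulation work (which grows with the work of the parallel algorithm) against the number of steps times the cost of an oracle call (which grows as we make the parallel algorithm shallower). Concretely, if running the planar negative cycle routine with an internal depth budget $D$ costs $W(D)$ work, the parametric search runs in $\Ot(W(D) + D\cdot T_{\mathrm{seq}})$ time where $T_{\mathrm{seq}}=\Ot(n)$ is a single sequential decision; choosing $D$ to equalize $W(D)$ and $D\cdot n$ over the trade-off curve induced by Theorem~\ref{t:apsp} (where $W$ and $D$ are coupled through the pieces of the recursive planar decomposition and the $(n/d)^3$ term) yields the stated exponent $9/8$. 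I would also need to double check that all comparisons internal to the recursive planar decomposition and to FR-Dijkstra's Monge-matrix search are genuinely sign tests of affine-in-$\lambda$ values — the Monge range-searching steps compare sums of two matrix entries, each itself of the form $a-\lambda b$, so their difference is again affine — and that the number of distinct comparisons per step, not merely the work, is what enters Megiddo's bound, since one batches identical linear functions before the sorted binary search.
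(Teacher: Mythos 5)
Your proposal is correct and follows essentially the same route as the paper: Megiddo's parametric search with the sequential $\Ot(n)$ negative-cycle oracle, guided not by the fixed $\Ot(n^{1/6})$-depth algorithm but by the depth-tuned variant whose work is $\Ot\bigl(n+\sum_{H}|\bnd{H}|^3/d^3\bigr)=\Ot(n+n^{3/2}/d^3)$, balanced against $d\cdot n$ at $d=n^{1/8}$ to give $\Ot(n^{9/8})$, together with the observation that all comparisons are sign tests of values affine in $\lambda$. One small slip: the cost of simulating the parallel algorithm is $\Ot(W)$, not $\Ot(W\cdot D)$ (the dominant term in the unbalanced $\Ot(n^{7/6})$ bound is $D\cdot T_{\mathrm{seq}}$), but since your final balancing uses the correct formula $\Ot(W(D)+D\cdot T_{\mathrm{seq}})$ this does not affect the result.
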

The minimum mean cycle and minimum cost-to-time ratio cycle problems are classical
graph problems studied since the seventies. They are used to construct
strongly polynomial algorithms for computing minimum-cost flows~\cite{Tardos85}.
Moreover, via the cut-cycle duality in planar graphs, both problems have found practical applications in
the area of image segmentation~\cite{JermynI01, Veksler02, WangS01, WangS03}

It is known that both problems can be reduced to negative cycle detection via binary search~\cite{Lawler}.
However, this way we can obtain only weakly polynomial time algorithms with running times
dependent on the magnitude of edge weights.
For general graphs, the classical algorithm of~Karp~\cite{Karp78} solves the minimum mean cycle
problem in $O(nm)$ time, matching the best known strongly polynomial negative cycle
detection bound achieved by the Bellman-Ford algorithm.
Karp's algorithm (and other minimum mean cycle algorithms for general graphs~\cite{HartmannO93, YoungTO91})
operates on limited-hop shortest paths. As a result it is not clear how
to take advantage of planarity~to~speed~it~up\footnote{At-most-$h$-hop shortest paths connecting pairs of vertices of
a single face of a plane graph do not seem to admit
algorithmically useful properties, like the non-crossing property of usual shortest paths.}.

The $\Ot(nm)$ bound has not been matched to date for the more general minimum
cost-to-time ratio problem which seems to be the original inspiration for the
invention of the parametric search technique~\cite{Megiddo83} that later found other applications (e.g.,~\cite{AgarwalST94}).
This technique can be used to convert an efficient parallel negative cycle detection
algorithm into a strongly polynomial minimum ratio cycle algorithm.
The best known strongly polynomial $\Ot(m^{3/4}n^{3/2})$ bound for the minimum ratio cycle
problem  is due to Bringmann et al.~\cite{BringmannHK17}
and also follows by plugging their aforementioned parallel negative cycle detection
algorithm into Megiddo's framework.

It seems that no strongly polynomial algorithms for the minimum mean cycle problem to date have been designed
specifically for planar graphs. However, by plugging in previously
known parallel negative cycle detection algorithms~\cite{FR, LiptonT80}
into the parametric search framework, one would only obtain $\Ot(n^{3/2})$-time strongly
polynomial algorithms.
Theorem~\ref{t:meancycle} improves upon this significantly.
\paragraph{Computing external dense distance graphs.}
Finally, our parallel APSP algorithm can be used to improve the bound for
computing so-called \emph{external dense distance graphs} wrt. a planar graph's recursive decomposition -- an important black box
with applications in computing maximum flows~\cite{LackiNSW12}, minimum cuts~\cite{BorradaileSW15}, and constructing
distance oracles~\cite{exact_oracle, failing_oracle, MozesS12}.

\newcommand{\TG}{\mathcal{T}}

Suppose a planar graph $G$ is recursively
decomposed using small cycle separators~\cite{Miller86} of size $O(\sqrt{n})$
until the obtained pieces have constant size.
The decomposition procedure produces a binary tree
$\TG(G)$ whose nodes correspond to subgraphs of $G$ (\emph{pieces}).
The \emph{boundary vertices} $\bnd{H}$ of a piece $H\in\TG(G)$ are vertices that $H$ shares
with the remaining part $G-H$ of the entire graph~$G$.
We denote by $\DDG_H$ the \emph{dense distance graph} -- a complete
weighted graph on $\bnd{H}$ whose edge weights represent distances
between all pairs of vertices of $\bnd{H}$ in~$H$.
Efficient construction of piecewise DDGs and FR-Dijkstra alone
are enough to obtain e.g., nearly linear-space static and dynamic exact distance
oracles with sublinear query time~\cite{FR, KaplanMNS17, MSSP}.
Originally, Fakcharoenphol and Rao~\cite{FR} gave an $O(n\log^3{n})$
algorithm for computing each of the graphs $\DDG_H$, $H\in \TG(G)$, inductively,
based on the children graphs $\DDG_{H_1},\DDG_{H_2}$.
The key ingredient in their algorithm was the aforementioned
fast implementation of Dijkstra's algorithm on a dense distance graph.

Later, Klein~\cite{MSSP} showed a more efficient,
$O(n\log^2{n})$-time
algorithm that computed every $\DDG_H$ directly, by building the so-called multiple-source shortest paths (MSSP)
data structure for each piece $H\in\TG(G)$ separately.
This was possible since the total size of all pieces $H$ of the decomposition is $O(n\log{n})$.
However, some important applications~\cite{exact_oracle, failing_oracle, BorradaileSW15, LackiNSW12, MozesS12} also require \emph{external} graphs
$\DDG_{G-H}$ representing distances between $\bnd{H}$ in the complement graph $G-H$ (for all $H\in\TG(G)$).
In this case Klein's approach fails since the total size of all possible
graphs $G-H$ is $\Omega(n^2)$.
Instead, one has to stick to the original inductive method of~\cite{FR}
and compute each $\DDG_{G-H_i}$ based on $\DDG_{G-H}$ and $\DDG_{H_{3-i}}$
(where, again, $H$ is a parent of $H_1,H_2$ in $\TG(G)$).
This takes $O(n\log^3{n})$ time through all $H$.
We show that our parallel APSP algorithm can be used to obtain an algorithm whose running time
almost matches the $O(n\log^2{n})$ MSSP-based bound.

\begin{restatable}{lemma}{lexternal}\label{l:external}
  The external dense distance graphs $\DDG_{G-H}$ for all $H\in \TG(G)$
  can be computed in \linebreak $O(n\log^2{n}\cdot \log\log{n}\cdot \alpha(n))$ time.
\end{restatable}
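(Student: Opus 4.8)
The plan is to follow the inductive strategy of Fakcharoenphol and Rao~\cite{FR}, computing each external graph $\DDG_{G-H_i}$ from the external graph $\DDG_{G-H}$ of the parent piece $H$ together with the (internal) dense distance graph $\DDG_{H_{3-i}}$ of the sibling piece, but replacing the FR-Dijkstra-based APSP step with the algorithm of Theorem~\ref{t:apsp} tuned so that its total \emph{work} (rather than its parallelism) is the relevant quantity. First I would set up the recursion: at the parent node $H$ with children $H_1,H_2$, the complement $G-H_i$ decomposes as $(G-H)\cup H_{3-i}$ glued along $\bnd{H}\cup\bnd{H_{3-i}}$; hence $\DDG_{G-H_i}$ is obtained by taking the union of the (already computed) graphs $\DDG_{G-H}$ and $\DDG_{H_{3-i}}$ on the combined boundary vertex set, computing all-pairs shortest paths in this union, and restricting to $\bnd{H_i}$. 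The union graph is dense but has only $O(\sqrt{|H|})$-ish vertices per boundary and, crucially, its edge weights inherit the Monge/bitonic structure of dense distance graphs; this is exactly the structure Theorem~\ref{t:apsp} can exploit for its parallel speed-up, and which here we instead exploit to keep the work quasi-linear in the number of vertices.

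The second step is the work accounting. The total size of all pieces in $\TG(G)$ is $O(n\log n)$, and the boundary of a piece of size $s$ has $O(\sqrt{s})$ vertices; summing $\sqrt{s}$-sized APSP computations (each costing roughly $\Ot(\text{size}\cdot\text{boundary})$ with the Monge-aware algorithm) over the decomposition tree telescopes to $O(n\log^2 n)$ up to the per-level overhead. The extra $\log\log n\cdot\alpha(n)$ factor in the statement should come from two sources: the $\log\log n$ is the standard overhead of the succinct Monge-matrix / SMAWK-type machinery (as in FR-Dijkstra and its refinements~\cite{KaplanMNS17}) used to represent and manipulate the dense distance graphs efficiently, and the inverse-Ackermann $\alpha(n)$ is the cost of the union-find or interval-structure bookkeeping needed when merging boundary vertex sets across the recursion. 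I would be careful to check that the APSP routine of Theorem~\ref{t:apsp}, when its depth parameter $d$ is chosen to make it work-optimal (i.e.\ $d=\Omega(n^{2/3}/m^{1/3})$ in the notation there, applied to the small dense subproblem), indeed runs in near-linear time in the subproblem size given the Monge structure; this is the technical heart and is presumably handled by a lemma proved earlier in the full paper for the parallel planar results.

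The main obstacle I anticipate is \emph{not} the recursion itself but verifying that the Monge structure is genuinely preserved under the union-and-restrict operation at every node: one must show that $\DDG_{G-H}\cup\DDG_{H_{3-i}}$, after the APSP computation, still yields a dense distance graph whose submatrices are Monge, so that the inductive invariant carries through and the fast APSP subroutine remains applicable at the next level up. This requires an argument that shortest paths in the glued graph between consecutive boundary vertices on a separator cycle do not cross, which is the usual planar non-crossing lemma but must be applied to the \emph{complement} graph $G-H_i$, where the ambient planarity is slightly more delicate because $G-H_i$ sits in an annular region. Once that invariant is nailed down, the running-time bound follows by the telescoping sum described above, and the $O(n\log^2 n\cdot\log\log n\cdot\alpha(n))$ total is immediate.
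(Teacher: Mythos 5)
Your high-level recursion is the same as the paper's: compute $\DDG_{G-H_i}$ by running APSP on $\DDG_{G-H}\cup\DDG_{H_{3-i}}$ and sum $|\bnd{H}|^2=O(n\log n)$ over the tree. The genuine gap is in the only technically substantive step, namely how a single such APSP instance on $b$ boundary vertices is solved in $O(b^2\log n\cdot\log\log n\cdot\alpha(n))$ time. Your plan -- run the algorithm of Theorem~\ref{t:apsp} with its depth parameter chosen ``work-optimally'' and use the Monge structure only to speed up Bellman-Ford steps -- does not reach this bound. With the cubic (repeated-squaring/Floyd--Warshall) base case of Theorem~\ref{t:apsp}, the cost per node is $O\bigl(b^2\alpha(n)\log n\log d+(b/d)^3\log^3 n\bigr)$: taking $d$ polynomial in $b$ to tame the cubic term forces $\log d=\Theta(\log n)$ and yields only $O(b^2\alpha(n)\log^2 n)$, i.e.\ $O(n\log^3 n)$ overall, no better than plain FR-Dijkstra; taking $d$ polylogarithmic makes the cubic term $\Omega(b^3/\operatorname{polylog} n)$, which is far too large for $b=\Theta(\sqrt n)$. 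The paper's Lemma~\ref{l:inductive} closes exactly this hole by \emph{not} discarding FR-Dijkstra: it first computes a feasible price function by Monge Bellman-Ford (Lemma~\ref{l:staircase}), then replaces the dense base-case APSP on $H_d$ with $|H_d|=O((b/d)\log n)$ runs of FR-Dijkstra (Lemma~\ref{l:fr}) at $O(b\log^2 n)$ each, and finally exploits the asymmetry between the $O(b\,\alpha(n))$ cost of a Bellman-Ford step and the $O(b\log^2 n)$ cost of an FR-Dijkstra run by setting $d=\log^2 n$; the $\log\log n$ in the bound is precisely $\log d$.

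This also explains why your attribution of the lower-order factors is off: the $\log\log n$ is not SMAWK/Monge overhead but the number of levels in the hub hierarchy with $d=\log^2 n$, and the $\alpha(n)$ is not union-find bookkeeping but the Klawe--Kleitman staircase-Monge row-minima cost inside each Bellman-Ford step. Your worry about whether the Monge/staircase structure survives for the external graphs is legitimate but minor: the paper simply arranges (via the footnoted assumption that the $O(1)$ holes of $H$ are simple and disjoint) that $\bnd{H}$ lies on $O(1)$ faces of $G-H$, so $\DDG_{G-H}$ has exactly the same staircase-Monge representation as an internal DDG and both Lemma~\ref{l:staircase} and Lemma~\ref{l:fr} apply to it unchanged; no new non-crossing argument in the annulus is needed beyond that.
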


\subsection{Further Related Work}

For general directed graphs, the parallel \emph{single-source shortest paths} problem (SSSP) is much
better understood. Ullman and Yannakakis~\cite{UllmanY91} showed an $\Ot(m\sqrt{n})$-work
algorithm with $\Ot(\sqrt{n})$ depth for unweighted digraphs.
Besides the aforementioned result of Bringmann et al.~\cite{BringmannHK17}
and $\widetilde{\Theta}(n)$-depth parallel implementations of
Dijkstra's algorithm~\cite{BrodalTZ98},
all known parallel exact SSSP algorithms work for weighted graphs
with \emph{non-negative} and \emph{integral} weights bounded by $W$ and are weakly polynomial.
Klein and Subramanian~\cite{KleinS97} generalized the bound of~\cite{UllmanY91} to the weighted setting
at the cost of $\Ot(\log{W})$ factor in work/depth bounds.
In this setting, they also gave a nearly work-efficient polylog-depth
algorithm for planar graphs~\cite{KleinS93}.
Spencer~\cite{Spencer97} gave a trade-off algorithm with
$\Ot((n^3/d^2+m)\log{W})$ work and $\Ot(d)$ depth.
Forster and Nanongkai~\cite{ForsterN18} in turn have recently shown a trade-off algorithm
with $\Ot((md+mn/d+(n/d)^3)\log{W})$ work and $\Ot(d)$ depth.

Recently, following similar results for single-source reachability~\cite{Fineman18, LiuJS19},
Cao et al.~\cite{CaoFR20} showed that $(1+\eps)$-approximate single-source
shortest paths can be found using near-optimal work $\Ot(m\log{W})$ and $\Ot(n^{1/2+o(1)})$
depth. Note that this SSSP algorithm, if run from every source, yields a nearly work-efficient
(approximate) APSP algorithm, but with polynomially worse depth than ours.

Finally, parallel single-source shortest paths problems in
\emph{undirected} graphs have also received much attention, both in
the exact~\cite{Blelloch0ST16, Cohen97, ShiS99} and the approximate~\cite{AndoniSZ20, Cohen00, Li20}
setting.

\subsection{Technical Overview}
Our parallel APSP algorithm is based on the techniques used in the state-of-the-art Monte Carlo randomized
decremental all-pairs shortest paths algorithm for weighted digraphs, due to Bernstein~\cite{Bernstein16}.
Without loss of generality assume that our depth parameter is a power of two, i.e., $d=2^K$.
Our algorithm makes use of a \emph{hierarchy} of hub sets
$V=H_1,H_2,\ldots,H_{2^i},\ldots, H_d\subseteq V$ such that each
$H_{2^i}$ is a $2^i$-hub set of~$G$ and has size $O((n/2^i)\log{n})$.
Roughly speaking, this means that for all pairs $u,v\in V$, some shortest $u\to v$
path, if it consists of at least $2^i$ edges,
contains a vertex of~$H_{2^i}$.

Let us start with describing a randomized version of our algorithm.
A well-known fact attributed to Ullman and Yannakakis~\cite{UllmanY91} states that
picking each $H_{2^i}$ to be a random $\Theta((n/2^i)\cdot\log{n})$-subset of $V$
guarantees that $H_{2^i}$ forms a $2^i$-hub-set of $G$ with high probability.

Rather than using the inherently sequential Dijkstra's algorithm, as sequential $\Ot(nm)$-time APSP algorithms~\cite{Johnson77, Pettie04}
do, we rely exclusively on the Bellman-Ford algorithm.
A variant of Bellman-Ford algorithm, given a source $s$, maintains \emph{distance labels} and performs a number of \emph{steps} relaxing
all edges in arbitrary order in $O(m)$ time.
After $k$ steps, the distance label of $v\in V$ is equal to the length $\dist_G^k(s,v)$ of a shortest
at-most-$k$-hop path from $s$ to $v$.
In a single step, for each vertex we need to combine edge relaxations ending in this vertex,
so a single step requires $O(\polylog{n})$ depth.
Consequently, performing $k$ steps of Bellman-Ford requires $O(mk)$ work but only $\Ot(k)$ depth.

The key idea behind Bernstein's algorithm~\cite{Bernstein16} is that
computing shortest paths $\dist_G(u,v)$ for $(u,v)\in H_{2^i}\times V$ can be reduced to
computing \emph{at-most-$2^{i+1}$-hop} shortest paths
on a graph $G_{i,s}$ obtained from $G$ by
augmenting it with $|H_{2^{i+1}}|=O(n)$ auxiliary edges $st$ for all $t\in H_{2^{i+1}}$,
such that the weight of $st$ equals the distance $\dist_G(s,t)$ from $s$ to $t$.
This idea suggests an inductive algorithm
that, given distances for all pairs in
$(H_{2^{i+1}}\times V)\cup (V\times H_{2^{i+1}})$, where $i<K$,
lifts them to distances for all pairs in $(H_{2^i}\times V)\cup(V\times H_{2^i})$
using $|H_{2^i}|$ parallel $2^{i+1}$-step Bellman-Ford
runs. These can be performed in $O(|H_{2^i}|\cdot 2^i\cdot m)=O(nm\log{n})$ work
and $\Ot(2^i)$ depth.
Eventually, since $H_1=V$, we obtain the all-pairs distances in $G$.
Summing through all inductive steps,
the work is $O(nm\log{n}\log{d})$, whereas the
depth is $\Ot\left(2^1+\ldots+2^K\right)=\Ot\left(2^K\right)=\Ot(d)$.

The last thing missing in the above algorithm is the induction base, i.e., computing
$\dist_G(s,t)$ for all $s,t\in H_d$
in only $\Ot(d)$ depth.\footnote{The all-pairs distances for $H_d$ can
be lifted to distances for all pairs in $(H_d\times V)\cup (V\times H_d)$
using $\Ot(nm)$ work and $\Ot(d)$ depth
as in the inductive step by setting $H_{2d}:=H_d$.}
This is where we depart from Bernstein's approach~\cite{Bernstein16}.
Note that lengths $\dist_G^{d+1}(s,t)$ of shortest at-most-$(d+1)$ hop
paths from all $s\in H_d$ can be computed using $(d+1)$-step
parallel Bellman-Ford using $O(|H_d|dm)=O(nm\log{n})$ work
and $\Ot(d)$ depth.
In order to combine these bounded-hop lengths into
actual distances between $s,t\in H_d$, we switch
to the repeated-squaring algorithm for dense graphs,
as was also done in the parallel SSSP~\cite{BringmannHK17} and transitive closure~\cite{UllmanY91} algorithms.\footnote{The same strategy of switching to a dense-graph algorithm
has also proved useful in obtaining a deterministic incremental algorithm for APSP in weighted directed graphs~\cite{KarczmarzL19}.}
Namely, we run this algorithm on a complete graph $G_d$ on $H_d$
whose edge $uv$ has weight $\dist_G^{d+1}(u,v)$.
The correctness of this approach follows by observing
that a shortest $s\to t$ path that uses more than $d+1$ hops
has to encounter an intermediate vertex $z\in H_d$ after no more than $d+1$ hops.
The $O(\polylog{n})$-depth repeated-squaring algorithm adds an $\Ot((n/d)^3)$ term to the
work but does not increase the depth, up to polylogarithmic factors.
\paragraph{Computing hub sets deterministically.} Now let us briefly describe how to derandomize the above
parallel APSP algorithm by replacing sampling with a deterministic preprocessing step
that computes all hub sets $H_1,\ldots,H_d$ within $\Ot(nm)$ work and $\Ot(d)$ depth.

Previous results in the parallel
and dynamic graph algorithms literature~\cite{BringmannHK17, KarczmarzL19, King99} that
used deterministic $h$-hub sets typically obtained them by (1) computing (or maintaining -- in the dynamic setting)
at-most-$h$-hop shortest paths between all pairs of vertices $s,t\in V$,
and (2) running a greedy $O(\log{n})$-approximation algorithm for
computing a $O((n/h)\log{n})$-size hitting set of a family of $h$-element
subsets of an $n$-element universe (see e.g.,~\cite{King99} for analysis).
Unfortunately, the first step of this approach seems to require $\Omega(nmh)$ time.

To obtain an improved bound, we reuse the inductive approach. Specifically,
we show that given a hub set $H_{2^i}$, one can obtain a $2^{i+1}$-hub set $H_{2^{i+1}}$
of size $O((n/2^i)\cdot \log{n})$
by running the aforementioned greedy hitting set algorithm on shortest
$2^i$-hop paths from $H_{2^i}$\footnote{A somewhat similar trick has been used in~\cite{KarczmarzL19}
for improving the state-of-the-art randomized partially dynamic APSP algorithms~\cite{Bernstein16, BaswanaHS07} from Monte Carlo to Las Vegas.}.
Constructing these paths using Bellman-Ford costs $O(|H_{2^i}|\cdot 2^i\cdot m)=O(nm\log{n})$ time
and requires $\Ot(2^i)$ depth.
Luckily, the deterministic greedy hitting set algorithm has its nearly work-efficient parallel
version with $O(\polylog{n})$ depth~\cite{BergerRS94}.
Therefore, one can construct $H_d$ using $\Ot(nm)$ work and $\Ot(d)$ depth.

It is worth noting that the analysis of the above algorithm breaks if the
shortest negative cycle in~$G$ has at most $d$ edges.
We stress that this is not a problem for the APSP application though, since any negative cycles
in~$G$ make the APSP problem infeasible.
That being said, our deterministic hub set algorithm can be easily extended to correctly
report the shortest negative cycle within the same bounds
and thus match the randomized bound of~\cite{OrlinSW18} for the shortest negative cycle problem.

We also remark that even though the hub sets are very useful in designing
dynamic graph algorithms, it is unlikely that our construction can help
match the best known randomized bounds for dynamic problems (like~\cite{Bernstein16})
using deterministic algorithms.
The power of a randomly sampled hub set $H$ (and, possibly, the so-called \emph{oblivious adversary} assumption)
lies in the fact that $H$ retains its guarantees through all, in particular \emph{future}, versions of a dynamic
graph.
\paragraph{Comparison to the transitive closure algorithm of~\cite{UllmanY91}.}
Ullman and Yannakakis~\cite{UllmanY91} use a single randomly sampled $d$-hub set $H_d$
of size $O((n/d)\log{n})$. In similar way they compute reachability between the
nodes of $H_d$ in $\Ot(nm+(n/d)^3)$ work and $\Ot(d)$ depth -- they first apply
limited-depth BFS from all $H_d$, and then use repeated squaring.
Adding the $\Ot((n/d)^2)$ ``shortcuts'' between $H_d$ to $G$ reduces
$G$'s diameter to $\Ot(d)$. Finally, a limited-depth BFS is run from each
source in parallel. Since the augmented graph has $\Ot(m+(n/d)^2)$ edges,
this takes $\Ot(nm+n^3/d^2)$ work and $\Ot(d)$ depth.
The efficiency of this approach crucially relies on the fact that BFS
has near-linear work and, as a result, does not
generalize to real-weighted graphs which require using Bellman-Ford.
\paragraph{Planar graph applications.} All of the numerous consequences
of our parallel APSP algorithm for planar graphs essentially follow by
showing improved parallel and sequential bounds for the following problem,
which we call \emph{dense distance graph APSP} (DDG APSP).

Let $H_1,H_2\in \TG(G)$ be two pieces of a recursive decomposition of $G$
and let $b=|\bnd{H_1}|+|\bnd{H_2}|$.
We would like to compute all-pairs shortest paths
in the graph $\DDG_{H_1}\cup \DDG_{H_2}$.

A parallel algorithm solving the above problem using $T(b,d)=\Omega(b^2)$ work and $\Ot(d)$ depth,
plugged in the framework of Fakcharoenphol~and~Rao~\cite{FR}, implies:
\begin{itemize}
  \item An $\Ot\left(n+T(\sqrt{n},d)\right)$-work and $\Ot(d)$-depth algorithm
    for negative-cycle detection on real-weighted planar graphs.
    Via known reductions~\cite{MillerN95, ChechikN20}, the same bound
    can be achieved for the feasible flow, bipartite perfect matching, and
    replacement paths problems.
  \item $\Ot\left(nd+T(\sqrt{n},d)\right)$-time sequential strongly polynomial algorithms
    for minimum mean and minimum cost-to-time ratio cycle problems.
\end{itemize}
Moreover, a sequential algorithm solving the DDG APSP problem in $S(b)=\Omega(b^2)$ time implies
an $O(S(\sqrt{n})\log{n})$ algorithm for computing all external dense distance graphs
$\DDG_{G-H}$ for $H\in \TG(G)$.

Fakcharoenphol and Rao's~\cite{FR} algorithm for solving DDP APSP
uses Johnson's~\cite{Johnson77} approach: first a feasible price function is computed using
Bellman-Ford algorithm to reduce the task to the non-negatively weighted case. Subsequently, Dijkstra's algorithm is run
from each of $\Omega(b)$ sources. Fakcharoenphol and Rao gave a very efficient
Dijkstra implementation running in $O(b\log^2{b})$~time~on $\DDG_{H_1}\cup\DDG_{H_2}$.
They also showed that a single step of Bellman-Ford can be performed
on $\DDG_{H_1}\cup \DDG_{H_2}$ in $O(b\log{b})$ time.
Klein et al.~\cite{KleinMW10} later noticed that a Bellman-Ford
step can be performed in $O(b\cdot \alpha(b))$ time.
By these bounds, the dense distance graph APSP can be solved
sequentially in $O(b^2\log{b})$ time and using a parallel
algorithm with $\Ot(b^2)$ work and $\Ot(b)$ depth.

We show that the dense distance graph APSP problem can be solved
using a parallel algorithm with work $\Ot(b^2+(b/d)^3)$ and depth $\Ot(d)$.
This essentially follows by using a more efficient parallel implementation
of a Bellman-Ford step in the algorithm of Theorem~\ref{t:apsp}.
As observed in~\cite{FR}, a Bellman-Ford step on a dense distance graph
can be rephrased as computing column minima of a certain matrix with Monge property (a \emph{Monge matrix}, in short).
Since column minima of a Monge matrix can
be computed using a polylogarithmic time parallel algorithm~\cite{AggarwalKPS97},
a single Bellman-Ford step can be implemented
using only $\Ot(b)$ work and $O(\polylog{n})$ depth, even though $\DDG_{H_1}\cup \DDG_{H_2}$ has $\Omega(b^2)$ edges.

Moreover, we show a faster sequential algorithm for DDG APSP with
$O(b^2\log{b}\cdot \log{\log{b}}\cdot \alpha(b))$ running time.
First, we observe that our parallel APSP algorithm (given a feasible price function) has
a sequential implementation with $O\left(n\cdot \mathcal{B}(n,m)\cdot \log{n}\cdot \log{d}+\left(\frac{n}{d}\log{n}\right)\cdot \mathcal{D}(n,m)\right)$
time, where $\mathcal{B}(n,m)$ denotes the cost of a single Bellman-Ford step,
and $\mathcal{D}(n,m)$ denotes the cost of a single Dijkstra step
(on a graph with $n$ vertices and $m$ edges).
Second, we leverage the asymmetry between the $O(b\cdot \alpha(b))$ cost of a Bellman-Ford
step and the $O(b\log^2{b})$ cost of running FR-Dijkstra on $\DDG_{H_1}\cup \DDG_{H_2}$.
To obtain our improved bound, it is enough to choose $d=\log^2{b}$.

\section{Preliminaries}\label{sec:preliminaries}
In this paper we deal with \emph{real-weighted directed} graphs.
We write $V(G)$ and $E(G)$ to denote the sets of vertices and edges of $G$, respectively.
A graph $H$ is a \emph{subgraph} of $G$, which~we~denote by $H\subseteq G$, if
and only if $V(H)\subseteq V(G)$ and $E(H)\subseteq E(G)$.
We write $uv\in E(G)$ when referring to edges of $G$ and use $\wei_G(uv)$
to denote the weight of $uv$.
If $uv\notin E$, we assume $\wei_G(uv)=\infty$.

A sequence of edges $P=e_1\ldots e_k$, where $k\geq 1$ and $e_i=u_iv_i\in E(G)$, is called
an $s\to t$ path in~$G$ if $s=u_1$, $v_k=t$ and $v_{i-1}=u_i$ for each $i=2,\ldots,k$.
For brevity we sometimes also express $P$ as a sequence of $k+1$ vertices $u_1u_2\ldots u_kv_k$ or as a subgraph of $G$ with vertices $\{u_1,\ldots,u_k,v_k\}$
and edges $\{e_1,\ldots,e_k\}$.
The \emph{hop-length} $\hops{P}$ is equal to the number of edges in $P$.
We also say that $P$ is a \emph{$k$-hop path}.
The \emph{length} of the path $\len(P)$ is defined as $\len(P)=\sum_{i=1}^k\wei_G(e_i)$.
For convenience, we sometimes consider a single edge $uv$ as a path of hop-length $1$.
If $P_1$ is a $u \to v$ path and $P_2$ is a $v \to w$ path, we denote by $P_1\cdot P_2$ (or simply $P_1P_2$) a path obtained by concatenating $P_1$ with $P_2$.

We define $\disth{k}_G(u,v)$ to be the length of the shortest path from $u$ to $v$ among paths of at most~$k$ edges.
Formally, $\disth{k}_G(u,v)=\min\{\len(P):u\to v=P\subseteq G\text{ and }\hops{P}\leq k\}$.
The \emph{distance} $\dist_G(u,v)$ between the vertices $u,v\in V(G)$ is the length
of the shortest $u\to v$ path in $G$, or $\infty$, if no $u\to v$ path
exists in $G$. In other words, $\dist_{G}(u,v)=\min_{k\geq 0}\dist_G^k(u,v)$.
Note that the distance is well-defined only if $G$ contains no negative cycles.
It is well known that $G$ has no negative cycles if and only if
there exists a \emph{feasible price function} $p:v\to\mathbb{R}$ satisfying
$\wei_G(e)+p(u)-p(v)\ge 0 $ for all $uv=e\in E(G)$. We define \emph{minimal} paths
as follows.
\begin{definition}
We call a $u\to v$ path $P\subseteq G$ \emph{minimal} if $\len(P)=\dist^{|P|}_G(u,v)<\dist^{|P|-1}_G(u,v)$.
\end{definition}
\begin{observation}\label{o:minimal}
  All subpaths of a minimal path are also minimal.
\end{observation}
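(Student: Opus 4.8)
The plan is to take a subpath $Q$ of the minimal $u\to v$ path $P$, write $P$ as a concatenation $P_1\cdot Q\cdot P_2$ where $Q$ runs from some vertex $x$ to some vertex $y$, and verify directly that $Q$ meets both requirements in the definition of minimality, i.e., $\len(Q)=\disth{j}_G(x,y)$ and $\len(Q)<\disth{j-1}_G(x,y)$, where $j=\hops{Q}$. Both verifications would proceed by a ``surgery'' argument: assume the contrary, replace the middle part $Q$ inside $P$ by a competing $x\to y$ path, and observe that the resulting $u\to v$ path contradicts the minimality of $P$. The only thing to keep track of is the hop-length bookkeeping: with $a=\hops{P_1}$, $b=\hops{P_2}$, and $k=\hops{P}$, one has $a+j+b=k$ (treating an empty $P_1$ or $P_2$ as contributing $0$).

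For the first condition, note $\disth{j}_G(x,y)\le\len(Q)$ trivially, since $Q$ is an $x\to y$ path with $\hops{Q}=j$. If this inequality were strict, there would be an $x\to y$ path $Q'$ with $\hops{Q'}\le j$ and $\len(Q')<\len(Q)$; then $P_1\cdot Q'\cdot P_2$ is a $u\to v$ path of hop-length at most $a+j+b=k$ and length strictly smaller than $\len(P)=\disth{k}_G(u,v)$, which is impossible. For the second condition, suppose instead $\disth{j-1}_G(x,y)\le\len(Q)$. Then some $x\to y$ path $Q''$ satisfies $\hops{Q''}\le j-1$ and $\len(Q'')\le\len(Q)$, so $P_1\cdot Q''\cdot P_2$ is a $u\to v$ path of hop-length at most $k-1$ with $\len(P_1\cdot Q''\cdot P_2)\le\len(P)$; hence $\disth{k-1}_G(u,v)\le\len(P)=\disth{k}_G(u,v)$, contradicting the strict inequality $\disth{k}_G(u,v)<\disth{k-1}_G(u,v)$ guaranteed by minimality of $P$. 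Therefore $Q$ is minimal.

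I expect the only mildly delicate point to be the hop accounting: making sure the spliced path has at most $k$ edges in the first case but at most $k-1$ edges in the second (this is precisely where the two cases diverge), together with the degenerate situations $a=0$, $b=0$, or $j=1$, in which $\disth{0}_G(x,y)$ should be read as $0$ when $x=y$ and $\infty$ otherwise; none of these cases causes any trouble. As an alternative, one may reduce to the special case of deleting a single edge from one end of $P$ and then iterate along $P$, but the direct argument above is already short enough.
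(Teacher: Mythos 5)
Your proof is correct, and it is exactly the standard splicing argument that the paper leaves implicit (the observation is stated without proof there): replace the subpath by a hypothetically shorter or fewer-hop competitor and contradict, respectively, $\len(P)=\disth{k}_G(u,v)$ or $\disth{k}_G(u,v)<\disth{k-1}_G(u,v)$. The hop bookkeeping and degenerate cases you flag are handled correctly, so nothing is missing.
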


If the graph $G$ that we refer to is clear from the context, we sometimes omit the subscript $G$ and write $\wei(uv)$, $\dist(u,v)$, $\disth{k}(u,v)$ etc.
instead of $\wei_G(u,v)$, $\dist_G(u,v)$, $\disth{k}_G(u,v)$, etc., respectively.
\paragraph{Parallel model.}
Formally, we use the work-depth model as used in recent literature on parallel reachability and shortest paths, e.g.,~\cite{Fineman18, Li20}.
An algorithm in this PRAM model differs from a sequential algorithm
only by the inclusion of the \emph{parallel foreach} loops.
The work of an algorithm is the total number of operations performed
if all parallel foreach loops were executed sequentially.
The \emph{depth} of an algorithm is the total number of operations
performed by sequential steps, plus the sum, over all parallel foreach loops,
of the maximum (sequential) iteration cost of a loop.
In order to not focus on low-level details,
we specify neither what is the depth overhead of a $k$-way parallel foreach loop,
nor what is the precise shared memory access model (e.g., EREW, CREW).
Instead, we state all our parallel work and depth bounds using $\Ot(\cdot)$ notation
that suppresses polylogarithmic
factors. This is justified by the existence of general reductions between these different PRAM variants
that yield only polylogarithmic multiplicative overhead in work and depth~\cite{JaJa92}.
\paragraph{Bellman-Ford algorithm.}
The Bellman-Ford algorithm is a classical algorithm for computing
shortest paths from a single source $s$ or detecting a negative cycle if one exists.
It maintains a distance label vector $d:V\to\mathbb{R}$,
where initially $d(s)=0$ and $d(v)=\infty$ for all $v\in V\setminus\{s\}$,
and proceeds in \emph{steps}.
Classically, a \emph{Bellman-Ford step} consists of performing \emph{edge relaxations}, i.e., substitutions $d(v):=\min(d(v),d(u)+\wei(uv))$ for all edges $e=uv$, in any order.
It is well-known that: (1) if $d(v)\neq \infty$ then $d(v)$ is a length of some $s\to v$ path in $G$, (2) after~$k$ Bellman-Ford steps
we have $d(v)\leq \dist_G^k(s,v)$ for all $v\in V$,
(3) if $d(u)+\wei(uv)<d(v)$ for some $uv\in E$ after $n-1$ steps,
then~$G$ has a negative cycle,
(4) if $G$ has no negative cycle and $s$ can reach every other vertex, then the obtained
distance label vector constitutes a feasible price function of $G$.
A single Bellman-Ford step clearly takes $O(m)$ time, so the
Bellman-Ford algorithm runs in $O(mk)$ time if $k$ steps are performed.
However, in general, the results of individual relaxations in a single step
may depend on the results of relaxations that happened earlier in that step.

In this paper we actually use a variant of the Bellman-Ford algorithm that might converge
to the answer slower, but is easier to reason about.
Namely, in a single \emph{step}, \emph{for all $v$ at once}, we replace $d(v)$ with
$\min\left(d(v),\min_{uv\in E}\{d(u)+\wei(uv)\}\right)$. Equivalently,
at the beginning of a step we could store a copy of vector $d$ as $d'$,
and then again \emph{relax} the edges in any order,
where the relaxation is now a substitution $d(v):=\min(d(v),d'(u)+\wei(uv))$.
It is easy to prove that in this variant,
after $k$ steps we have precisely $d(v)=\dist_G^k(v)$ for all $v\in V$.
The properties (3) and (4) of the classical Bellman-Ford algorithm hold for this variant as well.
Moreover, the result of each relaxation now only depends
on the distance labels in the previous step.
Consequently, for this variant, a single Bellman-Ford step can be clearly performed
in parallel using $\Ot(m)$ work and $O(\polylog{n})$ depth.

\paragraph{Hitting sets.} Let $\mathcal{F}$ be a collection of subsets of some universe $U$.
Then $X\subseteq U$ is called a \emph{hitting set} of $\mathcal{F}$ if $X\cap S\neq \emptyset$
for all $S\in \mathcal{F}$.

\begin{lemma}[\cite{BergerRS94, King99}]\label{l:hitting}
  Let $\Pi$ be a collection of $k$ \emph{simple} $h$-hop paths of $G$ (i.e., $k$ $(h+1)$-element subsets of~$V(G)$). A hitting set $\Pi$ of
  size $O((n/h)\cdot \log{k})$ can be computed in a deterministic way:
  \begin{itemize}
    \item sequentially using a greedy algorithm in $O(kh+n)$ time,
    \item using a parallel algorithm with $O(\polylog{n})$ depth and $\Ot(kh+n)$ work.\footnote{The algorithm of Berger et al.~\cite{BergerRS94} actually
      produces a hitting set a constant-factor larger than the greedy algorithm.}
  \end{itemize}
\end{lemma}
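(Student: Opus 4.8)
The plan is to establish the two bullets separately: the sequential bound via the classical greedy hitting-set algorithm equipped with a bucket-based implementation, and the parallel bound via the parallel set-cover algorithm of Berger et al.~\cite{BergerRS94}.

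For the greedy algorithm, maintain for each vertex $v\in V(G)$ a counter $c(v)$ equal to the number of paths of $\Pi$ that contain $v$ and are not yet hit. Repeatedly add to the output set $X$ a vertex $v$ maximizing $c(v)$, mark every not-yet-hit path through $v$ as hit, and update the counters accordingly. For the size bound, suppose $r$ paths of $\Pi$ are currently unhit; since every path of $\Pi$ is simple with exactly $h+1$ distinct vertices, $\sum_{v}c(v)=r(h+1)\ge rh$, so by averaging some vertex lies on at least $rh/n$ unhit paths, and picking it leaves at most $r(1-h/n)$ unhit paths. Hence after $t$ iterations at most $k(1-h/n)^t\le k\,e^{-th/n}$ paths remain unhit, which drops below $1$ once $t=\Theta((n/h)\log k)$, giving $|X|=O((n/h)\log k)$.

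For the $O(kh+n)$ running time, in one pass over $\Pi$ build for every vertex the list of paths through it (total length $\sum_{P\in\Pi}|P|=O(kh)$) and initialize the counters. Keep the vertices bucketed by current counter value in an array of $k+1$ doubly linked lists, together with a pointer to the highest nonempty bucket, so that selecting a greedy vertex is just reading the head of that bucket. To process a chosen vertex $v$, walk its path-list and, for every path $P$ on it that is not yet hit, mark $P$ hit and, for each vertex $u\in P$, decrement $c(u)$ and move $u$ one bucket down. Each path is walked in this inner loop at most once, at cost $O(|P|)=O(h)$, for a total of $O(kh)$; and since no counter ever increases and chosen vertices leave the structure, the maximum counter is non-increasing over the whole execution, so the top-bucket pointer only ever moves downward (one step each time the top bucket empties), a total of $O(k)$ moves. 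Together with $O(n)$ for the initialization this yields the claimed $O(kh+n)$ bound.

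For the parallel bound, view the instance as a set-cover instance whose elements to be covered are the paths of $\Pi$ and whose covering sets are $S_v=\{P\in\Pi:v\in P\}$ for $v\in V(G)$; a subcollection of the $S_v$ covers all of $\Pi$ exactly when the corresponding vertex set hits $\Pi$, and the total size of this instance is $\sum_v|S_v|=\sum_{P\in\Pi}|P|=O(kh)$ plus $O(n)$ to list the vertices. Berger et al.~\cite{BergerRS94} give an NC algorithm that, on an instance of total size $N$, runs in $O(\polylog{N})$ depth and $\Ot(N)$ work and returns a cover of size within a constant factor of the greedy one; combined with the size analysis above this produces a hitting set of $\Pi$ of size $O((n/h)\log k)$ in $O(\polylog{n})$ depth and $\Ot(kh+n)$ work. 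The only delicate point is the amortized analysis of the bucket structure — verifying that the top-bucket pointer is (almost) monotone so that the $O(kh)$ incidences and the $O(k)$ pointer moves are each handled $O(1)$ times; the size bound is the textbook greedy argument and the parallel claim is a direct application of~\cite{BergerRS94}.
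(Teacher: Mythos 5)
Your proposal is correct; note that the paper itself gives no proof of this lemma, importing it directly from \cite{BergerRS94, King99}, and your argument is exactly the standard material behind those citations: the textbook greedy averaging bound (each of the $r$ unhit simple $h$-hop paths contributes $h+1$ counter units, so some vertex hits at least $rh/n$ of them, giving size $O((n/h)\log k)$), a bucket-queue implementation whose monotone counters justify the $O(kh+n)$ time, and a black-box invocation of the NC set-cover algorithm of Berger et al.\ for the parallel bullet, matching the paper's footnote that this loses only a constant factor over greedy. No gaps; this is consistent with, and more detailed than, the paper's citation-level treatment.
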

\section{A Parallel APSP Algorithm}\label{s:apsp}
In this section we describe the main result of this paper. Our algorithm
will use the concept of an~$h$-hub set, as defined below.

\begin{definition}\label{d:hubs}
  We call a set $H_h\subseteq V$ a $h$-hub set if for all $u,v\in V$ such
  that $\dist^h_G(u,v)<\dist^{h-1}_G(u,v)$ there exists a minimal path $P=u\to v$ in $G$
  such that $|P|=h$ and $P$ goes through a vertex of $H_h$.
\end{definition}
The following randomized construction of $h$-hub sets is attributed
to Ullman and Yannakakis~\cite{UllmanY91}.
\begin{fact}[\cite{UllmanY91}]\label{f:sample}
  For any $h\in [1,n]$, a random $\Theta\left(\frac{n}{h}\log{n}\right)$-element subset of $V$
  constitutes an $h$-hub set of $G$ with high probability\footnote{Here we abuse the notation slightly.
  Formally, by increasing the constant $c\geq 1$ hidden in the $\Theta$ notation, one can achieve
  probability $1-1/n^c$.}.
\end{fact}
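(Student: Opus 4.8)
The plan is to follow the classical union-bound argument of Ullman and Yannakakis. Fix $h \in [1,n]$ and let $H$ be obtained by including each vertex of $V$ independently with probability $p = \min\{1, (c \log n)/h\}$ for a suitable constant $c \geq 1$; a standard Chernoff bound shows $|H| = \Theta((n/h)\log n)$ with high probability, so it suffices to argue the hub-set property also holds with high probability and then take a union bound over the two events. First I would reduce to a statement about a single fixed pair $(u,v)$: by Definition~\ref{d:hubs}, $H$ fails to be an $h$-hub set only if there is a pair $u,v \in V$ with $\dist^h_G(u,v) < \dist^{h-1}_G(u,v)$ such that \emph{no} minimal $u \to v$ path of hop-length exactly $h$ passes through $H$. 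Since there are at most $n^2$ such pairs, it is enough to show that for each fixed pair this bad event has probability at most $1/n^{c'}$ for a constant $c'$ that we can push above $c$-dependent thresholds by enlarging $c$.

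For a fixed pair $(u,v)$ with $\dist^h_G(u,v) < \dist^{h-1}_G(u,v)$, pick one minimal $u \to v$ path $P$ with $|P| = h$ (such a path exists by definition of $\dist^h_G$ witnessing the strict inequality). By Observation~\ref{o:minimal} every subpath of $P$ is minimal, but the key point I only need is that $P$ has $h$ edges and hence $h+1$ vertices; in particular it contains at least $h$ \emph{internal-or-endpoint} vertices — more carefully, $P$ visits $h+1$ vertices, and even if some coincide, a simple minimal path has $h+1$ distinct vertices, so $P$ contains a set $S$ of at least, say, $h$ distinct vertices. (If $P$ is not simple one can shortcut it while preserving length and not increasing hops, contradicting minimality unless the hop count drops, so in fact $P$ is simple and $|S| = h+1$.) The probability that $H$ misses all of $S$ is $(1-p)^{|S|} \leq (1-p)^h \leq e^{-ph} \leq e^{-c\log n} = n^{-c}$, using $1-x \leq e^{-x}$ and $ph \geq c \log n$. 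Since missing this one particular minimal path $P$ is necessary for the pair $(u,v)$ to be ``bad'', the bad event for $(u,v)$ has probability at most $n^{-c}$.

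Taking a union bound over all at most $n^2$ pairs, the probability that $H$ fails to be an $h$-hub set is at most $n^{2-c}$, which is $1/n^{c-2}$; choosing $c$ large enough (e.g.\ $c \geq 4$ for $1/n^2$, or larger for any prescribed polynomial) gives the claimed high-probability bound, and a further union bound with the Chernoff estimate on $|H|$ absorbs the size guarantee. The main thing to be careful about — the only place the argument could slip — is the claim that a minimal $h$-hop path is simple, so that it genuinely contributes $\Omega(h)$ distinct vertices to be hit; this is where Observation~\ref{o:minimal} and the strictness $\dist^h_G(u,v) < \dist^{h-1}_G(u,v)$ are used, since a repeated vertex would let us excise a closed subwalk and obtain a path of strictly fewer hops with length no larger, contradicting minimality of $P$. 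Everything else is a routine Chernoff-plus-union-bound computation, which I would not spell out in full.
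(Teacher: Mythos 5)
The paper itself does not prove Fact~\ref{f:sample} (it is quoted from Ullman and Yannakakis), and your sampling-plus-union-bound argument is indeed the intended classical one: fixing one minimal $h$-hop path per bad pair, bounding the miss probability by $(1-p)^{h+1}\le e^{-ph}\le n^{-c}$, and union-bounding over at most $n^2$ pairs is exactly the standard route, and the Chernoff/size bookkeeping is routine as you say.

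The one step you flag yourself is, however, a genuine gap as written. Your justification that a minimal $h$-hop path is simple (``excise a closed subwalk, obtaining fewer hops and length no larger'') silently assumes the excised cycle has non-negative weight; if the closed subwalk is a negative cycle, removing it \emph{increases} the length and there is no contradiction with minimality. This is precisely why the paper's Lemma~\ref{l:simple} carries the explicit hypothesis that $G$ has no negative cycle with at most $h$ edges. Without some such hypothesis the fact itself can fail, not merely your proof of simplicity: take a two-vertex negative cycle $\{a,b\}$, edges $u\to a$ and $a\to v$ of weight $0$, and $n-4$ isolated vertices. For even $h\ge 4$ one has $\dist^h_G(u,v)<\dist^{h-1}_G(u,v)$, yet every minimal $h$-hop $u\to v$ path winds around the cycle and uses only the four vertices $u,a,b,v$, so a random $\Theta\bigl(\frac{n}{h}\log n\bigr)$-size set misses all of them with probability $1-o(1)$ once $h$ is polynomially large, and hence is not an $h$-hub set per Definition~\ref{d:hubs}. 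So you should state the assumption that $G$ has no negative cycle with at most $h$ edges (which is how the paper uses the fact: Section~3.1 assumes $G$ has no negative cycles at all), and then argue simplicity by noting the excised cycle must be non-negative under that assumption, as in Lemma~\ref{l:simple}. With that caveat added, your argument is complete and correct.
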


First, we will show how having hub sets for various values $h$ leads to a randomized algorithm,
whereas the deterministic construction of hubs will follow.
\subsection{The Algorithm}
In the remaining part of this section we assume that $G$ has no negative cycles.
We will deal with negative cycle detection later on.

Let $d\in [1,n]$ be a parameter that controls the depth of our algorithm.
By possibly decreasing the demanded depth by just a constant factor, we can
assume, without loss of generality, that $d=2^K$, where $K$ is an integer.
We first show how to compute APSP given hub sets \linebreak $V=H_1,H_2,\ldots,H_{2^i},\ldots H_{d}$,
where each $H_{2^i}$ has size $O((n/2^i)\cdot \log{n})$.
By Fact~\ref{f:sample}, all these hub sets can be obtained with $\Ot(n)$ work and $O(\polylog{n})$ depth using sampling.

The first step of our algorithm is to compute shortest $\leq (d+1)$-hop distances $\dist^{d+1}_G(s,t)$
for all $s,t\in H_d$. This can be done by running $(d+1)$ steps of the Bellman-Ford algorithm from all $s\in H_d$ in parallel
using $\Ot(|H_d|\cdot d\cdot m)=\Ot(nm)$ work and $\Ot(d)$ depth,
or in $O(nm\log{n})$ time sequentially.

Let $G_d$ be defined as a complete graph on $H_{d}$
with weights given by $\wei_{G_d}(uv)=\dist^{d+1}_G(u,v)$ for all $u,v\in H_d$.
The second step is computing the actual shortest paths $\dist_G(s,t)$ for all $s,t\in H_{d}$
by running the APSP algorithm based on repeated-squaring (of the weighted adjacency
matrix using min-plus product) on the graph $G_d$.
\begin{lemma}
  Let $s,t\in H_{d}$. Then $\dist_{G_d}(s,t)=\dist_G(s,t)$.
\end{lemma}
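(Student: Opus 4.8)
The plan is to prove the two inequalities $\dist_{G_d}(s,t)\le \dist_G(s,t)$ and $\dist_{G_d}(s,t)\ge \dist_G(s,t)$ separately.

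\textbf{The easy direction} ($\le$). Every edge $uv$ of $G_d$ has weight $\dist_G^{d+1}(u,v)$, which is the length of an actual $u\to v$ walk in $G$ (of at most $d+1$ hops). Hence any path $s=u_0u_1\cdots u_k=t$ in $G_d$ corresponds, by concatenating the witnessing walks, to an $s\to t$ walk in $G$ of the same total length. Since $G$ has no negative cycles, walk lengths are bounded below by distances, so $\len$ of that walk is $\ge \dist_G(s,t)$; taking the minimum over all $G_d$-paths gives $\dist_{G_d}(s,t)\ge \dist_G(s,t)$. Wait --- I should be careful about orientation: this argument actually shows $\dist_{G_d}(s,t)\ge\dist_G(s,t)$, the nontrivial-looking half. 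Let me restructure: the direction that needs the hub property is $\dist_{G_d}(s,t)\le\dist_G(s,t)$, and the direction that is a simple ``walks in $G_d$ map to walks in $G$'' argument is $\dist_{G_d}(s,t)\ge\dist_G(s,t)$.

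\textbf{The hub direction} ($\dist_{G_d}(s,t)\le\dist_G(s,t)$). Take a shortest $s\to t$ path $P$ in $G$; we may assume it is minimal (a shortest path realizing $\dist_G(s,t)$ with the fewest hops is minimal, by definition, since $\dist_G^{|P|}<\dist_G^{|P|-1}$ would fail only if a shorter-hop path of the same length exists, contradicting minimality of hop count). I would induct on $|P|$. If $|P|\le d+1$, then $\len(P)=\dist_G^{d+1}(s,t)=\wei_{G_d}(st)\ge\dist_{G_d}(s,t)$, done. If $|P|>d+1$, split $P$ after exactly $d+1$ hops: write $P=P_1P_2$ where $P_1$ is the $(d+1)$-hop prefix ending at some vertex $x$. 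By Observation~\ref{o:minimal}, $P_1$ is minimal, so $\dist_G^{d+1}(s,x)<\dist_G^{d}(s,x)$, i.e.\ $\dist_G^{d+1}(s,x)<\dist_G^{d-1}(s,x)$ as well --- more precisely $P_1$ being minimal with $|P_1|=d+1$ means $\dist_G^{d+1}(s,x)<\dist_G^{d}(s,x)$, and then the $d$-hop prefix $P_1'$ of $P_1$ (also minimal) witnesses $\dist_G^d(s,x)<\dist_G^{d-1}(s,x)$, so by the definition of a $d$-hub set applied to the pair $(s,x)$, there is a minimal $d$-hop $s\to x$ path through some $z\in H_d$. Hmm, this gives a hub vertex but it may lie on a \emph{different} $d$-hop path than $P_1$. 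The cleaner route: apply the hub property directly to get \emph{some} $z\in H_d$ lying on \emph{some} minimal $d$-hop $s\to x$ path $Q$; then $\len(Q)=\dist_G^d(s,x)\le\dist_G^{d+1}(s,x)\le\len(P_1)$ --- actually I need $\len(Q)\le\len(P_1)$, and since $P_1$ is minimal $\len(P_1)=\dist_G^{d+1}(s,x)$; is $\dist_G^d(s,x)=\dist_G^{d+1}(s,x)$? Not necessarily. So instead: since $P_1$ is minimal of hop-length $d+1$, I want a hub on $P_1$ itself. The right statement is that the $d$-hub set must hit $P_1$ because $P_1$ restricted appropriately is a minimal path; the definition guarantees a hub on \emph{a} minimal $d$-hop path between some pair, so I should look at the pair $(s, u_d)$ where $u_d$ is the $d$-th vertex of $P$: the length-$d$ prefix $P_1''$ of $P$ is minimal, hence $\dist^d_G(s,u_d)<\dist^{d-1}_G(s,u_d)$, so there is a minimal $d$-hop path $R$ from $s$ to $u_d$ through some $z\in H_d$, with $\len(R)=\dist^d_G(s,u_d)=\len(P_1'')$. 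Then $R\cdot(u_d\to t\text{ suffix of }P)$ is an $s\to t$ walk of length $\len(P)$ passing through $z\in H_d$ within the first $d\le d+1$ hops. Now $\wei_{G_d}(sz)=\dist_G^{d+1}(s,z)\le \len(\text{prefix of }R\text{ to }z)\le\len(R)$ (no negative cycles, and $R$ is a $\le d+1$-hop $s\to z$ walk after truncation), and the remaining $z\to t$ walk has length $\le \len(P)-\len(\text{prefix})$; by induction on hop-length applied to the pair $(z,t)$ along this shorter-hop walk we get $\dist_{G_d}(z,t)\le$ that remaining length. Combining, $\dist_{G_d}(s,t)\le\wei_{G_d}(sz)+\dist_{G_d}(z,t)\le\len(P)=\dist_G(s,t)$.

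\textbf{Main obstacle.} The delicate point is the induction bookkeeping in the hub direction: extracting a hub vertex that genuinely lies on (a path no longer than) our chosen shortest path $P$, using Definition~\ref{d:hubs} which only promises a hub on \emph{some} minimal path, and then ensuring the ``remainder'' walk from the hub to $t$ is strictly shorter in hops so the induction is well-founded. I would handle this by always measuring the induction on the hop-length of the path and by replacing prefixes with equally-short ($\le d+1$-hop, hence captured by a single $G_d$-edge) detours through hubs, invoking Observation~\ref{o:minimal} to keep subpaths minimal and the no-negative-cycle assumption to compare walk lengths with distances. I'd also note the edge case $t\in H_d$ or $|P|\le d+1$ as the base, and that the statement for all pairs $s,t\in H_d$ follows uniformly.
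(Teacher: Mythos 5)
Your overall strategy (the trivial direction via mapping $G_d$-paths to $G$-walks, plus induction on the hop-length $h$ of a minimal shortest path, with base case $h\le d+1$) is the same as the paper's, and the base case and the $\ge$ direction are fine. But the inductive step has a genuine gap, and it is exactly the obstacle you flag without resolving. You apply Definition~\ref{d:hubs} to the pair $(s,u_d)$, where $u_d$ is the vertex after the first $d$ hops of $P$. Since $s\in H_d$, this invocation is vacuous: \emph{every} minimal $d$-hop path starting at $s$ ``goes through a vertex of $H_d$'', namely $s$ itself, so the definition may hand you $z=s$ sitting only at position $0$ of $R$ and no other hub vertex on $R$ at all. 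In that case your decomposition reads $\dist_{G_d}(s,t)\le \wei_{G_d}(ss)+\dist_{G_d}(s,t)$, the ``remaining'' walk from the hub to $t$ still has $h$ hops, and the induction is circular rather than well-founded. Your proposed bookkeeping (measuring the induction by hop-length and replacing prefixes by equally short detours) does not repair this, because the hub-set guarantee simply gives no leverage for pairs whose source is already a hub.

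The paper's proof avoids this by choosing the $d$-hop window strictly in the interior of $P$: it writes $P=P'QP''$ with $|P'|,|P''|\ge 1$ and $|Q|=d$ (possible since $h\ge d+2$), applies the hub property to the endpoints $(p,q)$ of $Q$, and replaces $Q$ by a same-length, same-hop-count path $Q'$ through some $z\in H_d$. Whatever $z$ is (even $p$ or $q$), it is then an \emph{intermediate} vertex of the minimal shortest path $P'Q'P''$ of hop-length $h$, so by Observation~\ref{o:minimal} both the $s\to z$ and $z\to t$ subpaths are minimal shortest paths of hop-length at most $h-1$, and the inductive hypothesis applies to the pairs $(s,z),(z,t)\in H_d\times H_d$; the triangle inequality in $G_d$ then closes the argument. (Note also that the paper never needs your single-edge bound $\wei_{G_d}(sz)\le\ldots$ in the inductive step; it uses the IH on both sides.) If you shift your window by one hop so that it does not start at $s$, your argument essentially becomes the paper's.
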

\begin{proof}
  Since the edge lengths in $G_d$ encode path lengths in $G$, we clearly have $\dist_{G_d}(s,t)\geq \dist_G(s,t)$.
  Let $h$ be the hop-length of a minimal shortest $s\to t$ path in $G$, i.e., minimum $h$ such that $\dist^h_G(s,t)=\dist_G(s,t)$.
  We prove $\dist_{G_d}(s,t)\leq \dist_G(s,t)$ by induction on $h$.

  If $h\leq d+1$, then $\dist_G(s,t)=\dist^{d+1}_G(s,t)=\wei_{G_d}(st)\geq \dist_{G_d}(s,t)$.
  Otherwise, suppose $h>d+1$ and consider some minimal shortest $s\to t$ path $P$ in $G$.
  Let us write $P=P'QP''$, where $|P'|,|P''|\geq 1$, $|Q|=d$, and $Q$ is a $p\to q$ path, $p,q\in V$.
  Since $P$ is minimal, so is $Q$ (by Observation~\ref{o:minimal}) and, as a result, we have $\dist^{d-1}_G(p,q)>\dist^d_G(p,q)$.
  Therefore, by the definition of a $d$-hub set, there exists a path $Q'=p\to q$ in $G$ with length $\dist^d_G(p,q)=\len(Q)$,
  $|Q'|=|Q|$, and going through a vertex $z\in H_d$.
  Consequently, $P'Q'P''$ is a minimal shortest $s\to t$ path with $z\in H_d$ as an intermediate vertex.
  By Observation~\ref{o:minimal}, this implies $\dist^{h-1}_G(s,z)=\dist_G(s,z)$ and $\dist^{h-1}_G(z,t)=\dist_G(z,t)$.
  By the inductive assumption we conclude $\dist_{G_d}(s,z)=\dist_G(s,z)$ and $\dist_{G_d}(z,t)=\dist_G(z,t)$.
  Finally, by $z\in V(G_d)$ and triangle inequality we obtain $\dist_{G_d}(s,t)\leq \dist_{G_d}(s,z)+\dist_{G_d}(z,t)=\dist_G(s,z)+\dist_G(z,t)=\dist_G(s,t)$.
\end{proof}

The repeated squaring APSP algorithm has $\Ot(|H_d|^3)=\Ot((n/d)^3)$ work
and $O(\polylog{n})$ depth (one can also think of the min-plus product as
$n$ Bellman-Ford steps that can be performed in parallel, and hence a single product requires $O(\polylog{n})$ depth).
If one wanted to implement this step sequentially, the Floyd-Warshall
APSP algorithm would yield $O((n/d)^3\log^3{n})$ time.

Finally, the last step is to inductively compute, for each $k=K,\ldots,$ down to $0$, the distances $\dist_G(s,t)$ for all
pairs $(s,t)\in (H_{2^k}\times V)\cup (V\times H_{2^k})$.
Recall that we have set $H_0=V$, so after completing the step for $k=0$, this will give the all-pairs distance matrix
as desired.

For convenience, let us set $H_{2^{K+1}}:=H_{2^K}$.
Let us focus on computing $\dist_G(s,t)$ for $s\in H_{2^k}$ and all $t\in V$
assuming that the steps for larger $k$ have already been completed
and so the distances $\dist_G(u,v)$ for $(u,v)\in (H_{2^{k+1}}\times V)\cup (V\times H_{2^{k+1}})$
are already known.
Actually, for the inductive step we only require these distances for the pairs from $(H_{2^{k+1}}\times H_{2^k})\cup (H_{2^k}\times H_{2^{k+1}})$,
which implies that in the first step (for $k=K$) they can be retrieved from the distance matrix of $G_d$.
Computing $\dist_G(s,t)$ for $(s,t)\in V\times H_{2^k}$
is symmetric.
Let $G_{k,s}$ be $G$ with an auxiliary edge $e_v=sv$ of weight $\wei_{G_{k,s}}(e_v)=\dist_G(s,v)$
added for all $v\in H_{2^{k+1}}$.
Observe that all the auxiliary edges' weights have been already computed in
the previous step.
We compute the desired distances $\dist_G(s,t)$ by running $2^{k+1}+1$ steps of the Bellman-Ford algorithm
from each $s\in H_d$ (in parallel).
This costs $\Ot(|H_{2^k}|\cdot m\cdot 2^k)=\Ot(mn)$ work and $\Ot(2^k)$ depth
in parallel, or $O(nm\log{n})$ time sequentially.

Note that through all $k$, the total work is $\Ot(nm)$, whereas the depth is $\Ot(d)$.
The sequential time cost of the final phase is $O(nm\log{n}\log{d})$.
The correctness of the above algorithm follows from the lemma below.
\begin{lemma}
  For any $(s,t)\in H_{2^k}\times V$, $\dist_{G_{k,s}}^{2^{k+1}+1}(s,t)=\dist_G(s,t)$.
\end{lemma}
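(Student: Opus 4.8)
The plan is to show the two inequalities $\dist_{G_{k,s}}^{2^{k+1}+1}(s,t) \ge \dist_G(s,t)$ and $\dist_{G_{k,s}}^{2^{k+1}+1}(s,t) \le \dist_G(s,t)$ separately. The first direction is the routine one: every edge of $G_{k,s}$ is either an edge of $G$ or an auxiliary edge $sv$ whose weight equals $\dist_G(s,v)$, a length of an actual $s\to v$ path in $G$. Hence any $s\to t$ walk in $G_{k,s}$ can be ``uncontracted'' into an $s\to t$ walk in $G$ of the same length (replacing each auxiliary edge $sv$ by a shortest $s\to v$ path in $G$), so its length is at least $\dist_G(s,t)$. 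Since $G$ has no negative cycles this also shows $\dist_{G_{k,s}}$ is well-defined and $\ge \dist_G(s,t)$, and a fortiori the bounded-hop quantity is $\ge \dist_G(s,t)$.

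For the reverse direction, fix $s\in H_{2^k}$ and $t\in V$, and take a minimal shortest $s\to t$ path $P$ in $G$, i.e.\ one with $|P|$ minimal subject to $\len(P)=\dist_G(s,t)$. If $|P|\le 2^{k+1}+1$ then $P$ already witnesses $\dist_{G_{k,s}}^{2^{k+1}+1}(s,t)\le \len(P)=\dist_G(s,t)$ since $P\subseteq G\subseteq G_{k,s}$. So assume $|P| > 2^{k+1}+1$. The idea is that $P$ must pass through a vertex of $H_{2^{k+1}}$ within its first $2^{k}$ (or so) hops, and then the auxiliary shortcut edge lets us skip the prefix in a single hop. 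Concretely, write $P=P'QP''$ where $P'$ is the length-one first edge of $P$ (from $s$ to some vertex $p$), $Q$ is the next $2^k$ hops (a $p\to q$ path) — this is legal since $|P|>2^{k+1}+1\ge 2^k+2$. By Observation~\ref{o:minimal}, $Q$ is minimal, so $\dist^{2^k}_G(p,q)<\dist^{2^k-1}_G(p,q)$, and by the definition of the $2^k$-hub set $H_{2^k}$ — wait, here is where one must be careful about which hub set is used.

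Let me redo the split with the \emph{correct} hub set. We want a vertex of $H_{2^{k+1}}$ on an early portion of $P$. Since $|P|>2^{k+1}+1$, the prefix of $P$ consisting of its first $2^{k+1}$ hops is a minimal $s\to r$ path $R$ (minimal by Observation~\ref{o:minimal}), so $\dist^{2^{k+1}}_G(s,r)<\dist^{2^{k+1}-1}_G(s,r)$; hence by Definition~\ref{d:hubs} applied to $H_{2^{k+1}}$ there is a minimal $s\to r$ path $R'$ with $|R'|=2^{k+1}$, $\len(R')=\len(R)$, passing through some $z\in H_{2^{k+1}}$. Then $R'$ followed by the remaining suffix of $P$ is again a minimal shortest $s\to t$ path, now with an intermediate vertex $z\in H_{2^{k+1}}$ reached within $2^{k+1}$ hops. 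Split this new minimal path at $z$ as $A\cdot B$, where $A$ is a shortest $s\to z$ path with $|A|\le 2^{k+1}$ and $B$ is a shortest $z\to t$ path. Now replace $A$ by the single auxiliary edge $e_z=sz$ of $G_{k,s}$, whose weight is exactly $\dist_G(s,z)=\len(A)$: this gives an $s\to t$ walk in $G_{k,s}$ of length $\len(A)+\len(B)=\dist_G(s,t)$ and hop-length $1+|B|$. It remains to bound $|B|$. Since the path $A\cdot B$ is minimal and $B$ is its suffix, $B$ is minimal, and because $z$ is reachable from $s$ within $2^{k+1}$ hops along a minimal $s\to t$ path, the suffix $B$ from $z$ to $t$ has $|B| = |P|_{\mathrm{new}} - |A| \ge |P|_{\mathrm{new}} - 2^{k+1}$ — but this is a lower bound, and I need an upper bound on $|B|$, which is not automatic. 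The clean fix is iteration: repeatedly peel off $2^{k+1}$-hop prefixes replacing each by one shortcut edge. After $j$ such replacements we have an $s\to t$ walk in $G_{k,s}$ of length $\dist_G(s,t)$ and hop-length reduced by at least $2^{k+1}-1$ each time; once the remaining hop-length drops to $\le 2^{k+1}+1$ we stop. Thus the main obstacle — and the step I expect to need the most care — is formalizing this hop-length accounting, namely that a single shortcut replacement of a $2^{k+1}$-hop prefix (or a one-or-more-hop prefix ending at $z$) strictly decreases the hop count while preserving minimality and length, so that the process terminates with a $(\le 2^{k+1}+1)$-hop walk; equivalently one can argue directly by downward induction on $|P|$ as in the preceding lemma.
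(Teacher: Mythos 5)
Your first inequality is fine, and you correctly diagnosed the weakness of splitting off a \emph{prefix}: the hub vertex you obtain sits near the start of the path, so the suffix $B$ after it has no useful hop bound. However, your proposed repair by iteration does not close the gap. The claim that each peeling step reduces the hop-length ``by at least $2^{k+1}-1$'' is unjustified: Definition~\ref{d:hubs} only guarantees that \emph{some} minimal $2^{k+1}$-hop rerouting of the window passes through \emph{some} vertex of $H_{2^{k+1}}$ --- that vertex may lie anywhere in the window, possibly at its very first vertex (the vertex you have already shortcut to), in which case the replacement shortens nothing and your process makes no progress. So as written the iteration's termination, and hence the hop bound $\leq 2^{k+1}+1$, is not established; the closing remark that one can ``argue directly by downward induction on $|P|$'' gestures at the right kind of argument but is not carried out, and the induction in the preceding lemma relies on both endpoints being hubs of $G_d$, which is not the situation here.

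The missing idea, which is how the paper argues, is to apply the hub property to the \emph{last} $2^{k+1}$ hops of the minimal shortest path $P$. Write $P=QR$ with $|R|=2^{k+1}$, $R=u\to t$; $R$ is minimal by Observation~\ref{o:minimal}, so there is a minimal $2^{k+1}$-hop rerouting $R'$ of $R$ through some $z\in H_{2^{k+1}}$. Then $QR'$ is still a minimal shortest $s\to t$ path, its prefix up to $z$ is a shortest $s\to z$ path (so the auxiliary edge $e_z$ has exactly its length), and --- crucially --- the suffix $P_2$ from $z$ to $t$ lies inside $R'$, hence $|P_2|\leq 2^{k+1}$ automatically. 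The single walk $e_zP_2$ then has length $\dist_G(s,t)$ and at most $2^{k+1}+1$ hops, with no iteration and no termination argument needed. Wherever the hub lands inside the last window, the construction works, which is precisely what fails when the window is taken at the front.
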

\begin{proof}
  Let $b:=2^{k+1}+1$.
  Since $G\subseteq G_{k,s}$, and the auxiliary edges encode some path lengths in $G$, we clearly have
  $\dist_{G_{k,s}}^{b}(s,t)\geq \dist_{G_{k,s}}(s,t)=\dist_G(s,t)$.

  Let us now prove $\dist_{G_{k,s}}^{b}(s,t)\leq \dist_G(s,t)$.
  Let $P$ be a minimal shortest $s\to t$ path in $G$.
  If $|P|\leq b$, then our claim holds by $G\subseteq G_{k,s}$. 
  Suppose $|P|>b$.
  Let us write $P=QR$, where $|R|=2^{k+1}$ and $R=u\to t$.
  Observe that the path $R$ is shortest and minimal.
  Hence, by the definition of $H_{2^{k+1}}$, there exists
  a minimal shortest path $R'=u\to t$ going through a vertex $z\in H_{2^{k+1}}$
  and $|R'|=2^{k+1}$.
  Therefore, $QR'$ is also a minimal shortest $s\to t$ path.
  Let us express $QR'$ as $P_1P_2$, where $P_2$ is a minimal shortest $z\to t$ path.
  Note that we have $|P_2|\leq 2^{k+1}$.
  Moreover, $\len(P_1)=\dist_G(s,z)=\wei_{G_{k,s}}(e_z)$.
  Hence, the $s\to t$ path $e_zP_2$ of length $\dist_G(s,t)$ is contained in $G_{k,s}$ and its hop-length
  does not exceed $2^{k+1}+1=b$.
  Its existence implies $\dist_{G_{k,s}}^b(s,t)\leq \dist_G(s,t)$.
\end{proof}
Hence, we obtain the following lemma.
\begin{lemma}\label{l:alg}
  Let $d=2^K\leq n$ and assume that $G$ does not contain a negative cycle. Given
  a collection of sets $H_{2^k}$, where $k=0,\ldots,K$,
  such that $H_{2^k}$ is a $2^k$-hub set of $G$ of size $O((n/2^k)\log{n})$,
  all-pairs shortest paths can be computed in parallel using $\Ot(nm)$ work and $\Ot(d)$ depth,
  or sequentially in $O(nm\log{n}\log{d})$ time.
  \end{lemma}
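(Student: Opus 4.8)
The plan is to simply assemble the three phases described above into a single algorithm and add up the work and depth. The lemma is essentially a summary statement, so the proof is bookkeeping: I would argue that the hub-set hypothesis is exactly what makes each of the two preceding lemmas applicable, and then total the resource bounds.

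First I would fix $d=2^K$ and the collection $H_1=H_{2^0},H_2,\ldots,H_{2^K}=H_d$, with the convention $H_{2^{K+1}}:=H_{2^K}$ as introduced above. \emph{Phase~1:} run $(d+1)$ steps of the (parallel) Bellman-Ford algorithm from every $s\in H_d$ simultaneously; this yields $\dist^{d+1}_G(s,t)$ for all $s,t\in H_d$, at a cost of $\Ot(|H_d|\cdot d\cdot m)=\Ot(nm)$ work and $\Ot(d)$ depth (sequentially $O(nm\log n)$ time, since $|H_d|=O((n/d)\log n)$ and $|H_d|\cdot d\cdot m = O(nm\log n)$). \emph{Phase~2:} form the complete graph $G_d$ on $H_d$ with $\wei_{G_d}(uv)=\dist^{d+1}_G(u,v)$ and run repeated-squaring APSP on it; by the first lemma above, $\dist_{G_d}(s,t)=\dist_G(s,t)$ for all $s,t\in H_d$, and the cost is $\Ot(|H_d|^3)=\Ot((n/d)^3)$ work and $O(\polylog n)$ depth (sequentially $O((n/d)^3\log^3 n)$ via Floyd--Warshall). \emph{Phase~3:} for $k=K$ down to $0$, and for each $s\in H_{2^k}$ in parallel, build $G_{k,s}$ by adding auxiliary edges $sv$ of weight $\dist_G(s,v)$ for all $v\in H_{2^{k+1}}$ (available from the previous iteration, or from $G_d$ when $k=K$), then run $2^{k+1}+1$ Bellman-Ford steps to obtain $\dist^{2^{k+1}+1}_{G_{k,s}}(s,t)$; by the second lemma above this equals $\dist_G(s,t)$ for all $t\in V$, and the symmetric computation handles $V\times H_{2^k}$. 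Since $H_0=V$, after $k=0$ we have the full distance matrix.

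For the resource bounds I would note that iteration $k$ of Phase~3 costs $\Ot(|H_{2^k}|\cdot m\cdot 2^k)=\Ot\!\big((n/2^k)\log n\cdot m\cdot 2^k\big)=\Ot(nm)$ work and $\Ot(2^k)$ depth. Summing over $k=0,\ldots,K$, the work is $\Ot(nm)$ (only a $\log d$ factor is accumulated, absorbed by $\Ot$) and the depth is $\Ot(2^0+2^1+\cdots+2^K)=\Ot(2^K)=\Ot(d)$. Combining the three phases, the total work is $\Ot(nm)+\Ot((n/d)^3)+\Ot(nm)=\Ot(nm+(n/d)^3)$; since the lemma statement only claims $\Ot(nm)$ work, I should either restrict to the regime $d=\Omega(n^{2/3}/m^{1/3})$ where the second term is dominated, or (more faithfully) just carry the $(n/d)^3$ term along --- the statement as written seems to implicitly absorb it, and I would flag this. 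The depth is $\max\{\Ot(d),O(\polylog n),\Ot(d)\}=\Ot(d)$. Sequentially the three phases cost $O(nm\log n)$, $O((n/d)^3\log^3 n)$, and $O(nm\log n\log d)$ respectively, for a total of $O(nm\log n\log d)$ under the same regime on $d$.

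The main obstacle --- and really the only nontrivial point --- is making sure the hub-set \emph{guarantee is invoked correctly at each scale}: Phase~2 uses that $H_d$ is a $d$-hub set, and iteration $k$ of Phase~3 uses that $H_{2^{k+1}}$ is a $2^{k+1}$-hub set, and in both cases the argument that an auxiliary/``squared'' edge can stand in for a length-$2^j$ minimal subpath relies precisely on Observation~\ref{o:minimal} (subpaths of minimal paths are minimal), so that the relevant subpath $Q$ satisfies $\dist^{|Q|-1}_G<\dist^{|Q|}_G$ and Definition~\ref{d:hubs} applies. Those two implications are exactly the content of the two lemmas preceding this one, so in the write-up I would just cite them; the remaining work is the arithmetic above, which is routine.
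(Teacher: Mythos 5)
Your proposal is correct and follows essentially the same route as the paper: the same three phases (parallel Bellman--Ford from $H_d$, repeated squaring on $G_d$, and the inductive lifting via the augmented graphs $G_{k,s}$), with correctness delegated to the two preceding lemmas exactly as the paper does. Your flag about the $\Ot((n/d)^3)$ term is also apt: the lemma as stated omits it even though it must be charged to this part of the algorithm (it reappears in the work bound of Theorem~\ref{t:apsp}), so carrying that term explicitly is the more faithful accounting.
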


\subsection{Deterministic Construction of Hubs}
In this section we show how to construct hub sets in a deterministic way. We start
with a few technical lemmas.
\begin{lemma}\label{l:rec}
  Let $H_h$ be a $h$-hub set of $G$. Let $\Pi$ be a collection
  of minimal $h$-hop paths $P_{st}=s\to t$, one path for
  each pair $(s,t)\in H_h\times V$ for which such a minimal path exists.

  Let $B$ be a hitting set of $\Pi$.
  Then $B$ is a $2h$-hub set of $G$.
\end{lemma}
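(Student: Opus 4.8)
The plan is to show that for any pair $(u,v)\in V\times V$ with $\dist^{2h}_G(u,v)<\dist^{2h-1}_G(u,v)$, there is a minimal $u\to v$ path of hop-length exactly $2h$ passing through a vertex of $B$. First I would fix such a pair $(u,v)$ and pick a minimal shortest $u\to v$ path $P$ of hop-length $2h$ (one exists by the assumption $\dist^{2h}_G(u,v)<\dist^{2h-1}_G(u,v)$, since that strict inequality says some length-$2h$ path beats all shorter ones). The natural move is to split $P$ at its midpoint: write $P=P_1P_2$, where $P_1=u\to z$ has $|P_1|=h$ and $P_2=z\to v$ has $|P_2|=h$. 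By Observation~\ref{o:minimal}, both $P_1$ and $P_2$ are minimal, so in particular $\dist^{h}_G(z,v)<\dist^{h-1}_G(z,v)$ — wait, that is backwards; the hub set $H_h$ gives us a vertex on paths \emph{from} $H_h$, so I actually want to use that $z$ is the endpoint and instead apply the definition of $H_h$ to handle the first half.

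Here is the cleaner route. Since $H_h$ is a $h$-hub set and $P_1=u\to z$ is minimal with $|P_1|=h$, the definition of $h$-hub set gives a minimal $u\to z$ path $Q_1$ with $|Q_1|=h$ passing through some $x\in H_h$. Replacing $P_1$ by $Q_1$ yields $Q_1P_2$, still a minimal shortest $u\to v$ path of hop-length $2h$. Now split $Q_1P_2$ at $x$: write $Q_1P_2 = R_1R_2$ where $R_1=u\to x$, $R_2=x\to v$. Each $R_i$ is minimal (Observation~\ref{o:minimal}); let $h' = |R_2|$, so $h'\le h$ (because $x$ lies on $Q_1$, whose length is $h$, so the portion of $Q_1$ after $x$ has at most $h-1$ edges, hence $R_2$ has at most $2h-1$ edges — I need to be more careful here). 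Actually $|R_2| = (\text{edges of }Q_1\text{ after }x) + |P_2|$, which is at most $(h-1) + h = 2h-1$, not necessarily $\le h$. So the midpoint split is the better one after all: take the half $P_2 = z\to v$ of $P$ with $|P_2| = h$; it is minimal, so $\dist^h_G(z,v)<\dist^{h-1}_G(z,v)$ — but again the hub set $H_h$ hits paths \emph{from} $H_h$, and $z$ need not be in $H_h$.

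The resolution is to iterate the hub property. Since $P$ has $2h$ hops and $\dist^{2h}_G(u,v) < \dist^{2h-1}_G(u,v)$, consider the prefix $P'=u\to z$ of $P$ with $|P'| = h$; it is minimal, so by Definition~\ref{d:hubs} applied to $H_h$ there is a minimal $u\to z$ path through some $x\in H_h$ of hop-length $h$. Swap it in to get a minimal shortest $u\to v$ path $P^\ast = (u\to x\to z)\cdot(z\to v)$ where the $x\to z$ portion has at most $h-1$ edges and the $z\to v$ portion has $h$ edges, so the suffix $x\to v$ of $P^\ast$ has hop-length $\le 2h-1$ and $\ge 1$, and is minimal. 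Now $x\in H_h$, and the suffix $S = x\to v$ is a minimal path starting at a vertex of $H_h$. If $|S|\le h$ then $S\in\Pi$ (it is a minimal path from $H_h$ with $\le h$ hops — I'd need $\Pi$ to contain paths of all lengths up to $h$, or invoke the hub property once more to land exactly at hop-length $h$; in the paper's setup $\Pi$ contains one minimal $h$-hop path per pair, so I should take the first $h$ hops of $S$ if $|S| > h$, or pad via the hub property if $|S| < h$). The key point is that some minimal sub-path of $P^\ast$ of hop-length exactly $h$ starting at $x\in H_h$ lies in $\Pi$; since $B$ hits $\Pi$, that sub-path contains a vertex of $B$; and since it is a sub-path of the minimal shortest $u\to v$ path $P^\ast$ of hop-length $2h$, we conclude $P^\ast$ witnesses the $2h$-hub property for $(u,v)$.

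The main obstacle is exactly the hop-length bookkeeping: after one hub swap the suffix $x\to v$ can be as long as $2h-1$ hops, so one must carefully extract from it a minimal sub-path of hop-length \emph{exactly} $h$ that both starts at a vertex of $H_h$ and lies inside the global minimal path $P^\ast$ — handling both the case where the raw suffix is too long (truncate) and where structural constraints make it the wrong length. I expect the write-up to spend most of its effort pinning down which sub-path of $P^\ast$ is fed to the hitting-set guarantee, after which the conclusion is immediate from Observation~\ref{o:minimal} and the definition of a hitting set.
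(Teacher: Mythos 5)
Your overall strategy is the same as the paper's: take a minimal $2h$-hop $u\to v$ path $P$, apply the hub-set property to its first half to obtain an equally long, equally cheap replacement through some $x\in H_h$, and then feed the $h$-hop window starting at $x$ to the hitting-set guarantee. The bookkeeping worry you flag is actually harmless: since the hub vertex $x$ lies on the first half (within the first $h$ hops) of the swapped path $P^\ast$, and $|P^\ast|=2h$, the subpath of $P^\ast$ of hop-length exactly $h$ starting at $x$ always exists, so only the "truncate" case occurs and no padding is ever needed.

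The genuine gap is in your last step. You claim that this $h$-hop subpath $S'=x\to y$ of $P^\ast$ "lies in $\Pi$", and conclude that $P^\ast$ itself contains a vertex of $B$. But $\Pi$ contains only \emph{one} (arbitrarily chosen) minimal $h$-hop path $P_{xy}$ per pair $(x,y)\in H_h\times V$; since $S'$ is minimal, such a representative exists, but it need not equal $S'$, and the hitting set $B$ is only guaranteed to intersect $P_{xy}$, not $S'$. Hence $P^\ast$ may well avoid $B$ entirely, and "$P^\ast$ witnesses the $2h$-hub property" is not justified. The fix, which is exactly the paper's closing move, is a second substitution: writing $P^\ast=R\,S'\,T$, note that $P_{xy}$ has the same endpoints, the same hop-length $h$, and the same length as $S'$ (both are minimal $h$-hop $x\to y$ paths, so both have length $\dist^h_G(x,y)$), so $R\,P_{xy}\,T$ is again a minimal $u\to v$ path of hop-length $2h$, and this path — not $P^\ast$ — passes through a vertex of $B$ and is the required witness. (A smaller slip worth noting: midway you assert the hub-set definition only yields vertices on paths "from $H_h$"; Definition~\ref{d:hubs} applies to arbitrary pairs $u,v$ — it is $\Pi$, not the hub property, that is restricted to sources in $H_h$ — and your eventual application of it to the prefix $u\to z$ is correct.)
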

\begin{proof}
  We need to prove that, for each $u,v\in V$ such that $\dist^{2h}_G(u,v)<\dist^{2h-1}_G(u,v)$,
  there exists a minimal $2h$-hop $u\to v$ path in $G$ going through a vertex of $B$.
  To this end, let $P$ be some minimal $u\to v$ path such that $|P|=2h$.
    Split $P$ evenly into two paths $P_1P_2$ of hop-length $h$.
    Note that, by Observation~\ref{o:minimal}, every subpath of $P$, in particular $P_1=u\to w$, is minimal.
    Since $H_h$ is a hub set, there is a vertex $z\in H_h$ on some minimal
    path $Q=u\to w$ such that $|Q|=h$.
    Let $P'=QP_2$, and let us express $P'=RST$ so that $S=z\to y$ is a path satisfying $|S|=h$.
    Note that by minimality of~$P'$, $S$ is also minimal.
    Therefore, $\dist_G^h(z,y)<\dist_G^{h-1}(z,y)$, and thus we have a minimal $z\to y$ path~$P_{zy}$
    of hop-length~$h$
    in $\Pi$.
    Finally, observe that $RP_{zy}T$ is a minimal $u\to v$ path of hop-length~$2h$
    and goes through a vertex of~$B$ by the definition of a hitting set of $\Pi$.
\end{proof}

\begin{lemma}\label{l:simple}
  Suppose that $G$ has no negative cycles with at most $h$ edges.
  Then every minimal $h$-hop path in $G$ is simple.
\end{lemma}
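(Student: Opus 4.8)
The plan is to prove the contrapositive by extracting a negative cycle with few hops from any minimal $h$-hop path that fails to be simple. So suppose $P = u \to v$ is a minimal $h$-hop path in $G$ that is not simple, i.e., $P$ visits some vertex $x$ twice. Write $P = P_1 \cdot C \cdot P_2$ where $C$ is the (closed) $x \to x$ subwalk between the two visits to $x$; then $|C| \geq 1$ and $|P_1| + |C| + |P_2| = h$, so in particular $|C| \leq h$. My goal is to show $\len(C) < 0$, which gives a negative cycle with at most $h$ edges and contradicts the hypothesis.

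The key step is to use minimality of $P$ to force $\len(C) < 0$. By Observation~\ref{o:minimal}, every subpath of $P$ is minimal; but $C$ is a closed walk, not literally a subpath in the $s\to t$ sense, so I need a slightly more careful argument. Consider the $u \to v$ walk $P' = P_1 \cdot P_2$ obtained by deleting the cycle $C$; it has hop-length $h - |C| < h \leq h$, actually $|P'| = |P| - |C| \le |P| - 1$. Its length is $\len(P') = \len(P) - \len(C)$. Now invoke the defining property of minimality of $P$: $\len(P) = \dist_G^{|P|}(u,v) < \dist_G^{|P|-1}(u,v) \le \dist_G^{|P'|}(u,v) \le \len(P')$ — wait, I must be careful with the direction of the inequality on bounded-hop distances. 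Since $|P'| = |P| - |C| \le |P| - 1$, monotonicity gives $\dist_G^{|P|-1}(u,v) \le \dist_G^{|P'|}(u,v)$ only if $|P'| \le |P| - 1$, which holds; and $\dist_G^{|P'|}(u,v) \le \len(P')$ since $P'$ is a $u \to v$ path with $|P'|$ hops. Chaining: $\len(P) < \dist_G^{|P|-1}(u,v) \le \len(P')$, hence $\len(P) < \len(P')= \len(P) - \len(C)$, which yields $\len(C) < 0$.

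Thus $C$ is a closed walk from $x$ to $x$ with $\len(C) < 0$ and $1 \le |C| \le h$. A closed walk of negative length contains a simple negative cycle (by repeatedly splitting off cycles, at least one of the pieces must have negative length, and it has at most $|C| \le h$ edges), so $G$ has a negative cycle with at most $h$ edges, contradicting the assumption. Therefore every minimal $h$-hop path is simple.

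The only subtlety I anticipate is bookkeeping the hop-count monotonicity of $\dist_G^k$ correctly: the inequality $\dist_G^{k}(u,v) \le \dist_G^{k-1}(u,v)$ must be used in the right direction, and one must be sure that deleting $C$ genuinely produces a $u\to v$ path (it does, since $P_1$ ends at $x$ and $P_2$ starts at $x$, so the concatenation is well-defined — with the minor convention that if $P_1$ or $P_2$ is empty, the "path" $P'$ degenerates, but then $u = x$ or $v = x$ and $C$ is itself a closed walk from $u$ to $u$ or $v$ to $v$, and the same length argument applies with $\dist_G^0(u,u) = 0$). No deep idea is needed beyond this standard exchange argument; the main thing to get right is that minimality is precisely the hypothesis $\len(P) = \dist^{|P|}_G(u,v) < \dist^{|P|-1}_G(u,v)$, which is exactly what forces the removed cycle to be strictly improving, i.e., negative.
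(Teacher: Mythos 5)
Your proof is correct and follows essentially the same exchange argument as the paper: take a repeated vertex on a non-simple minimal $h$-hop path, excise the closed subwalk $C$ between the two visits, and use minimality ($\len(P)=\dist^{|P|}_G(u,v)<\dist^{|P|-1}_G(u,v)$) together with the no-short-negative-cycle hypothesis to reach a contradiction. The paper phrases it as a two-case split (a non-negative excised cycle contradicts minimality, a negative one contradicts the hypothesis), whereas you derive $\len(C)<0$ directly and then extract a simple negative cycle; this is only a cosmetic reorganization, and your explicit handling of the degenerate $u=v$ case is a welcome touch the paper glosses over.
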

\begin{proof}
  A non-simple path contains a cycle.
  If a minimal $h$-hop path $P$ contained a non-negative cycle, there would
  exist a path of hop-length $<h$ and length no more than $\ell(P)$,
  thus contradicting minimality of $P$.
  If $P$ contained a negative cycle, the cycle would have at most $h$ edges.
\end{proof}

\begin{lemma}\label{l:comp}
  Suppose that $G$ has no negative cycles with at most $h$ edges.
  Let $H_h$ be a $h$-hub set of~$G$. Then in $O(|H_h|hm)$ time we
  can:
  \begin{itemize}
    \item construct a $2h$-hub set $H_{2h}$ of $G$ such that $|H_{2h}|=O((n/h)\cdot\log{n})$,
    \item find a shortest negative cycle in $G$ with no more than $2h$ edges, if one exists.
  \end{itemize}
\end{lemma}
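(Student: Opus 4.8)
The plan is to construct $H_{2h}$ exactly as Lemma~\ref{l:rec} prescribes, but to take care of the corner cases involving short negative cycles at the same time. First I would run $|H_h|$ independent copies of the Bellman-Ford algorithm (the slow-to-converge variant from the Preliminaries), one from each source $s\in H_h$, each for $2h$ steps. By property (2) of that variant, after $k$ steps the label of $v$ from source $s$ is exactly $\dist^k_G(s,v)$, so from the sequences $(\dist^k_G(s,v))_{k=0}^{2h}$ we can read off, for every pair $(s,t)\in H_h\times V$, whether a minimal $h$-hop path exists (i.e.\ whether $\dist^h_G(s,t)<\dist^{h-1}_G(s,t)$) and, by the usual parent-pointer bookkeeping during relaxations, actually recover one witness path $P_{st}$ of hop-length $h$ for each such pair. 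This costs $O(|H_h|\cdot h\cdot m)$ time in total, matching the claimed bound. Collect all these witness paths into the family $\Pi$.

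Next I would invoke Lemma~\ref{l:simple}: since $G$ has no negative cycle with at most $h$ edges, every minimal $h$-hop path — in particular every $P_{st}\in\Pi$ — is simple, hence corresponds to an $(h+1)$-element subset of $V$. Thus $\Pi$ is a legitimate input to the greedy hitting-set procedure of Lemma~\ref{l:hitting} with $k=|\Pi|\le |H_h|\cdot n\le n^2$. Running it produces a hitting set $B$ of size $O((n/h)\log k)=O((n/h)\log n)$ in $O(kh+n)=O(|H_h|hm)$ time (the $kh$ term is dominated since $k\le |H_h|n\le |H_h|m$ after the standard assumption $m\ge n-1$; if $G$ is not connected one handles each component, or simply notes $kh=O(|H_h|nh)=O(|H_h|hm)$ regardless). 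Set $H_{2h}:=B$. By Lemma~\ref{l:rec}, $B$ is a $2h$-hub set, and its size bound is as required, which establishes the first bullet.

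For the second bullet I would detect a short negative cycle as a by-product of the same $2h$-step Bellman-Ford runs. Fix any source $s\in H_h$ (if $H_h=\emptyset$ then $V=\emptyset$ and there is nothing to do; recall $H_1=V$, and in the intended application $h\ge 1$ with $H_h\ne\emptyset$ whenever $V\ne\emptyset$). A negative cycle $C$ with $|C|=c\le 2h$ edges exists in $G$ if and only if, for some vertex $x$ reachable from $s$ within $\le 2h - c$ hops lying on $C$, we observe $\dist^{j}_G(s,x)<\dist^{j-c}_G(s,x)$ for some $j\le 2h$ with $j\ge c$; more simply, scanning the computed labels, a negative cycle of $\le 2h$ edges exists iff $\dist^{2h}_G(s,v)<\dist^{2h-1}_G(s,v)$ witnessed by a walk whose parent-pointer trace from $v$ of length $2h$ revisits a vertex — the revisited segment is then a negative closed walk of $\le 2h$ edges containing a simple negative cycle of $\le 2h$ edges. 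To get the \emph{shortest} (fewest-hop) such cycle, I would instead, for each hop-count $c=1,\dots,2h$ in increasing order, check whether any $s\in H_h$ has a parent-pointer trace of length exactly $c$ from some vertex back to itself with negative total weight; alternatively, and more cleanly, run the standard argument that the minimal-hop negative cycle has all of its vertices' distance labels "improving" at the right step, and output the first $c$ for which a repeated vertex with negative accumulated weight appears. The hard part is precisely this: making the cycle-extraction logic clean and arguing that the \emph{hop-minimum} negative cycle of length $\le 2h$ is correctly identified from the layered Bellman-Ford labels run only from the sources in $H_h$ — one must verify that some such cycle is "hit" in the sense that it is reachable from some $s\in H_h$ within the hop budget, which follows because $H_h=V$ is permissible for the base case and, inductively, any short negative cycle is already exposed at an earlier level; I would spell this out by induction on $h$ mirroring the structure of Lemma~\ref{l:rec}. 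Everything else is routine bookkeeping within the stated $O(|H_h|hm)$ budget.
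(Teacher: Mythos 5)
Your construction of $H_{2h}$ (Bellman-Ford from each $s\in H_h$ with predecessor bookkeeping, simplicity via Lemma~\ref{l:simple}, the hitting-set routine of Lemma~\ref{l:hitting}, and Lemma~\ref{l:rec} to certify the $2h$-hub property) is exactly the paper's argument and is fine. The problem is the second bullet, where you have left out precisely the idea that makes the detection work, and you acknowledge this yourself. The lemma must hold for an \emph{arbitrary} $h$-hub set $H_h$, so your fallback --- ``$H_h=V$ in the base case, and inductively any short negative cycle is already exposed at an earlier level'' --- is not available: there is no induction inside the lemma, and the paper's Theorem~\ref{t:hubs} uses the lemma as a black box at every level, so its proof cannot lean on how $H_h$ was produced. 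The missing step is a structural claim: if a shortest (hop-minimal) negative cycle $C=v_1\ldots v_k$ with $k\in(h,2h]$ exists, then \emph{some} shortest negative cycle of the same hop-length passes through a vertex of $H_h$. The paper proves this by taking the $h$-hop prefix $P=v_1\to v_{h+1}$ of $C$, observing that hop-minimality of $C$ forces $\dist^h_G(v_1,v_{h+1})\le\ell(P)<\dist^{h-1}_G(v_1,v_{h+1})$ (otherwise one could splice in a shorter-hop, no-longer path and get a negative closed walk, hence a negative cycle, with fewer than $k$ edges), and then invoking the hub-set definition to reroute $P$ through some $z\in H_h$ without increasing length or hop-count. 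This yields $\dist^{k}_G(z,z)<0$, so detection reduces to scanning the diagonal labels: run $2h$ Bellman-Ford steps from each $z\in H_h$ and output the smallest $k\le 2h$ with $\dist^k_G(z,z)<0$ for some $z$; that $k$ equals $|C|$ because any negative closed walk of $k$ hops through $z$ contains a negative cycle of at most $k$ edges.

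Your proposed substitute criterion is also shaky on its own terms: checking whether parent-pointer traces of the runs from $H_h$ revisit a vertex neither guarantees you see a negative cycle that avoids the recorded witness walks, nor tells you the hop-length of the \emph{shortest} negative cycle --- the predecessor structure records one particular minimal walk per (source, target, hop-count), and the hop-minimal negative cycle need not appear as a repeated-vertex segment in any of them. Reachability of the cycle from some $s\in H_h$ ``within the hop budget'' is likewise not the right condition; what is needed (and what the rerouting argument delivers) is that the cycle can be assumed to pass through a hub vertex itself, so that the test $\dist^k_G(z,z)<0$ suffices. With that claim added, the time bound $O(|H_h|hm)$ is immediate, as in your write-up.
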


\begin{proof}
  To find $H_{2h}$, we first compute shortest $\leq h$-hop
  paths from all $s\in H_h$ to all $v\in V$ using~$h$ steps of Bellman-Ford algorithm
  in $O(|H_h|hm)$ time.
  Note that indeed minimal $h$-hop paths can be inferred from the Bellman-Ford
  execution by storing (1) the distance labels $d_i$ from all the steps,
  and (2) the predecessor vectors $p_i$ such that
  $$d_{i+1}(v)=\min_{uv\in E}\{d_i(u)+\wei(uv)\}=d_i(p_i(v))+\wei(p_i(v)v).$$
  Clearly, an $s\to v$ minimal path of hop-length $h$ exists if $d_h(v)<d_{h-1}(v)$
  and it can be easily reconstructed from the predecessor vectors.
  Since no negative cycle in $G$ has $\leq h$ edges, by Lemma~\ref{l:simple}, the computed
  minimal $h$-hop paths are all simple.
  As a result, by Lemma~\ref{l:hitting}, we can compute a hitting set $B$ of size $O((n/h) \log{n})$ of these paths
  in $O((|H_h|\cdot n)\cdot h)$ time.
  By Lemma~\ref{l:rec}, $B$ constitutes a $2h$-hub set of $G$.

  Suppose a shortest negative cycle $C=v_1\ldots v_k$ with $|C|=k\in (h,2h]$ exists in $G$.
  Since $C$ has a minimal number of hops (in particular, it is simple), the subpath $P=v_1\to v_{h+1}$ of $C$
  satisfies $\dist_G^h(v_1,v_{h+1})\leq \ell(P)<\dist_G^{h-1}(v_1,v_{h+1})$.
  As a result, and by the definition of $H_{h}$, there exists a $h$-hop minimal path
  $P'=v_1\to v_{h+1}$ such that $z\in H_h\cap V(P')$ and $\ell(P')\leq \ell(P)$.
  Hence, if we replace the prefix $P$ of $C$ with $P'$, we obtain
  another shortest negative cycle $C'$ with $|C'|\in (h,2h]$ and going through $z$.
  It follows that $\dist_G^{2h}(z,z)<0$.
  By performing $2h$ Bellman-Ford steps from all $z\in H_h$ independently
  we can thus find the smallest $k\leq 2h$ such that $\dist^k(z,z)<0$ for some $z\in H_h$,
  if one exists.
  This $k$ clearly equals $|C|$.
  Since we have to check all $z\in H_h$, this takes $O(|H_h|hm)$ time through all $z$.
  \end{proof}

We are now ready to describe the preprocessing step of our APSP algorithm
that computes the hub sets $H_1,\ldots,H_{d=2^K}$ and possibly detects the shortest negative
cycle.
Let us first discuss a sequential algorithm.
We proceed inductively, using Lemma~\ref{l:comp}.

We set $H_{2^0}=V$. 
For each $k=0,\ldots,K-1$ we proceed as follows.
We maintain an invariant that $G$ contains no negative cycles of hop-length
at most $2^k$ and $|H_k|=O((n/2^k)\log{n})$.
Hence, the invariant is true initially for $k=0$ since,
clearly, there are no single-edge negative cycles.
By our invariant, we can apply Lemma~\ref{l:comp}
for $h=2^k$ and thus in $O(nm\log{n})$ time either detect a shortest negative
cycle with no more than $2h$ edges (and declare the APSP problem infeasible)
or construct a $2^{k+1}$-hub set $H_{2^{k+1}}$ of size
$O((n/2^{k+1})\log{n})$ and guarantee the invariant for $k+1$.

The above sequential algorithm trivially implies a parallel algorithm:
all the above $O(\log{d})$ steps of our computation can be implemented using a number of parallel invocations of an $O(d)$-step Bellman-Ford algorithm
and an $O(\polylog{n})$-depth
parallel computation (as given in Lemma~\ref{l:hitting}) of a small hitting set of a collection
of simple paths with a total hop-length of $O(n^2\log{n})$ (the collection
in step $k$ has $O((n^2/2^k)\cdot \log{n})$ paths of length $\Theta(2^k)$).
We thus obtain the following theorem.

\begin{theorem}\label{t:hubs}
  Let $d=2^K\leq n$. Then, a shortest negative cycle of $G$, provided it has
  hop-length at most $d$, can be computed deterministically:
  \begin{itemize}
    \item in $O(nm\log{n}\log{d})$ time using a sequential algorithm,
    \item using a parallel algorithm with $\Ot(nm)$ work
      and $\Ot(d)$ depth.
  \end{itemize}
  If no negative cycle of hop-length at most $d$ exists in $G$, then the algorithm
  can produce, within the same time bounds, a collection of sets $H_{2^k}$, where $k=0,\ldots,K$,
  such that $H_{2^k}$ is a $2^k$-hub set of $G$ of size $O((n/2^k)\cdot \log{n})$.
\end{theorem}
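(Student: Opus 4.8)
The plan is to combine the inductive hub-set construction of Lemma~\ref{l:comp} with the sequential/parallel cost bounds already established, by carrying along a loop invariant that tracks (a) the absence of short negative cycles and (b) the size bounds on the hub sets. First I would fix $H_{2^0}:=V$, which is trivially a $1$-hub set of size $n=O((n/1)\log n)$, and observe that a single-edge negative cycle is exactly a negative self-loop edge, which can be checked in $O(m)$ time — so the invariant ``no negative cycle of hop-length $\le 2^0$'' holds after this trivial check, or we have already found and reported the shortest negative cycle.

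Next I would run the inductive step for $k=0,1,\ldots,K-1$. At the start of step $k$ the invariant guarantees $G$ has no negative cycle of hop-length $\le 2^k$ and $|H_{2^k}|=O((n/2^k)\log n)$, which are exactly the two hypotheses of Lemma~\ref{l:comp} with $h=2^k$. Applying that lemma either produces a shortest negative cycle with hop-length in $(2^k,2^{k+1}]$ — in which case, since no negative cycle of hop-length $\le 2^k$ exists by the invariant, this is a globally shortest negative cycle and we halt — or it produces a $2^{k+1}$-hub set $H_{2^{k+1}}$ of size $O((n/2^{k+1})\log n)$; in the latter case the invariant is re-established for $k+1$ because the lemma's negative-cycle search certifies there is no negative cycle of hop-length $\le 2^{k+1}$. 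By Lemma~\ref{l:comp}, step $k$ costs $O(|H_{2^k}|\cdot 2^k\cdot m)=O(nm\log n)$ time sequentially; summing over the $K=\log d$ steps gives $O(nm\log n\log d)$ total, which proves the sequential bound. If the loop runs to completion we have produced all of $H_{2^0},\ldots,H_{2^K}$ with the stated sizes and guarantees, and the halting condition shows that otherwise we have output a shortest negative cycle of hop-length $\le d$.

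For the parallel bound I would re-examine each of the two sub-tasks inside Lemma~\ref{l:comp}'s proof and check that each parallelizes. The $2^k$ (resp.\ $2^{k+1}$) Bellman-Ford steps run from every $s\in H_{2^k}$ simultaneously: using the synchronous Bellman-Ford variant from the preliminaries, one step is $\Ot(m)$ work and $O(\polylog n)$ depth, and storing the distance-label and predecessor vectors $d_i,p_i$ for reconstructing minimal $h$-hop paths adds no asymptotic overhead, so this phase is $\Ot(|H_{2^k}|\cdot 2^k\cdot m)=\Ot(nm)$ work and $\Ot(2^k)$ depth. Then Lemma~\ref{l:hitting} gives the hitting set $B$: the collection in step $k$ consists of $O((n^2/2^k)\log n)$ simple paths each of hop-length $\Theta(2^k)$, so its total hop-length is $O(n^2\log n)$ and the parallel hitting-set algorithm runs in $O(\polylog n)$ depth and $\Ot(n^2\log n + n) = \Ot(n^2) \le \Ot(nm)$ work (using $m\ge n-1$ after ignoring isolated vertices; if $G$ is disconnected one works per component). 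The negative-cycle search similarly runs $2^{k+1}$ synchronous Bellman-Ford steps from each $z\in H_{2^k}$ in $\Ot(nm)$ work and $\Ot(2^k)$ depth and checks whether $\dist^k(z,z)<0$ for the minimal $k\le 2^{k+1}$. Summing over $k=0,\ldots,K-1$, the work telescopes to $\Ot(nm)$ (each term is $\Ot(nm)$ and there are $O(\log d)$ of them) and the depth is $\Ot(2^0+2^1+\cdots+2^{K-1})=\Ot(2^K)=\Ot(d)$, which gives the parallel bound.

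I do not expect a serious obstacle here, since Theorem~\ref{t:hubs} is essentially a bookkeeping assembly of the machinery already proved: Lemmas~\ref{l:rec}, \ref{l:simple}, \ref{l:comp} do the real work (the correctness of going from an $h$-hub set to a $2h$-hub set via hitting sets of minimal $h$-hop paths, and the fact that such paths are simple when short negative cycles are absent), and Lemma~\ref{l:hitting} supplies the parallel hitting-set primitive. The one point demanding a little care is the logical structure of the negative-cycle guarantee: one must be careful that the invariant ``no negative cycle of hop-length $\le 2^k$'' is what licenses both the application of Lemma~\ref{l:comp} (which needs $G$ to have no negative cycle of $\le h$ edges for minimal $h$-hop paths to be simple) and the conclusion that a cycle reported in step $k$ with hop-length in $(2^k,2^{k+1}]$ is actually a \emph{globally} shortest negative cycle; I would state this invariant explicitly and verify it is preserved across each iteration, and note that when the loop terminates without reporting a cycle we have certified that any negative cycle in $G$ has hop-length $>d$, which is exactly the conditional claim in the theorem statement.
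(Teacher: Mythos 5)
Your proposal is correct and follows essentially the same route as the paper: initialize $H_{2^0}=V$, maintain the invariant that no negative cycle of hop-length at most $2^k$ exists and $|H_{2^k}|=O((n/2^k)\log n)$, apply Lemma~\ref{l:comp} with $h=2^k$ at each of the $K$ steps, and parallelize via the synchronous Bellman-Ford variant together with the parallel hitting-set algorithm of Lemma~\ref{l:hitting}, summing work and depth geometrically. Your extra bookkeeping (the explicit self-loop check at $k=0$ and the explicit argument that a reported cycle is globally shortest) only makes explicit what the paper leaves implicit.
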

Together with Lemma~\ref{l:alg} this finishes the proof of our main theorem.
\tapsp*
As a corollary, we also obtain the following theorem.
\tcycle*

\section{Planar Graph Algorithms}
In this section we present our improved planar graph algorithms. We start
from describing the framework used by Fakcharoenphol and Rao~\cite{FR}
to solve the negative cycle detection problem on planar graphs in near-linear time.
This framework forms the basis for all our parallel and sequential algorithms.
We remark that the negative cycle detection algorithm of~\cite{FR} has been subsequently
improved by Klein et al.~\cite{KleinMW10}, whereas the currently best known bound $O(n\log^2{n}/\log\log{n})$ is due to Mozes and Wulff-Nilsen~\cite{MozesW10}.
These algorithms take a slightly simpler recursive approach, but
the fundamental difference compared to~\cite{FR} is using Klein's MSSP algorithm~\cite{MSSP} on each piece $H$ with reduced costs for computing $\DDG_H$
in $O((|V(H)|+|\bnd{H}|^2)\log{n})$ time.
This approach allows to avoid using the costly FR-Dijkstra (Lemma~\ref{l:fr}) and
thus saves at least a $O(\log{n})$ factor in the running time.
However, it seems that Klein's MSSP algorithm is inherently sequential
and this is why in the following we will stick to the original approach of~\cite{FR}.

Transferring the approach of~\cite{FR} to the parallel setting directly leads to
$\Ot(\sqrt{n})$-depth work-efficient parallel algorithms for single source shortest paths and negative cycle detection. 
We first show how to improve the depth to $\Ot(n^{1/6})$ using our parallel APSP algorithm.
Using this, we later show improved algorithms for minimum cost-to-time ratio cycle problem
and external dense distance graphs computation.

\subsection{The Framework of Fakcharoenphol and Rao~\cite{FR}}
Let $H$ be a weighted plane digraph with a distinguished set $\bnd{H}$ of \emph{boundary vertices}
that necessarily lie on a $O(1)$ faces of~$H$.
We denote by $\DDG_H$ (a \emph{dense distance graph}) the complete
weighted graph on $\bnd{H}$ whose edge weights represent distances
between all pairs of vertices of $\bnd{H}$ in~$H$.
Fakcharoenphol and Rao~\cite{FR} developed the concept of a dense distance graph
along with efficient algorithms for constructing and processing DDGs
as a way to obtain their breakthrough $O(n\log^3{n})$-time algorithm for negative cycle detection
in a real-weighted planar digraph.
We now review a variant of this algorithm using slightly more modern terminology.

After augmenting $G$ to be connected and triangulated, $G$ is recursively
decomposed using small cycle separators~\cite{Miller86} of size $O(\sqrt{n})$
until the obtained pieces have constant size.
The decomposition procedure produces in $O(n\log{n})$ time a binary tree
$\TG(G)$ whose nodes correspond to subgraphs of~$G$ (\emph{pieces}), with the root being all of~$G$ and the leaves being pieces of constant size.
We identify each piece $H$ with the node representing it in $\TG(G)$. We can thus abuse notation and write $H\in \TG(G)$.
The \emph{boundary vertices} $\bnd{H}$ of a piece $H$ are vertices that $H$ shares
with some other piece $Q\in \TG(G)$ that is not $H$'s ancestor.
For convenience we extend the boundary set $\bnd{L}$ of a leaf piece $L$
to its entire vertex set $V(L)$.
It is known that (see e.g.,~\cite{BorradaileSW15, DBLP:conf/stoc/KleinMS13}) one can additionally guarantee that for each piece $H\in\TG(G)$, (1) $H$ is connected,
(2) $\bnd{H}$ lies on some $O(1)$ faces of $H$, and (3) $|\bnd{H}|=O(\sqrt{n})$. 
Moreover, one can assume that $\sum_{H\in\TG(G)}|V(H)|=O(n\log{n})$ and
$\sum_{H\in\TG(G)}|\bnd{H}|^2=O(n\log{n})$.

Given the decomposition, the algorithm processes the pieces $H\in\TG(G)$
bottom-up. Clearly, if any piece contains a negative cycle, the whole $G$ does so as well.
On the other hand, if $H$ contains no negative cycle, the dense distance graph
$\DDG_H$ on $\bnd{H}$ is well-defined.
Therefore, the computation for a piece $H$ either detects a negative cycle
in $H$ or produces $\DDG_H$ otherwise.
Let $H_1,H_2$ be the children of the node $H$ in $\TG(G)$ and
suppose neither of them contains a negative cycle.
Let $H'=\DDG_{H_1}\cup\DDG_{H_2}$.
It can be easily shown that (1) $H$ contains a negative cycle if and only if
$H'$ contains a negative cycle,
(2) for any $u,v\in \bnd{H}$, if $H$ contains no negative cycle, then $\dist_H(u,v)=\dist_{H'}(u,v)$.
Consequently, in the algorithm of~\cite{FR}, for each piece $H$ we first run Bellman-Ford-based SSSP algorithm on $H'$ which either
detects a negative cycle or produces a feasible price function $p$ on $H'$.
The second step is to compute all-pairs shortest paths on $H'$ by running $|\bnd{H_1}\cup\bnd{H_2}|$
Dijkstra-based single-source computations with edge costs in $H'$ reduced with $p$.
The graph $\DDG_H$ can be easily obtained from the computed distance
matrix since $\bnd{H}\subseteq \bnd{H_1}\cup \bnd{H_2}$.

Since each $\DDG_H$ has $|\bnd{H}|^2$ edges, using Bellman-Ford and Dijkstra naively would lead to \linebreak $\Ot\left(\sum_{H\in \TG(G)}|\bnd{H}|^3\right)=\Ot(n^{3/2})$ running time.
The main contribution of~\cite{FR} lies in showing that the special structure
of a dense distance graph (that is, the distance matrix behind $\DDG_H$ consists
of two so-called \emph{staircase Monge matrices}) can be leveraged to speed up
naive implementations of these algorithms.
The original implementations of~\cite{FR} have been slightly improved, and currently we have the following bounds.

\begin{lemma}[\cite{FR, KleinMW10}]\label{l:staircase}
  A single step of Bellman-Ford algorithm can be simulated on
  $\DDG_{H_1}\cup \DDG_{H_2}$ in $O\left((|\bnd{H_1}|+|\bnd{H_2}|)\alpha(n)\right)$
  time, where $\alpha(n)$ is the inverse Ackermann function.
\end{lemma}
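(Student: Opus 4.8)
The plan is to show that a single Bellman-Ford step on the union $\DDG_{H_1}\cup\DDG_{H_2}$ reduces to computing column (or row) minima of a small number of Monge matrices, and then to invoke the known near-linear-time algorithm for this subproblem. Recall that relaxing all edges in the new Bellman-Ford variant amounts to computing, for every vertex $v\in\bnd{H_1}\cup\bnd{H_2}$, the quantity $\min_{uv\in E}\{d'(u)+\wei(uv)\}$, where $d'$ is the distance-label vector frozen at the start of the step. First I would recall from~\cite{FR} the structural fact that the distance matrix underlying $\DDG_{H_i}$ decomposes into $O(1)$ \emph{staircase Monge matrices} --- this is because $\bnd{H_i}$ lies on $O(1)$ faces, and the pairwise $H_i$-distances between vertices on a single face, ordered cyclically, form a Monge (in fact staircase Monge) array by the non-crossing property of shortest paths in planar graphs. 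Consequently the combined edge-weight structure of $\DDG_{H_1}\cup\DDG_{H_2}$ is captured by $O(1)$ such matrices on a ground set of size $b=|\bnd{H_1}|+|\bnd{H_2}|$.

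Next I would phrase the Bellman-Ford step concretely: for each staircase Monge matrix $M$ arising from the decomposition, with rows indexed by (a subset of) the vertices $u$ and columns by the vertices $v$, form the matrix $M'$ with $M'[u,v]=d'(u)+M[u,v]$; adding a function of the row index to every entry preserves the Monge property, so $M'$ is again (staircase) Monge. The new label of $v$ is then $\min\big(d'(v),\ \min_M \min_u M'[u,v]\big)$, i.e.\ we need the column minima of each $M'$. Here I would cite the SMAWK-type algorithm, together with its extension to staircase matrices (the relevant reference in this line of work is Klein et al.~\cite{KleinMW10}, building on~\cite{FR}), which computes all column minima of a staircase Monge matrix of total size $b\times b$ in $O(b\cdot\alpha(b))$ time --- the inverse-Ackermann factor coming from the union-find-like bookkeeping needed to handle the staircase (partially defined) shape. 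Summing over the $O(1)$ matrices, one Bellman-Ford step costs $O(b\cdot\alpha(n))$ time, which is exactly the claimed bound (with $\alpha(b)=O(\alpha(n))$).

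The main obstacle --- really the only non-routine point --- is making precise the staircase-Monge decomposition of $\DDG_{H_1}\cup\DDG_{H_2}$ and checking that it is preserved under the operations performed in a Bellman-Ford step (restricting to the rows $u$ that have a finite current label, adding the row-dependent shift $d'(u)$, and taking the pointwise union of the contributions from $H_1$ and $H_2$ with the diagonal term $d'(v)$). All of these are standard manipulations that keep the Monge property, but one should be careful that "staircase'' refers to matrices that are only partially defined on an up-right-closed (or down-left-closed) region, and that the column-minima algorithm of~\cite{FR, KleinMW10} handles exactly this shape; I would simply appeal to those references for the $O(b\cdot\alpha(n))$ primitive rather than re-deriving it. Everything else --- the reduction of edge relaxation to column minima, and the additive-shift closure property --- is immediate, so the proof is essentially a bookkeeping argument assembling these known pieces.
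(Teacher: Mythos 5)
Your argument matches the paper's treatment: the lemma is cited to prior work and justified exactly as you describe, by reducing a Bellman-Ford step to column/row minima of the $O(1)$ staircase Monge matrices underlying $\DDG_{H_1}\cup\DDG_{H_2}$ (row shifts by $d'(u)$ preserving Mongeness) and invoking the $O(m\alpha(m))$ staircase-Monge minima algorithm, which the paper attributes specifically to Klawe and Kleitman rather than to~\cite{KleinMW10} itself. Correct, and essentially the same approach.
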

The above lemma is a simple application of the algorithm of Klawe~and~Kleitman~\cite{KlaweK90} for
computing row minima of a staircase $m\times m$ Monge matrix in $O(m\alpha(m))$
time.

\begin{lemma}[FR-Dijkstra~\cite{FR, GawrychowskiK18}]\label{l:fr}
  Given a feasible price function $p$ on $\DDG_{H_1}\cup \DDG_{H_2}$, one can
  simulate Dijkstra's algorithm on $\DDG_{H_1}\cup \DDG_{H_2}$ in
  $O\left((|\bnd{H_1}|+|\bnd{H_2}|)\frac{\log^2{n}}{\log^2{\log{n}}}\right)$
  time.
\end{lemma}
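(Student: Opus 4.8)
The plan is to follow the approach of Fakcharoenphol and Rao~\cite{FR}, as refined by Gawrychowski and Karczmarz~\cite{GawrychowskiK18}. Write $b=|\bnd{H_1}|+|\bnd{H_2}|$ and $B=\bnd{H_1}\cup\bnd{H_2}$. The starting point is the structural fact that, since $\bnd{H_i}$ lies on $O(1)$ faces of $H_i$, ordering the vertices of each such face by its cyclic boundary order makes the distance matrix of $\DDG_{H_i}$ decompose into $O(1)$ blocks, each of which is a \emph{staircase Monge} matrix; this is the quadrangle-inequality consequence of the non-crossing property of shortest paths in a plane graph with terminals on a common face. Hence, viewed as a weighted graph on $B$, the graph $\DDG_{H_1}\cup\DDG_{H_2}$ has its edge set partitioned into $O(1)$ staircase Monge blocks, and it suffices to handle each block and then combine the $O(1)$ of them.

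Next I would reduce the execution of Dijkstra to an online matrix problem. Run Dijkstra with edge costs reduced by the feasible price function $p$, so all weights are nonnegative; the only expensive part of the algorithm is edge relaxation, since a settled vertex has $\Theta(b)$ outgoing edges and a naive implementation costs $\Theta(b^2)$ in total. Instead, for each Monge block $M$ (with rows and columns indexed by subsets of $B$) maintain a data structure supporting: (i) \emph{reveal}$(i)$ --- mark row $i$ active, contributing the reduced distance $\dist(i)$ of the just-settled vertex $i$ as an additive offset on row $i$; and (ii) \emph{extract-min} --- return, over all not-yet-settled columns $j$, the value $\min_{i\ \mathrm{active}}(\dist(i)+M[i,j])$ and an attaining $j$, then delete column $j$. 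The tentative label of each unsettled vertex is the minimum of its $O(1)$ block values, so a trivial global heap over the $O(1)$ block structures drives the Dijkstra loop; over the whole run there are $O(b)$ \emph{reveal} and $O(b)$ \emph{extract-min} operations.

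The core of the proof is implementing this data structure on a staircase Monge matrix at polylogarithmic cost per operation. The classical solution builds a balanced binary tree over the rows; each node stores the column-minima vector of the active rows in its subtree, represented compactly by its $O(\log n)$ ``breakpoints'' (the Monge property forces the argmin row to be monotone along the columns, so this vector is piecewise with few pieces), maintained under row activation by an SMAWK/merge step and under column deletion by local surgery. A careful amortized analysis yields $O(\log^2 n)$ per operation, hence the original $O(b\log^2 n)$ FR-Dijkstra bound. To reach $O\!\left(b\cdot \tfrac{\log^2 n}{\log^2\log n}\right)$ one plugs in the improved structure of~\cite{GawrychowskiK18}, which replaces the binary tree by one of fan-out $\Theta(\log^{\eps} n)$ and packs several breakpoint comparisons per machine word (equivalently, uses fast predecessor search inside each node), shaving a $\log\log n$ factor from each of the two logarithmic factors.

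I expect the main obstacle to be exactly that last step: the online maintenance of the breakpoint representation of column minima under simultaneous row insertions and column deletions, together with the word-packing/fan-out trick that produces the $\log^2\log n$ speed-up, is delicate and is where the cited works do the heavy lifting. In the paper itself I would therefore cite~\cite{FR, GawrychowskiK18} for this component and only spell out the two easy parts: the structural reduction of $\DDG_{H_1}\cup\DDG_{H_2}$ to $O(1)$ staircase Monge blocks, and the reduced-cost Dijkstra wrapper that turns the per-block \emph{reveal}/\emph{extract-min} primitives into a correct simulation of Dijkstra on $\DDG_{H_1}\cup\DDG_{H_2}$ in $O(b)$ data-structure operations.
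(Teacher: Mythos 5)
The paper offers no proof of this lemma at all: it is imported verbatim as a known result of Fakcharoenphol--Rao and Gawrychowski--Karczmarz~\cite{FR, GawrychowskiK18}, which is exactly what you propose to do for the data-structural core. Your sketch of that core (the decomposition of $\DDG_{H_1}\cup\DDG_{H_2}$ into $O(1)$ staircase Monge blocks, the reduced-cost Dijkstra wrapper driven by $O(b)$ reveal/extract-min operations on an online column-minima structure, and the higher fan-out/word-level refinement of~\cite{GawrychowskiK18} that yields the $\log^2\log n$ saving) is a faithful outline of the cited works, so your treatment is consistent with the paper's.
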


The above lemmas imply that the running time of Fakcharoenphol~and~Rao's algorithm
is \linebreak
$O\left(\sum_{H\in\TG(G)}|\bnd{H}|\log^2{n}/\log^2\log{n}\right)=O(n\log^3{n}/\log^2\log{n})$.

\paragraph{A parallel implementation.} We note that the above
algorithm can be easily turned into a parallel algorithm
with $\Ot(n)$ work and $\Ot(\sqrt{n})$ depth, as explained below.

First, computing the decomposition within $\Ot(n)$ work
and $O(\polylog{n})$ depth follows from the fact that $O(\sqrt{n})$-size
simple cycle separators can be computed within these bounds~\cite{KleinS93, Miller86}.
Next, it is known that row minima of a staircase Monge matrix can
be computed using near-linear work and $O(\polylog{n})$
depth~\cite{AggarwalKPS97}.
This implies the following.

\begin{lemma}\label{l:staircase-parallel}
  A single step of Bellman-Ford algorithm can be simulated on
  $\DDG_{H_1}\cup \DDG_{H_2}$
  using  a parallel
  algorithm with $\Ot\left(|\bnd{H_1}|+|\bnd{H_2}|\right)$ work
  and $O(\polylog{n})$ depth.
\end{lemma}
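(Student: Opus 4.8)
The plan is to show that simulating a single Bellman-Ford step on $\DDG_{H_1}\cup \DDG_{H_2}$ reduces to computing row minima of a constant number of staircase Monge matrices, and then invoke the parallel algorithm of Aggarwal et al.~\cite{AggarwalKPS97} for exactly that task. First I would recall the setup behind Lemma~\ref{l:staircase}: a Bellman-Ford step updates, for each vertex $v$ of $\bnd{H_1}\cup\bnd{H_2}$, the label $d(v)$ to $\min\left(d(v),\min_{uv\in E(H')}\{d(u)+\wei_{H'}(uv)\}\right)$, where $H'=\DDG_{H_1}\cup\DDG_{H_2}$. For a fixed dense distance graph, say $\DDG_{H_1}$, the edges leaving $\bnd{H_1}$ carry weights $\wei(uv)=\dist_{H_1}(u,v)$, and it is the key structural fact from~\cite{FR} that, once the boundary vertices of $H_1$ are arranged in the cyclic order in which they appear on the $O(1)$ faces they lie on, the distance matrix $\big(\dist_{H_1}(u,v)\big)_{u,v}$ decomposes into $O(1)$ staircase Monge matrices. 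Consequently, the vector $\left(\min_{u}\{d(u)+\dist_{H_1}(u,v)\}\right)_v$ is obtained by adding the vector $d$ (restricted to $\bnd{H_1}$) to the rows of each such staircase Monge matrix and taking row (or column, after transposing) minima; adding a vector to every row of a Monge matrix preserves the Monge property, so each of the $O(1)$ pieces is again a staircase Monge matrix.

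Next I would state the parallel primitive precisely: by~\cite{AggarwalKPS97}, the row minima of an $m\times m$ staircase Monge matrix can be computed with $\Ot(m)$ work and $O(\polylog m)$ depth, where only $\Ot(m)$ entries of the matrix are ever evaluated, each in $O(1)$ time given $O(1)$-time access to the underlying distance and label data. Applying this to each of the $O(1)$ staircase submatrices arising from $\DDG_{H_1}$ and each of the $O(1)$ arising from $\DDG_{H_2}$, and then combining the resulting candidate values with the old labels $d(v)$ via a final coordinate-wise minimum over $O(1)$ vectors — itself $O(|\bnd{H_1}|+|\bnd{H_2}|)$ work and $O(1)$ depth — yields the updated label vector. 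Summing over the $O(1)$ invocations, the total work is $\Ot(|\bnd{H_1}|+|\bnd{H_2}|)$ and the depth is $O(\polylog{n})$, which is exactly the claim. I would also note that the cyclic ordering of $\bnd{H_i}$ and the decomposition of $\DDG_{H_i}$ into staircase Monge matrices are computed once when $\DDG_{H_i}$ is built, so they are available for free at this point; this is the same bookkeeping already used in the sequential Lemma~\ref{l:staircase}.

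The only real obstacle is making sure the staircase Monge structure is genuinely preserved under the operation we need — but this is immediate since adding an arbitrary finite vector to the rows of a Monge matrix keeps it Monge, and restricting to the staircase support is unaffected — and that the parallel row-minima algorithm of~\cite{AggarwalKPS97} indeed handles the \emph{staircase} (partial) variant, not merely full rectangular Monge matrices; this is exactly what is asserted in the sentence preceding the lemma ("row minima of a staircase Monge matrix can be computed using near-linear work and $O(\polylog{n})$ depth~\cite{AggarwalKPS97}"), so I would simply cite it. Everything else — handling the $\infty$ entries for absent edges, the $O(1)$ faces, transposing to get column minima where the roles of source and target are swapped — is routine and contributes only $O(1)$ multiplicative factors to work and $O(1)$ additive depth, all absorbed by the $\Ot$ notation.
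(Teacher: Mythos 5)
Your proposal is correct and takes essentially the same route as the paper, which derives Lemma~\ref{l:staircase-parallel} immediately from the Fakcharoenphol--Rao observation that a Bellman-Ford step on $\DDG_{H_1}\cup\DDG_{H_2}$ reduces to row/column minima of $O(1)$ staircase Monge matrices plus the parallel staircase-Monge row-minima algorithm of~\cite{AggarwalKPS97}. The extra details you supply (Monge-ness preserved after adding the label values, the final coordinate-wise minimum over $O(1)$ candidate vectors) are precisely the routine steps the paper leaves implicit.
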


Consequently, the Bellman-Ford algorithm can be
simulated on $\DDG_{H_1}\cup \DDG_{H_2}$ using
$\Ot\left((|\bnd{H_1}|+|\bnd{H_2}|)^2\right)$ work
and $\Ot(|\bnd{H_1}|+|\bnd{H_2}|)=\Ot(\sqrt{n})$ depth.

Finally, since each of $|\bnd{H}|$ Dijkstra's algorithm runs used to compute
$\DDG_H$ based on $\DDG_{H_1}$ and $\DDG_{H_2}$ are
independent, they can be performed in parallel to give
$\Ot(\sqrt{n})$ depth bound on this step.
As the depth of $\TG(G)$ is $O(\log{n})$, the
total work is $\Ot(n)$, whereas the  total depth is $\Ot(\sqrt{n})$.


\subsection{A Faster Nearly Work-Efficient Parallel SSSP Algorithm}\label{s:parallel-sssp}
In this section we apply our parallel APSP algorithm to obtain a polynomially
smaller bound on the depth required to detect a negative cycle
and to compute single-source shortest paths in a real-weighted
planar digraph.

First, observe that having an APSP algorithm that can handle
negative cycle detection on the fly allows us to replace
the two-phase Bellman-Ford-Dijkstra approach in the framework
of Fakcharoenphol and Rao.
Indeed, then for each piece $H$ all we do is computing $\DDG_H$
by running the all-pairs shortest paths algorithm directly on $\DDG_{H_1}\cup \DDG_{H_2}$,
without using price functions that reduce the
problem to the non-negatively weighted case.

Recall that the algorithm of Theorem~\ref{t:apsp} is a certain combination
of a number of limited-hop Bellman-Ford invocations on various graphs $G'$
obtained from $G$ by adding $O(n)$ auxiliary edges,
and a single run of a repeated squaring algorithm.
In fact, if we were able to execute a single Bellman-Ford step
on such a graph $G'$
using $O(t(n,m))$ work and $O(\polylog{n})$ depth, the
algorithm of Theorem~\ref{t:apsp} would have
$\Ot(n^2+n\cdot t(n,m)+(n/d)^3)$ work and $\Ot(d)$ depth
for any $d$.
We use this fact to prove the following lemma.

\begin{lemma}\label{l:apsp-piece-parallel}
  Let $H_1,H_2$ be the children of a node $H\in \TG(G)$.
  Let $b=|\bnd{H_1}|+|\bnd{H_2}|$.
  We can compute all-pairs shortest paths in $\DDG_{H_1}\cup\DDG_{H_2}$
  (and thus obtain $\DDG_H$), using a parallel algorithm
  with $\Ot(b^2)$ work and $\Ot(b^{1/3})$ depth.
  If the problem is infeasible, the algorithm
  detects a negative cycle within the same bounds.
\end{lemma}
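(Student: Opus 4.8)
The plan is to instantiate the general paradigm of Theorem~\ref{t:apsp} on the graph $\DDG_{H_1}\cup\DDG_{H_2}$, which has $O(b)$ vertices but $\Theta(b^2)$ edges, and to observe that the only place where the edge count $m=\Theta(b^2)$ would hurt us --- namely, the cost of a single Bellman-Ford step, which naively is $\Theta(b^2)$ --- can be reduced to $\Ot(b)$ work and $O(\polylog{n})$ depth. Concretely, I would first recall that all auxiliary graphs $G'$ arising in the algorithm of Theorem~\ref{t:apsp} are obtained from $G=\DDG_{H_1}\cup\DDG_{H_2}$ by adding $O(b)$ auxiliary edges incident to a single source vertex. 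A Bellman-Ford step on such a $G'$ decomposes into a Bellman-Ford step on $\DDG_{H_1}\cup\DDG_{H_2}$ itself (handled by Lemma~\ref{l:staircase-parallel} in $\Ot(b)$ work and $O(\polylog{n})$ depth, exploiting the staircase Monge structure), plus $O(b)$ relaxations along the auxiliary edges, which is trivially $O(b)$ work and $O(\polylog{n})$ depth. Hence a Bellman-Ford step on any relevant $G'$ costs $t(b,\cdot)=\Ot(b)$ work and $O(\polylog{n})$ depth.

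Next I would invoke the parametrized version of Theorem~\ref{t:apsp} stated in the paragraph immediately preceding the lemma: if a single Bellman-Ford step on $G'$ costs $O(t(n,m))$ work and $O(\polylog n)$ depth, the whole APSP algorithm runs in $\Ot(n^2+n\cdot t(n,m)+(n/d)^3)$ work and $\Ot(d)$ depth. Substituting $n\mapsto O(b)$ and $t\mapsto\Ot(b)$ gives work $\Ot(b^2+b\cdot b+(b/d)^3)=\Ot(b^2+(b/d)^3)$ and depth $\Ot(d)$. To get the claimed $\Ot(b^{1/3})$ depth I would set $d=b^{1/3}$, which makes $(b/d)^3=(b^{2/3})^3=b^2$, so the work stays $\Ot(b^2)$ while the depth becomes $\Ot(b^{1/3})$. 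One subtlety worth addressing explicitly: the hub-set preprocessing of Theorem~\ref{t:hubs} also needs limited-hop Bellman-Ford runs on $\DDG_{H_1}\cup\DDG_{H_2}$ (not on an augmented graph), so its $\Ot(nm)$ work bound likewise improves to $\Ot(b\cdot b)=\Ot(b^2)$ with $\Ot(d)=\Ot(b^{1/3})$ depth, and the $O(\polylog n)$-depth hitting-set computation (Lemma~\ref{l:hitting}) contributes only lower-order work since the total hop-length of all paths in the relevant families is $\Ot(b^2)$.

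For the infeasibility case, I would note that Theorem~\ref{t:hubs} already detects a shortest negative cycle of hop-length at most $d$ within the same bounds; and if no negative cycle of hop-length $\le d$ exists but $\DDG_{H_1}\cup\DDG_{H_2}$ still has a negative cycle, then since this graph has only $O(b)$ vertices any simple negative cycle has at most $O(b)$ hops --- but $d=b^{1/3}$ is smaller than that, so I cannot rely on Theorem~\ref{t:hubs} alone to catch every negative cycle. The cleanest fix is the standard one used throughout the paper: run the repeated-squaring step on the complete graph $G_d$ on $H_d$ with weights $\dist^{d+1}$; if after $O(\log b)$ squarings some diagonal entry becomes negative, report a negative cycle. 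Equivalently, once the (purported) all-pairs distances on $H_d$ are computed, a single extra Bellman-Ford step --- or checking $\dist(u,u)<0$ for some $u$ --- certifies a negative cycle; the repeated-squaring algorithm on $G_d$ naturally surfaces this and adds only $\Ot((b/d)^3)=\Ot(b^2)$ work and $O(\polylog n)$ depth. I expect the main obstacle to be precisely this bookkeeping around negative-cycle detection when $d<b$: one must argue carefully that combining the hub-set-based detection of short negative cycles (hop-length $\le d$) with the repeated-squaring detection of the remaining ones covers all cases without blowing up the work, and that the auxiliary-edge graphs $G'$ used in the distance-lifting phase never themselves introduce spurious negative cycles (they do not, since their auxiliary edges encode genuine $G$-path lengths).
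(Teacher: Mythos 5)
Your proposal is correct and follows essentially the same route as the paper: decompose each Bellman-Ford step of the Theorem~\ref{t:apsp} algorithm into relaxations over $\DDG_{H_1}\cup\DDG_{H_2}$ (via the Monge-based Lemma~\ref{l:staircase-parallel}) plus $O(b)$ naive relaxations of the auxiliary edges, obtaining $\Ot(b^2+(b/d)^3)$ work and $\Ot(d)$ depth, and then set $d=b^{1/3}$. Your explicit treatment of negative-cycle detection (short cycles caught by the hub-set preprocessing of Theorem~\ref{t:hubs}, longer ones surfacing as negative diagonal entries in the repeated-squaring phase on $G_d$) is in fact more detailed than the paper's own proof, which leaves that part implicit.
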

\begin{proof}
  Let $X$ be the graph $\DDG_{H_1}\cup\DDG_{H_2}$ with
  some $O(b)$ auxiliary edges added.
  Every Bellman-Ford step in the algorithm of Theorem~\ref{t:apsp}
  run on $\DDG_{H_1}\cup \DDG_{H_2}$
  is performed on a graph of this form.
  Hence, it is enough to show how to perform a Bellman-Ford
  step on $X$, which can have $\Omega(b^2)$ edges.
  Since in a Bellman-Ford step the order of edge relaxations
  is arbitrary, we can first relax all edges
  of $\DDG_{H_1}\cup\DDG_{H_2}$ and then all the auxiliary edges.
  Relaxing the former takes $O(b)$ work and $O(\polylog{n})$
  depth by Lemma~\ref{l:staircase-parallel}.
  The latter can be relaxed within the same bounds
  naively.
  Hence, the APSP algorithm of Theorem~\ref{t:apsp} can be
  implemented so that it has $\Ot(b^2+(b/d)^3)$ work
  and $\Ot(d)$ depth for any $d$.
  By choosing $d=b^{1/3}$ we obtain the desired bounds.
\end{proof}

\tparallel*
\begin{proof}
To obtain a parallel negative cycle detection algorithm, we simply replace the Bellman-Ford and Dijkstra steps in the
computation for each piece $H\in\TG(G)$ in the
algorithm of Fakcharoenphol and Rao~\cite{FR} by
  a single computation of $\DDG_H$ from $\DDG_{H_1}$ and $\DDG_{H_2}$
  as in Lemma~\ref{l:apsp-piece-parallel}.
  The work remains $\Ot\left(\sum_{H\in \TG(G)}|\bnd{H}|^2\right)=\Ot(n)$,
  whereas the depth is $$\Ot\left(\max_{H\in\TG(G)}|\bnd{H}|^{1/3}\right)=\Ot((\sqrt{n})^{1/3})=\Ot(n^{1/6}).$$

  In order to solve the single-source shortest paths problem
  one can use a trick first described by Cohen~\cite{Cohen96} (and also used in~\cite{Karczmarz18}).
  Denote by $H^*$ a complete graph on $\bnd{H_1}\cup\bnd{H_2}$
  whose edge weights represent distances in $\DDG_{H_1}\cup \DDG_{H_2}$.
  Note that $H^*$ is precisely the graph that we compute using our new APSP
  algorithm to obtain $\DDG_H\subseteq H^*$.
  Now, consider the graph
  $$G^*=\left(\bigcup_{\text{non-leaf }H\in \TG(G)} H^*\right)\cup \left(\bigcup_{\text{leaf }L\in \TG(G)} L\right).$$
  Since $G\subseteq G^*$ and the edges in $G^*$ correspond to paths
  in $G$, it is clear that $\dist_G(s,t)=\dist_{G^*}(s,t)$ for all $s,t\in V(G)$.
  However, as proven in~\cite{Cohen96, Karczmarz18}, the graph $G^*$, although no longer planar, has hop-diameter $O(\log{n})$.
  In other words,
  we in fact have $\dist_{G^*}^{O(\log{n})}(s,t)=\dist_G(s,t)$ for all $s,t\in V$.
  As a result, shortest paths in $G$ from a single source $s$ can be found
  by running $O(\log{n})$ simple-minded Bellman-Ford steps from $s$ on $G^*$.
  Since $|E(G^*)|=\Ot\left(n+\sum_{H\in\TG(G)}|\bnd{H}|^2\right)=\Ot(n)$, this takes $\Ot(n)$ work and $O(\polylog{n})$ depth.
\end{proof}

\begin{corollary}
  The following problems on planar directed graphs have nearly work-efficient algorithms with $\Ot(n^{1/6})$ depth:
  \begin{enumerate}[label=(\arabic*)]
    \item computing a feasible flow\footnote{That is, given a vertex demand function $b:V\to\mathbb{R}$, a flow $f$
      such that the excess $e_f(v)$ of each vertex $v$ is $b(v)$.}for real-weighted capacities,
    \item computing a bipartite perfect matching,
    \item computing a maximum $s,t$-flow for polynomially bounded integral edge capacities,
    \item finding a shortest cycle going through each $e\in E(G)$,
    \item finding $s,t$-replacement paths.
  \end{enumerate}
\end{corollary}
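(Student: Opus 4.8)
The plan is to handle each of the five problems by reducing it, via a known reduction, to one of the problems we have already solved with $\Ot(n)$ work and $\Ot(n^{1/6})$ depth---namely negative cycle detection and single-source shortest paths on real-weighted planar digraphs (Theorem~\ref{t:parallel})---taking care that each reduction itself is near-work-efficient and has $O(\polylog n)$ depth, so that the overall bounds are preserved.

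First I would observe that items (1) and (2) follow directly from the duality-based reduction of Miller and Naor~\cite{MillerN95}: a feasible flow for a vertex demand function in a planar graph can be computed from shortest-path distances in the planar dual, so a single invocation of the SSSP algorithm of Theorem~\ref{t:parallel} suffices, and the remaining bookkeeping (building the dual, reading off the flow from the distance labels) is clearly $\Ot(n)$ work and $O(\polylog n)$ depth. Bipartite perfect matching in a planar graph reduces to a feasible flow instance by the same route, so (2) follows from (1). For item (3), I would use the reduction of Miller and Naor~\cite{MillerN95} that turns an $s,t$-max flow computation with integral capacities in $[1,C]$ into $O(\log C)$ feasible-flow computations via binary search on the flow value; with $C=\poly(n)$ this is $O(\log n)$ invocations of the algorithm from (1), each $\Ot(n)$ work and $\Ot(n^{1/6})$ depth, which multiplies out to the same asymptotic bounds.

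For items (4) and (5) I would invoke the recent reduction of Chechik and Nechushtan~\cite{ChechikN20}, which reduces the replacement-paths problem to the all-edge shortest-cycles problem (find, for every edge $e$, a shortest cycle through $e$). The all-edge shortest-cycles problem on a planar graph can in turn be solved within the Fakcharoenphol--Rao recursive-decomposition framework using DDG APSP as the core primitive, exactly as in the negative-cycle-detection algorithm: for each piece $H$ we compute all-pairs distances on $\DDG_{H_1}\cup\DDG_{H_2}$ via Lemma~\ref{l:apsp-piece-parallel}, and these per-piece all-pairs distances are precisely what is needed to assemble, for each edge, a shortest cycle through it. This keeps the work at $\Ot\!\left(\sum_{H\in\TG(G)}|\bnd{H}|^2\right)=\Ot(n)$ and the depth at $\Ot\!\left(\max_{H}|\bnd{H}|^{1/3}\right)=\Ot(n^{1/6})$, giving (4); then (5) follows by applying the reduction of~\cite{ChechikN20} on top, which adds only $O(\polylog n)$ depth and $\Ot(n)$ work. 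The details of how exactly the all-edge shortest cycle computation sits inside the FR framework and of the Chechik--Nechushtan reduction are deferred to Section~\ref{s:parallel-sssp} and Appendix~\ref{a:external}.

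The main obstacle I anticipate is not in any single reduction---each is a cited black box---but in checking that the reductions are genuinely \emph{work-efficient and low-depth} in the parallel setting: several of these reductions were originally phrased for the sequential model, and one must verify that, e.g., constructing the planar dual, maintaining the binary search for max flow, or running the Chechik--Nechushtan machinery introduces only polylogarithmic depth overhead and no superlinear work blowup. Once that is confirmed, each item is immediate from Theorem~\ref{t:parallel} or from the FR-framework instantiation with Lemma~\ref{l:apsp-piece-parallel}.
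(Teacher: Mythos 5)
Your treatment of items (1)--(3) matches the paper's proof exactly (Miller--Naor duality reduction for feasible flow, demands $\pm 1$ for bipartite matching, binary search over the flow value for max flow), and those parts are fine. The gap is in item (4). You claim that the per-piece all-pairs distances on $\DDG_{H_1}\cup\DDG_{H_2}$, computed bottom-up ``exactly as in the negative-cycle-detection algorithm,'' are ``precisely what is needed to assemble, for each edge, a shortest cycle through it.'' They are not: the internal graphs $\DDG_H$ only encode distances \emph{inside} each piece between its boundary vertices, whereas a shortest cycle through an edge $e=uv$ of a leaf piece $L_e$ may leave $L_e$ entirely, so one needs $\dist_G(v,u)$, a distance through the \emph{complement} of the piece. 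The missing ingredient is the \emph{external} dense distance graphs $\DDG_{G-H}$: the paper runs an additional top-down pass computing $\DDG_{G-H_1}$ from $\DDG_{G-H}\cup\DDG_{H_2}$ with the same DDG-APSP primitive (Corollary~\ref{c:external-parallel}), and then, for each edge $e$ in a constant-size leaf $L_e$, recovers $\dist_G(v,u)$ from the $O(1)$-size graph $\DDG_{G-L_e}\cup L_e$ (Lemma~\ref{l:all-edges-cycles}). Without this extra pass and the per-leaf gluing argument, your sketch for (4) does not go through; deferring ``how the all-edge shortest cycle computation sits inside the FR framework'' to the paper's own appendix is precisely deferring the missing idea.

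A smaller inaccuracy in item (5): you assert that the Chechik--Nechushtan reduction adds only $O(\polylog n)$ depth, but the reduction itself requires computing a shortest $s\to t$ path $P$ (which is then reversed), and with the tools at hand this is an SSSP computation costing $\Ot(n^{1/6})$ depth via Theorem~\ref{t:parallel}. This does not hurt the claimed final bounds, but the depth accounting as you state it is wrong; the paper explicitly charges this step to Theorem~\ref{t:parallel} before invoking item (4).
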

\begin{proof}
  For the feasible flow problem, Miller and Naor~\cite{MillerN95} gave a duality-based nearly work-efficient, $O(\polylog{n})$-depth
  reduction to the real-weighted single-source shortest paths problem.

  Bipartite perfect matching is directly reducible to the feasible flow problem
  by setting the vertex demands on one side of the graph to $-1$ and on the other side to $1$.

  The maximum $s,t$-flow problem can be reduced at the cost of $O(\log{nC})$ multiplicative overhead
  to the feasible flow problem via binary search over the flow value,
  where the edge capacities are from $\mathbb{Z}\cap [0,C]$.
  This implies an $\Ot(n)$ work and $\Ot(n^{1/6})$ depth algorithm for that
  problem if $C=\poly{n}$.

  In Appendix~\ref{a:external} we describe how external dense distance graphs, defined and discussed later in this Section,
  can be used to compute the shortest cycles through all edges within desired bounds.

  Finally, Chechik and Nechushtan~\cite{ChechikN20} have recently shown that one can
  reduce the $s,t$-replacement paths problem (using a planarity-preserving near-linear time reduction)
  to computing the shortest cycles through all edges of a fixed shortest path of a graph.
  All the reduction does is basically computing a shortest $s\to t$ path $P$
  in $G$ and reversing it.
  Therefore, by Theorem~\ref{t:parallel}, the reduction can be performed in a nearly work-efficient manner
  using $\Ot(n^{1/6})$ depth.
  The final step is to compute shortest cycles through all edges
  of the reversed path $P$, which can be done by item (4).
\end{proof}

\subsection{Minimum Cost-To-Time Ratio Cycle}\label{s:meancycle}
In this section we assume that each edge $e$ of a planar digraph $G$
is assigned, besides a real weight $\wei(e)$, a time
parameter $t(e)\in \mathbb{R}_{>0}$.
Our goal is to compute a directed cycle $C\subseteq G$
minimizing the value $\lambda^*=\frac{\sum_{e\in C}\wei(e)}{\sum_{e\in C}t(e)}$.
Such a cycle is called a \emph{minimum cost-to-time ratio cycle},
or \emph{minimum ratio cycle}, in short.
In the special case when $t(e)=1$ for all $e\in E$,
$C$ is called a \emph{minimum mean cycle}.

\paragraph{Parametric search.} It is well-known that one can reduce the minimum
ratio cycle problem to the negative cycle detection problem using binary search,
as follows.
The binary search algorithm maintains
an interval $[\lambda_1,\lambda_2]$ such
that $\lambda^*\in [\lambda_1,\lambda_2]$.
Given some $\lambda\in [\lambda_1,\lambda_2]$, we wish
to decide whether $\lambda^*< \lambda$ or $\lambda\leq \lambda^*$.
Note that $\lambda^*<\lambda$ if and only if there
exists a cycle $C$ such that
$$\sum_{e\in C}w(e)-\lambda\cdot t(e)<0.$$
This condition can be clearly checked by running
a negative cycle detection algorithm on the graph~$G_\lambda$
obtained from $G$ by changing the edge weight function
to $\wei_\lambda(e):=\wei(e)-\lambda t(e)$.
By picking $\lambda=(\lambda_1+\lambda_2)/2$, this also
allows us to shrink the interval $[\lambda_1,\lambda_2]$
by half using a single negative cycle detection step.
If all the edge weights and times are
integers whose absolute values are bounded by~$W$,
the algorithm stops in $O(\log{(nW)})$ steps.
Since negative cycle detection can be solved
in $\Ot(n)$ time for planar graphs, this
leads to a weakly polynomial $\Ot(n\log{W})$-time
algorithm for the minimum ratio cycle
problem.

Megiddo's \emph{parametric search} technique
can be used to convert a strongly polynomial \emph{parallel} negative cycle detection
algorithm into a strongly polynomial time
minimum ratio cycle algorithm.
All known strongly polynomial algorithms
for this problem use variants of this technique.

Suppose we have two strongly polynomial
negative cycle detection algorithms:
(1) a parallel one~$\mathcal{P}$ with work $W(n,m)$ and depth $D(n,m)$, and
(2) a sequential one $\mathcal{S}$ with
running time $T(n,m)$.
Additionally, suppose the parallel algorithm
operates on edge weights/times (and all the stored
values dependent on the weights) only by either
additions/subtractions and comparisons.

The idea is, conceptually, to simulate (sequentially) the parallel algorithm
$\mathcal{P}$
``generically'' on all the possible graphs $G_\lambda$
with $\lambda\in [\lambda_1,\lambda_2]$ (where $[\lambda_1,\lambda_2]$ shrinks in time).
This, in particular, means that the edge weights
of $G_\lambda$ and all the stored values (e.g., the
distance labels in Bellman-Ford algorithm)
are linear functions of the form $-\lambda\cdot a+b$.
Adding and subtracting linear functions can be done
straightforwardly.
Moreover, adding or subtracting such linear functions clearly
leads to functions of the same form.
However, the result of a comparison of two
values parameterized by $\lambda$ generally depends on $\lambda$,
and different results of such a comparison might
cause different flow of $\mathcal{P}$ in the future --
we would need to handle both branches of the algorithm's flow if
we wanted to detect a negative cycle in all~$G_\lambda$.
However, this is not our goal: we only care about locating~$\lambda^*$.
So, instead, whenever $\mathcal{P}$ performs
a comparison $-\lambda a+b<-\lambda c+d$,
we compute the breakpoint $x=(d-b)/(c-a)$ (for $c\neq a$)
and use the sequential negative cycle detection
algorithm on $G_x$ to test whether $\lambda^*<x$ or $\lambda^*\geq x$.
This allows us to decide which branch would
be chosen for $\lambda=\lambda^*$
and discard the other branch
(i.e., shrink $[\lambda_1,\lambda_2]$ to either $[\lambda_1,x]$ or $[x,\lambda_2]$).
Note that this already implies a strongly
polynomial bound of $O(W(n,m)\cdot T(n,m))$ since
$\mathcal{P}$ performs $O(W(n,m))$ comparisons.

However, one can use the parallelism of $\mathcal{P}$ to do better.
In each of $O(D(n,m))$ parallel steps~$s$,~$\mathcal{P}$~performs some number $p_s$ of comparisons,
whose results depend on where $\lambda^*$ lies
relatively to some breakpoints $\lambda_1=x_0\leq x_1<\ldots<x_{p_s}\leq x_{p_s+1}=\lambda_2$,
where the sum of $p_s$ over all parallel steps $s$ is $O(W(n,m))$.
Note that the breakpoints can be sorted in $\Ot(p_s)$ time.
Afterwards, we can find such $x_i$ that
$\lambda^*\in [x_i,x_{i+1}]$
via binary search using $O(\log{(W(n,m))})$
sequential negative cycle detection runs.
Observe that this allows us to choose the correct branch
for all the comparisons in parallel step $s$ at once in just $\Ot(T(n,m))$ time.
Consequently, we obtain a (sequential) strongly
polynomial algorithm with $\Ot(W(n,m)+D(n,m)\cdot T(n,m))$ running
time.

\paragraph{Planar graphs.}
Note that all our algorithms (and therefore also the algorithm
of Lemma~\ref{l:staircase-parallel})
indeed operate
on edge weights only by performing additions
and comparisons.
Hence, by plugging the algorithm of Theorem~\ref{t:parallel}
as both the parallel and sequential algorithm into
the parametric search framework, we already
obtain an $\Ot(n^{7/6})$-time strongly polynomial minimum ratio cycle
algorithm (then, $T(n,m),W(n,m)\in \Ot(n)$
and $D(n,m)=\Ot(n^{1/6})$).

However, we can do better by slightly decreasing the depth
of the parallel algorithm of Theorem~\ref{t:parallel} at
the cost of increasing its work (we stress that
we still stick to using a sequential algorithm with $T(n,m)=\Ot(n)$).
Recall from the proof of Lemma~\ref{l:apsp-piece-parallel}
that we could actually achieve depth $\Ot(d)$
within work bounded by
$$\Ot\left(n+\sum_{H\in \TG(G)}|\bnd{H}|^3/d^3\right)=
\Ot\left(n+\frac{\sqrt{n}}{d^3}\sum_{H\in\TG(G)}|\bnd{H}|^2\right)=\Ot(n+n^{3/2}/d^3).$$
As a result, we can obtain a minimum ratio cycle algorithm
that runs in time $\Ot(dn+n^{3/2}/d^3)$ for any~$d$.
We balance these terms by choosing $d=n^{1/8}$ and obtain the following theorem.

\tmeancycle*

\subsection{Faster Computation of External DDGs}
Piecewise dense distance graphs $\DDG_H$ for $H\in\TG(G)$
have numerous other applications in sequential
planar graph algorithms beyond negative cycle detection.
Typically, though, they are computed using
the aforementioned Klein's MSSP data structure~\cite{MSSP}
in $O((|H|+|\bnd{H}|^2)\log{n})$ time rather
than inductively in $O\left((|\bnd{H_1}|^2+|\bnd{H_2}|^2)\frac{\log^2{n}}{\log^2{\log{n}}}\right)$
time using FR-Dijkstra.
However, in some situations, only the latter inductive method
can be applied.
One such application is computing so-called \emph{external dense distance graphs},
i.e., the graphs $\DDG_{G-H}$ for all $H\in \TG(G)$,
where $G-H$ is the graph obtained from $G$ by removing
the vertices $V(H)\setminus\bnd{H}$.
We set $\bnd{(G-H)}:=\bnd{H}$ since indeed $\bnd{H}$ contains
all vertices that $G-H$ shares with $H$.
One can also argue that if the faces of $H$ containing $\bnd{H}$
are simple and disjoint\footnote{Dealing with non-simple or non-disjoint faces is merely
a tedious technical difficulty (see e.g.,~\cite{exact_oracle, ItalianoKLS17, KaplanMNS17}) -- they can be avoided
by suitably extending the input graph along with its decomposition.}, $\bnd{H}$ can be assumed to lie
on $O(1)$ faces of $G-H$ as well.

It is well-known (and also not difficult to prove, see, e.g.,~\cite{MozesS12}) that if $H_2$ is a sibling of node $H_1$, whereas $H$ is $H_1$'s parent in $\TG(G)$, then $\DDG_{G-H_1}$ can be obtained by
computing all-pairs shortest paths on $\DDG_{G-H}\cup \DDG_{H_2}$
Using FR-Dijkstra (Lemma~\ref{l:fr}), this takes
$O\left((|\bnd{H}|^2+|\bnd{H_2}|^2)\frac{\log^2{n}}{\log^2{\log{n}}}\right)
=O\left((|\bnd{H_1}|^2+|\bnd{H_2}|^2)\frac{\log^2{n}}{\log^2{\log{n}}}\right)$ time and thus $O\left(n\frac{\log^3{n}}{\log^2{\log{n}}}\right)$
time through~all pieces
$H\in \TG(G)$.
On the other hand, using MSSP to compute
all the external dense distance graphs is very inefficient,
since clearly $\sum_{H\in\TG(G)}|V(G-H)|=\Omega(n^2)$.
In fact, the inductive FR-Dijkstra-based approach
has been so far the only known way to compute external dense distance graphs efficiently.
External DDGs alone can be used to compute,
for example, shortest cycles through all edges of the graph (see Appendix~\ref{a:external}).
They have also proved useful in obtaining very efficient algorithms
for maximum flow and minimum cut related problems~\cite{BorradaileSW15, LackiNSW12}.
Moreover, the computation of external dense distance graphs is the bottleneck
of construction algorithms for, e.g.,
so-called cycle-MSSP~\cite{MozesS12} data structure, or distance oracles
supporting failed vertices~\cite{failing_oracle}.

The below lemma shows that by combining the sequential version
of our APSP algorithm with FR-Dijkstra we can obtain
faster -- by a factor of almost $\Theta(\log{n})$ -- algorithm for inductively computing dense distance graphs.
For simplicity, we focus on computing $\DDG_H$ from $\DDG_{H_1}\cup\DDG_{H_2}$,
as the algorithm for computing $\DDG_{G-H}$ (as argued above) is identical.

\begin{lemma}\label{l:inductive}
 Let $H_1,H_2$ be the children of node $H\in \TG(G)$.
  Let $b=|\bnd{H_1}|+|\bnd{H_2}|$.
  We can compute all-pairs shortest paths in $\DDG_{H_1}\cup\DDG_{H_2}$
  in $O(b^2\log{n}\cdot \log\log{n}\cdot \alpha(n))$ sequential time.
  If the problem is infeasible, the algorithm
  finds a negative cycle within the same bounds.
\end{lemma}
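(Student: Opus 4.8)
The plan is to instantiate the sequential version of the APSP algorithm of Theorem~\ref{t:apsp} on the graph $\DDG_{H_1}\cup\DDG_{H_2}$ (a graph with $O(b)$ vertices and $\Omega(b^2)$ edges), but to account for its running time using the \emph{two different} costs available for the two kinds of iterations the algorithm performs: Bellman-Ford steps and Dijkstra/repeated-squaring steps. Recall from the technical overview that, given a feasible price function, the algorithm of Theorem~\ref{t:apsp} admits a sequential implementation running in time
\[
  O\!\left(n\cdot \mathcal{B}(n,m)\cdot \log{n}\cdot \log{d}+\left(\tfrac{n}{d}\log{n}\right)\cdot \mathcal{D}(n,m)\right),
\]
where $\mathcal{B}$ is the cost of a single Bellman-Ford step and $\mathcal{D}$ the cost of running Dijkstra once (and the repeated-squaring induction base on $H_d$ costs $O((n/d)^3\log^3 n)$ naively, which we must also carry along, or replace with $|H_d|$ Dijkstra runs). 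On $\DDG_{H_1}\cup\DDG_{H_2}$ we have $n=\Theta(b)$ vertices but, crucially, $\mathcal{B}(b,m)=O(b\cdot\alpha(b))$ by Lemma~\ref{l:staircase} (Klein et al.'s refinement of the FR Bellman-Ford simulation), while $\mathcal{D}(b,m)=O(b\log^2 b/\log^2\log b)$ by FR-Dijkstra (Lemma~\ref{l:fr}). Substituting, the total cost becomes
\[
  O\!\left(b^2\,\alpha(b)\log b\log d + \tfrac{b}{d}\log b\cdot b\,\frac{\log^2 b}{\log^2\log b} + \Big(\tfrac{b}{d}\Big)^{\!3}\log^3 b\right).
\]
Choosing $d=\Theta(\log^2 b)$ makes the second term $O(b^2\log b/\log^2\log b\cdot \tfrac{1}{\log^2 b}\cdot\log b)$, which is $o(b^2)$, and the third term is $O(b^3/\log^6 b\cdot\log^3 b)$ — wait, that is too large; so instead one should run the induction base as $|H_d|=O(b/d)$ FR-Dijkstra computations rather than repeated squaring, contributing $O((b/d)\cdot b\log^2 b/\log^2\log b)$, which for $d=\Theta(\log^2 b)$ is again $o(b^2)$. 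The dominant term is then $O(b^2\alpha(b)\log b\log d)=O(b^2\log b\log\log b\cdot\alpha(b))$, as claimed.

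The remaining ingredients are the reduction to the feasible (price-function-equipped) case and negative-cycle handling. For the price function, I would first run an ordinary Bellman-Ford SSSP on $\DDG_{H_1}\cup\DDG_{H_2}$ from an artificial super-source using the $O(b\cdot\alpha(b))$-per-step simulation of Lemma~\ref{l:staircase}: this takes $O(b\cdot b\cdot\alpha(b))=O(b^2\alpha(b))$ time for $O(b)$ steps and either detects a negative cycle or produces a feasible price $p$, matching the target bound. Given $p$, reduce edge weights to be nonnegative so that the Dijkstra steps inside the APSP algorithm are legitimate; since the FR machinery in Lemmas~\ref{l:staircase} and~\ref{l:fr} works on staircase Monge matrices, one must check that the auxiliary edges added by the Theorem~\ref{t:apsp} construction (the $O(b)$ ``shortcut'' edges $sv$ of weight $\dist_G(s,v)$) can be handled within $O(b)$ extra time per Bellman-Ford step and do not spoil the Monge structure — exactly as in the proof of Lemma~\ref{l:apsp-piece-parallel}, where these $O(b)$ auxiliary edges are relaxed naively after the Monge part. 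For the Dijkstra steps, the auxiliary edges emanate from a single vertex $s$, so they only affect the initialization of the priority queue, not the FR-Dijkstra invariants.

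The main obstacle I expect is \emph{bookkeeping the hub-set preprocessing cost}: Theorem~\ref{t:hubs} builds the hierarchy $H_1,\dots,H_d$ using, per level $k$, a Bellman-Ford run of $2^k$ steps from $|H_{2^k}|=O((b/2^k)\log b)$ sources, which on $\DDG_{H_1}\cup\DDG_{H_2}$ costs $O((b/2^k)\log b\cdot 2^k\cdot \mathcal{B}(b,m))=O(b\log b\cdot b\,\alpha(b))=O(b^2\log b\,\alpha(b))$ per level and $O(\log d)$ levels, giving $O(b^2\log b\log\log b\,\alpha(b))$ overall — which matches the target, but only because $d=\Theta(\log^2 b)$ makes $\log d=\Theta(\log\log b)$. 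One must also run the greedy hitting-set algorithm (Lemma~\ref{l:hitting}, sequential $O(kh+n)$ variant) on the simple paths produced, whose total hop-length at level $k$ is $O((b^2/2^k)\log b\cdot 2^k)=O(b^2\log b)$, contributing $O(b^2\log b\log d)=O(b^2\log b\log\log b)$ across all levels — within budget. A secondary subtlety is confirming that Lemma~\ref{l:comp} and the negative-cycle guarantees of Theorem~\ref{t:hubs} are compatible with the Monge-based Bellman-Ford simulation, i.e.\ that a shortest negative cycle of hop-length $\le d$ in $\DDG_{H_1}\cup\DDG_{H_2}$ is still detected when each Bellman-Ford step is realized via row-minima of a staircase Monge matrix rather than by explicit edge relaxations; this holds because Lemma~\ref{l:staircase} faithfully simulates a step of the ``synchronous'' Bellman-Ford variant from Section~\ref{sec:preliminaries}, so all of properties (1)--(4) carry over verbatim. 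Assembling these pieces, verifying that every term is $O(b^2\log b\log\log b\,\alpha(b))$ under $d=\Theta(\log^2 b)$, yields the lemma.
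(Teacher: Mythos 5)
Your proposal follows essentially the same route as the paper's proof: first compute a feasible price function (or detect a negative cycle) via the $O(b\,\alpha)$-per-step Monge Bellman-Ford simulation, then run the sequential version of Theorem~\ref{t:apsp} with each Bellman-Ford step realized by Lemma~\ref{l:staircase} plus $O(b)$ naive relaxations of the auxiliary edges, replace the cubic repeated-squaring/Floyd--Warshall base case by $|H_d|$ FR-Dijkstra runs (Lemma~\ref{l:fr}), and set $d=\Theta(\log^2 b)$, which is exactly the paper's argument and yields the claimed $O(b^2\log n\cdot\log\log n\cdot\alpha(n))$ bound. One minor slip: $|H_d|=O((b/d)\log b)$, not $O(b/d)$, so the FR-Dijkstra term is $\Theta\bigl(b^2\log b/\log^2\log b\bigr)$ rather than $o(b^2)$; this is still dominated by the $O(b^2\alpha(b)\log b\log\log b)$ term, so the final bound is unaffected.
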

\begin{proof}
  By Lemma~\ref{l:staircase}, in $O(b^2\cdot \alpha(n))$ time
  we can either find a negative cycle in $\DDG_{H_1}\cup\DDG_{H_2}$
  or compute a feasible price function $p$ on that graph.
  The price function allows us to make the weights
  in each $\DDG_{H_i}$ non-negative.
  Now we use the sequential version of the algorithm of Theorem~\ref{t:apsp}
  with two changes.
  First, we use a combination of Lemma~\ref{l:staircase}
  and $O(b)$ naive relaxations for each Bellman-Ford step
  performed as we did in the proof of Lemma~\ref{l:apsp-piece-parallel}.
  Even more importantly, we replace the $O((n/d)^3\log^3{n})$ time Floyd-Warshall-based
  APSP computation
  of shortest paths between all $s,t\in H_d$
  with $|H_d|$ invocations of FR-Dijkstra, as given in Lemma~\ref{l:fr}.
  As a result, the running time of the APSP
  algorithm becomes
  $$O\left(b\cdot (b\cdot \alpha(n))\cdot \log{n}\log{d}+\left(\frac{b}{d}\log{n}\right)\cdot
  b\log^2{n}\right)=O\left(b^2\log{n}\cdot \left(\alpha(n)\log{d}+\frac{\log^2{n}}{d}\right)\right).$$
  We obtain the desired bound by choosing $d=\log^2{n}$.
\end{proof}

Lemma~\ref{l:inductive} easily implies the following lemma.
\lexternal*
\bibliographystyle{plainurl}
\bibliography{references}

\newcommand{\sortkey}[1]{}
\begin{thebibliography}{10}

\bibitem{AgarwalST94}
Pankaj~K. Agarwal, Micha Sharir, and Sivan Toledo.
\newblock Applications of parametric searching in geometric optimization.
\newblock {\em J. Algorithms}, 17(3):292--318, 1994.
\newblock \href {https://doi.org/10.1006/jagm.1994.1038}
  {\path{doi:10.1006/jagm.1994.1038}}.

\bibitem{AggarwalKPS97}
Alok Aggarwal, Dina Kravets, James~K. Park, and Sandeep Sen.
\newblock Parallel searching in generalized monge arrays.
\newblock {\em Algorithmica}, 19(3):291--317, 1997.
\newblock \href {https://doi.org/10.1007/PL00009175}
  {\path{doi:10.1007/PL00009175}}.

\bibitem{AndoniSZ20}
Alexandr Andoni, Clifford Stein, and Peilin Zhong.
\newblock Parallel approximate undirected shortest paths via low hop emulators.
\newblock In {\em Proccedings of the 52nd Annual {ACM} {SIGACT} Symposium on
  Theory of Computing, {STOC} 2020, Chicago, IL, USA, June 22-26, 2020}, pages
  322--335, 2020.
\newblock \href {https://doi.org/10.1145/3357713.3384321}
  {\path{doi:10.1145/3357713.3384321}}.

\bibitem{1690659}
D.~A. {Bader} and K.~{Madduri}.
\newblock Parallel algorithms for evaluating centrality indices in real-world
  networks.
\newblock In {\em 2006 International Conference on Parallel Processing
  (ICPP'06)}, pages 539--550, 2006.

\bibitem{BaswanaHS07}
Surender Baswana, Ramesh Hariharan, and Sandeep Sen.
\newblock Improved decremental algorithms for maintaining transitive closure
  and all-pairs shortest paths.
\newblock {\em J. Algorithms}, 62(2):74--92, 2007.
\newblock \href {https://doi.org/10.1016/j.jalgor.2004.08.004}
  {\path{doi:10.1016/j.jalgor.2004.08.004}}.

\bibitem{BergerRS94}
Bonnie Berger, John Rompel, and Peter~W. Shor.
\newblock Efficient {NC} algorithms for set cover with applications to learning
  and geometry.
\newblock {\em J. Comput. Syst. Sci.}, 49(3):454--477, 1994.
\newblock \href {https://doi.org/10.1016/S0022-0000(05)80068-6}
  {\path{doi:10.1016/S0022-0000(05)80068-6}}.

\bibitem{Bernstein16}
Aaron Bernstein.
\newblock Maintaining shortest paths under deletions in weighted directed
  graphs.
\newblock {\em {SIAM} J. Comput.}, 45(2):548--574, 2016.
\newblock \href {https://doi.org/10.1137/130938670}
  {\path{doi:10.1137/130938670}}.

\bibitem{Blelloch0ST16}
Guy~E. Blelloch, Yan Gu, Yihan Sun, and Kanat Tangwongsan.
\newblock Parallel shortest paths using radius stepping.
\newblock In {\em Proceedings of the 28th {ACM} Symposium on Parallelism in
  Algorithms and Architectures, {SPAA} 2016, Asilomar State Beach/Pacific
  Grove, CA, USA, July 11-13, 2016}, pages 443--454, 2016.
\newblock \href {https://doi.org/10.1145/2935764.2935765}
  {\path{doi:10.1145/2935764.2935765}}.

\bibitem{BorradaileSW15}
Glencora Borradaile, Piotr Sankowski, and Christian Wulff{-}Nilsen.
\newblock Min \emph{st}-cut oracle for planar graphs with near-linear
  preprocessing time.
\newblock {\em {ACM} Trans. Algorithms}, 11(3):16:1--16:29, 2015.
\newblock \href {https://doi.org/10.1145/2684068} {\path{doi:10.1145/2684068}}.

\bibitem{Brent74}
Richard~P. Brent.
\newblock The parallel evaluation of general arithmetic expressions.
\newblock {\em J. {ACM}}, 21(2):201--206, 1974.
\newblock \href {https://doi.org/10.1145/321812.321815}
  {\path{doi:10.1145/321812.321815}}.

\bibitem{BringmannHK17}
Karl Bringmann, Thomas~Dueholm Hansen, and Sebastian Krinninger.
\newblock Improved algorithms for computing the cycle of minimum cost-to-time
  ratio in directed graphs.
\newblock In {\em 44th International Colloquium on Automata, Languages, and
  Programming, {ICALP} 2017, July 10-14, 2017, Warsaw, Poland}, pages
  124:1--124:16, 2017.
\newblock \href {https://doi.org/10.4230/LIPIcs.ICALP.2017.124}
  {\path{doi:10.4230/LIPIcs.ICALP.2017.124}}.

\bibitem{BrodalTZ98}
Gerth~St{\o}lting Brodal, Jesper~Larsson Tr{\"{a}}ff, and Christos~D.
  Zaroliagis.
\newblock A parallel priority queue with constant time operations.
\newblock {\em J. Parallel Distributed Comput.}, 49(1):4--21, 1998.
\newblock \href {https://doi.org/10.1006/jpdc.1998.1425}
  {\path{doi:10.1006/jpdc.1998.1425}}.

\bibitem{CaoFR20}
Nairen Cao, Jeremy~T. Fineman, and Katina Russell.
\newblock Efficient construction of directed hopsets and parallel approximate
  shortest paths.
\newblock In Konstantin Makarychev, Yury Makarychev, Madhur Tulsiani, Gautam
  Kamath, and Julia Chuzhoy, editors, {\em Proccedings of the 52nd Annual {ACM}
  {SIGACT} Symposium on Theory of Computing, {STOC} 2020, Chicago, IL, USA,
  June 22-26, 2020}, pages 336--349. {ACM}, 2020.
\newblock \href {https://doi.org/10.1145/3357713.3384270}
  {\path{doi:10.1145/3357713.3384270}}.

\bibitem{exact_oracle}
Panagiotis Charalampopoulos, Pawel Gawrychowski, Shay Mozes, and Oren Weimann.
\newblock Almost optimal distance oracles for planar graphs.
\newblock In {\em Proceedings of the 51st Annual {ACM} {SIGACT} Symposium on
  Theory of Computing, {STOC} 2019}, pages 138--151, 2019.
\newblock \href {https://doi.org/10.1145/3313276.3316316}
  {\path{doi:10.1145/3313276.3316316}}.

\bibitem{failing_oracle}
Panagiotis Charalampopoulos, Shay Mozes, and Benjamin Tebeka.
\newblock Exact distance oracles for planar graphs with failing vertices.
\newblock In {\em Proceedings of the Thirtieth Annual {ACM-SIAM} Symposium on
  Discrete Algorithms, {SODA} 2019}, pages 2110--2123, 2019.
\newblock \href {https://doi.org/10.1137/1.9781611975482.127}
  {\path{doi:10.1137/1.9781611975482.127}}.

\bibitem{ChechikN20}
Shiri Chechik and Moran Nechushtan.
\newblock Simplifying and unifying replacement paths algorithms in weighted
  directed graphs.
\newblock In {\em 47th International Colloquium on Automata, Languages, and
  Programming, {ICALP} 2020, July 8-11, 2020, Saarbr{\"{u}}cken, Germany
  (Virtual Conference)}, pages 29:1--29:12, 2020.
\newblock \href {https://doi.org/10.4230/LIPIcs.ICALP.2020.29}
  {\path{doi:10.4230/LIPIcs.ICALP.2020.29}}.

\bibitem{Cohen96}
Edith Cohen.
\newblock Efficient parallel shortest-paths in digraphs with a separator
  decomposition.
\newblock {\em J. Algorithms}, 21(2):331--357, 1996.
\newblock \href {https://doi.org/10.1006/jagm.1996.0048}
  {\path{doi:10.1006/jagm.1996.0048}}.

\bibitem{Cohen97}
Edith Cohen.
\newblock Using selective path-doubling for parallel shortest-path
  computations.
\newblock {\em J. Algorithms}, 22(1):30--56, 1997.
\newblock \href {https://doi.org/10.1006/jagm.1996.0813}
  {\path{doi:10.1006/jagm.1996.0813}}.

\bibitem{Cohen00}
Edith Cohen.
\newblock Polylog-time and near-linear work approximation scheme for undirected
  shortest paths.
\newblock {\em J. {ACM}}, 47(1):132--166, 2000.
\newblock \href {https://doi.org/10.1145/331605.331610}
  {\path{doi:10.1145/331605.331610}}.

\bibitem{FR}
Jittat Fakcharoenphol and Satish Rao.
\newblock Planar graphs, negative weight edges, shortest paths, and near linear
  time.
\newblock {\em J. Comput. Syst. Sci.}, 72(5):868--889, 2006.
\newblock \href {https://doi.org/10.1016/j.jcss.2005.05.007}
  {\path{doi:10.1016/j.jcss.2005.05.007}}.

\bibitem{Fineman18}
Jeremy~T. Fineman.
\newblock Nearly work-efficient parallel algorithm for digraph reachability.
\newblock In Ilias Diakonikolas, David Kempe, and Monika Henzinger, editors,
  {\em Proceedings of the 50th Annual {ACM} {SIGACT} Symposium on Theory of
  Computing, {STOC} 2018, Los Angeles, CA, USA, June 25-29, 2018}, pages
  457--470. {ACM}, 2018.
\newblock \href {https://doi.org/10.1145/3188745.3188926}
  {\path{doi:10.1145/3188745.3188926}}.

\bibitem{ForsterN18}
Sebastian Forster and Danupon Nanongkai.
\newblock A faster distributed single-source shortest paths algorithm.
\newblock In Mikkel Thorup, editor, {\em 59th {IEEE} Annual Symposium on
  Foundations of Computer Science, {FOCS} 2018, Paris, France, October 7-9,
  2018}, pages 686--697. {IEEE} Computer Society, 2018.
\newblock \href {https://doi.org/10.1109/FOCS.2018.00071}
  {\path{doi:10.1109/FOCS.2018.00071}}.

\bibitem{GawrychowskiK18}
Pawel Gawrychowski and Adam Karczmarz.
\newblock Improved bounds for shortest paths in dense distance graphs.
\newblock In {\em 45th International Colloquium on Automata, Languages, and
  Programming, {ICALP} 2018}, pages 61:1--61:15, 2018.
\newblock \href {https://doi.org/10.4230/LIPIcs.ICALP.2018.61}
  {\path{doi:10.4230/LIPIcs.ICALP.2018.61}}.

\bibitem{10.1137/0221011}
Andrew~V. Goldberg, Serge~A. Plotkin, David~B. Shmoys, and \'{E}va Tardos.
\newblock Using interior-point methods for fast parallel algorithms for
  bipartite matching and related problems.
\newblock {\em SIAM J. Comput.}, 21(1):140–150, February 1992.
\newblock \href {https://doi.org/10.1137/0221011} {\path{doi:10.1137/0221011}}.

\bibitem{10.1145/140901.141913}
Y.~Han, V.~Pan, and John Reif.
\newblock Efficient parallel algorithms for computing all pair shortest paths
  in directed graphs.
\newblock In {\em Proceedings of the Fourth Annual ACM Symposium on Parallel
  Algorithms and Architectures}, SPAA ’92, page 353–362, New York, NY, USA,
  1992. Association for Computing Machinery.
\newblock \href {https://doi.org/10.1145/140901.141913}
  {\path{doi:10.1145/140901.141913}}.

\bibitem{HartmannO93}
Mark Hartmann and James~B. Orlin.
\newblock Finding minimum cost to time ratio cycles with small integral transit
  times.
\newblock {\em Networks}, 23(6):567--574, 1993.
\newblock \href {https://doi.org/10.1002/net.3230230607}
  {\path{doi:10.1002/net.3230230607}}.

\bibitem{ItalianoKLS17}
Giuseppe~F. Italiano, Adam Karczmarz, Jakub Lacki, and Piotr Sankowski.
\newblock Decremental single-source reachability in planar digraphs.
\newblock In {\em Proceedings of the 49th Annual {ACM} {SIGACT} Symposium on
  Theory of Computing, {STOC} 2017}, pages 1108--1121. {ACM}, 2017.
\newblock \href {https://doi.org/10.1145/3055399.3055480}
  {\path{doi:10.1145/3055399.3055480}}.

\bibitem{JaJa92}
Joseph J{\'{a}}J{\'{a}}.
\newblock {\em An Introduction to Parallel Algorithms}.
\newblock Addison-Wesley, 1992.

\bibitem{JermynI01}
Ian Jermyn and Hiroshi Ishikawa.
\newblock Globally optimal regions and boundaries as minimum ratio weight
  cycles.
\newblock {\em {IEEE} Trans. Pattern Anal. Mach. Intell.}, 23(10):1075--1088,
  2001.
\newblock \href {https://doi.org/10.1109/34.954599}
  {\path{doi:10.1109/34.954599}}.

\bibitem{Johnson77}
Donald~B. Johnson.
\newblock Efficient algorithms for shortest paths in sparse networks.
\newblock {\em J. {ACM}}, 24(1):1--13, 1977.
\newblock \href {https://doi.org/10.1145/321992.321993}
  {\path{doi:10.1145/321992.321993}}.

\bibitem{KaplanMNS17}
Haim Kaplan, Shay Mozes, Yahav Nussbaum, and Micha Sharir.
\newblock Submatrix maximum queries in monge matrices and partial monge
  matrices, and their applications.
\newblock {\em {ACM} Trans. Algorithms}, 13(2):26:1--26:42, 2017.
\newblock \href {https://doi.org/10.1145/3039873} {\path{doi:10.1145/3039873}}.

\bibitem{Karczmarz18}
Adam Karczmarz.
\newblock Decremental transitive closure and shortest paths for planar digraphs
  and beyond.
\newblock In {\em Proceedings of the Twenty-Ninth Annual {ACM-SIAM} Symposium
  on Discrete Algorithms, {SODA} 2018}, pages 73--92. {SIAM}, 2018.
\newblock \href {https://doi.org/10.1137/1.9781611975031.5}
  {\path{doi:10.1137/1.9781611975031.5}}.

\bibitem{KarczmarzL19}
Adam Karczmarz and Jakub Łącki.
\newblock Reliable hubs for partially-dynamic all-pairs shortest paths in
  directed graphs.
\newblock In {\em 27th Annual European Symposium on Algorithms, {ESA} 2019},
  pages 65:1--65:15, 2019.
\newblock \href {https://doi.org/10.4230/LIPIcs.ESA.2019.65}
  {\path{doi:10.4230/LIPIcs.ESA.2019.65}}.

\bibitem{Karp78}
Richard~M. Karp.
\newblock A characterization of the minimum cycle mean in a digraph.
\newblock {\em Discret. Math.}, 23(3):309--311, 1978.
\newblock \href {https://doi.org/10.1016/0012-365X(78)90011-0}
  {\path{doi:10.1016/0012-365X(78)90011-0}}.

\bibitem{King99}
Valerie King.
\newblock Fully dynamic algorithms for maintaining all-pairs shortest paths and
  transitive closure in digraphs.
\newblock In {\em 40th Annual Symposium on Foundations of Computer Science,
  {FOCS} '99, 17-18 October, 1999, New York, NY, {USA}}, pages 81--91, 1999.
\newblock \href {https://doi.org/10.1109/SFFCS.1999.814580}
  {\path{doi:10.1109/SFFCS.1999.814580}}.

\bibitem{KlaweK90}
Maria~M. Klawe and Daniel~J. Kleitman.
\newblock An almost linear time algorithm for generalized matrix searching.
\newblock {\em {SIAM} J. Discret. Math.}, 3(1):81--97, 1990.
\newblock \href {https://doi.org/10.1137/0403009} {\path{doi:10.1137/0403009}}.

\bibitem{MSSP}
Philip~N. Klein.
\newblock Multiple-source shortest paths in planar graphs.
\newblock In {\em Proceedings of the Sixteenth Annual {ACM-SIAM} Symposium on
  Discrete Algorithms, {SODA} 2005}, pages 146--155, 2005.

\bibitem{DBLP:conf/stoc/KleinMS13}
Philip~N. Klein, Shay Mozes, and Christian Sommer.
\newblock Structured recursive separator decompositions for planar graphs in
  linear time.
\newblock In {\em Symposium on Theory of Computing Conference, STOC'13}, pages
  505--514, 2013.
\newblock \href {https://doi.org/10.1145/2488608.2488672}
  {\path{doi:10.1145/2488608.2488672}}.

\bibitem{KleinMW10}
Philip~N. Klein, Shay Mozes, and Oren Weimann.
\newblock Shortest paths in directed planar graphs with negative lengths: {A}
  linear-space \emph{O}(\emph{n}log\({}^{2}\)\emph{n})-time algorithm.
\newblock {\em {ACM} Trans. Algorithms}, 6(2):30:1--30:18, 2010.
\newblock \href {https://doi.org/10.1145/1721837.1721846}
  {\path{doi:10.1145/1721837.1721846}}.

\bibitem{KleinS93}
Philip~N. Klein and Sairam Subramanian.
\newblock A linear-processor polylog-time algorithm for shortest paths in
  planar graphs.
\newblock In {\em 34th Annual Symposium on Foundations of Computer Science,
  Palo Alto, California, USA, 3-5 November 1993}, pages 259--270, 1993.
\newblock \href {https://doi.org/10.1109/SFCS.1993.366861}
  {\path{doi:10.1109/SFCS.1993.366861}}.

\bibitem{KleinS97}
Philip~N. Klein and Sairam Subramanian.
\newblock A randomized parallel algorithm for single-source shortest paths.
\newblock {\em J. Algorithms}, 25(2):205--220, 1997.
\newblock \href {https://doi.org/10.1006/jagm.1997.0888}
  {\path{doi:10.1006/jagm.1997.0888}}.

\bibitem{Lawler}
Eugene~L. Lawler.
\newblock Optimal cycles in doubly weighted linear graphs.
\newblock In {\em Theory of Graphs: International Symposium}, pages 209--213,
  1966.

\bibitem{Li20}
Jason Li.
\newblock Faster parallel algorithm for approximate shortest path.
\newblock In {\em Proccedings of the 52nd Annual {ACM} {SIGACT} Symposium on
  Theory of Computing, {STOC} 2020, Chicago, IL, USA, June 22-26, 2020}, pages
  308--321, 2020.
\newblock \href {https://doi.org/10.1145/3357713.3384268}
  {\path{doi:10.1145/3357713.3384268}}.

\bibitem{LiptonT80}
Richard~J. Lipton and Robert~Endre Tarjan.
\newblock Applications of a planar separator theorem.
\newblock {\em {SIAM} J. Comput.}, 9(3):615--627, 1980.
\newblock \href {https://doi.org/10.1137/0209046} {\path{doi:10.1137/0209046}}.

\bibitem{LiuJS19}
Yang~P. Liu, Arun Jambulapati, and Aaron Sidford.
\newblock Parallel reachability in almost linear work and square root depth.
\newblock In David Zuckerman, editor, {\em 60th {IEEE} Annual Symposium on
  Foundations of Computer Science, {FOCS} 2019, Baltimore, Maryland, USA,
  November 9-12, 2019}, pages 1664--1686. {IEEE} Computer Society, 2019.
\newblock \href {https://doi.org/10.1109/FOCS.2019.00098}
  {\path{doi:10.1109/FOCS.2019.00098}}.

\bibitem{LackiNSW12}
{\sortkey{Lzzzb}}{Jakub Łącki and Yahav Nussbaum and Piotr Sankowski and
  Christian Wulff{-}Nilsen}.
\newblock Single source - all sinks max flows in planar digraphs.
\newblock In {\em 53rd Annual {IEEE} Symposium on Foundations of Computer
  Science, {FOCS} 2012, New Brunswick, NJ, USA, October 20-23, 2012}, pages
  599--608, 2012.
\newblock \href {https://doi.org/10.1109/FOCS.2012.66}
  {\path{doi:10.1109/FOCS.2012.66}}.

\bibitem{10.1145/335305.335346}
Meena Mahajan and Kasturi~R. Varadarajan.
\newblock A new nc-algorithm for finding a perfect matching in bipartite planar
  and small genus graphs (extended abstract).
\newblock In {\em Proceedings of the Thirty-Second Annual ACM Symposium on
  Theory of Computing}, STOC ’00, page 351–357, New York, NY, USA, 2000.
  Association for Computing Machinery.
\newblock \href {https://doi.org/10.1145/335305.335346}
  {\path{doi:10.1145/335305.335346}}.

\bibitem{Megiddo83}
Nimrod Megiddo.
\newblock Applying parallel computation algorithms in the design of serial
  algorithms.
\newblock {\em J. {ACM}}, 30(4):852--865, 1983.
\newblock \href {https://doi.org/10.1145/2157.322410}
  {\path{doi:10.1145/2157.322410}}.

\bibitem{Miller86}
Gary~L. Miller.
\newblock Finding small simple cycle separators for 2-connected planar graphs.
\newblock {\em J. Comput. Syst. Sci.}, 32(3):265--279, 1986.
\newblock \href {https://doi.org/10.1016/0022-0000(86)90030-9}
  {\path{doi:10.1016/0022-0000(86)90030-9}}.

\bibitem{MillerN95}
Gary~L. Miller and Joseph Naor.
\newblock Flow in planar graphs with multiple sources and sinks.
\newblock {\em {SIAM} J. Comput.}, 24(5):1002--1017, 1995.
\newblock \href {https://doi.org/10.1137/S0097539789162997}
  {\path{doi:10.1137/S0097539789162997}}.

\bibitem{MozesS12}
Shay Mozes and Christian Sommer.
\newblock Exact distance oracles for planar graphs.
\newblock In {\em Proceedings of the Twenty-Third Annual {ACM-SIAM} Symposium
  on Discrete Algorithms, {SODA} 2012}, pages 209--222, 2012.
\newblock \href {https://doi.org/10.1137/1.9781611973099.19}
  {\path{doi:10.1137/1.9781611973099.19}}.

\bibitem{MozesW10}
Shay Mozes and Christian Wulff{-}Nilsen.
\newblock Shortest paths in planar graphs with real lengths in
  \emph{O}(\emph{n}log\({}^{2}\)\emph{n}/loglog\emph{n}) time.
\newblock In {\em Algorithms - {ESA} 2010, 18th Annual European Symposium.
  Proceedings, Part {II}}, pages 206--217, 2010.
\newblock \href {https://doi.org/10.1007/978-3-642-15781-3_18}
  {\path{doi:10.1007/978-3-642-15781-3_18}}.

\bibitem{OrlinSW18}
James~B. Orlin, K.~Subramani, and Piotr~J. Wojciechowski.
\newblock Randomized algorithms for finding the shortest negative cost cycle in
  networks.
\newblock {\em Discret. Appl. Math.}, 236:387--394, 2018.
\newblock \href {https://doi.org/10.1016/j.dam.2017.10.011}
  {\path{doi:10.1016/j.dam.2017.10.011}}.

\bibitem{Pettie04}
Seth Pettie.
\newblock A new approach to all-pairs shortest paths on real-weighted graphs.
\newblock {\em Theor. Comput. Sci.}, 312(1):47--74, 2004.
\newblock \href {https://doi.org/10.1016/S0304-3975(03)00402-X}
  {\path{doi:10.1016/S0304-3975(03)00402-X}}.

\bibitem{ShiS99}
Hanmao Shi and Thomas~H. Spencer.
\newblock Time-work tradeoffs of the single-source shortest paths problem.
\newblock {\em J. Algorithms}, 30(1):19--32, 1999.
\newblock \href {https://doi.org/10.1006/jagm.1998.0968}
  {\path{doi:10.1006/jagm.1998.0968}}.

\bibitem{Spencer97}
Thomas~H. Spencer.
\newblock Time-work tradeoffs for parallel algorithms.
\newblock {\em J. {ACM}}, 44(5):742--778, 1997.
\newblock \href {https://doi.org/10.1145/265910.265923}
  {\path{doi:10.1145/265910.265923}}.

\bibitem{Subramani09}
K.~Subramani.
\newblock Optimal length resolution refutations of difference constraint
  systems.
\newblock {\em J. Autom. Reasoning}, 43(2):121--137, 2009.
\newblock \href {https://doi.org/10.1007/s10817-009-9139-4}
  {\path{doi:10.1007/s10817-009-9139-4}}.

\bibitem{SubramaniWG13}
K.~Subramani, Matthew~D. Williamson, and Xiaofeng Gu.
\newblock Improved algorithms for optimal length resolution refutation in
  difference constraint systems.
\newblock {\em Formal Asp. Comput.}, 25(2):319--341, 2013.
\newblock \href {https://doi.org/10.1007/s00165-011-0186-3}
  {\path{doi:10.1007/s00165-011-0186-3}}.

\bibitem{Tardos85}
{\'{E}}va Tardos.
\newblock A strongly polynomial minimum cost circulation algorithm.
\newblock {\em Combinatorica}, 5(3):247--256, 1985.
\newblock \href {https://doi.org/10.1007/BF02579369}
  {\path{doi:10.1007/BF02579369}}.

\bibitem{UllmanY91}
Jeffrey~D. Ullman and Mihalis Yannakakis.
\newblock High-probability parallel transitive-closure algorithms.
\newblock {\em {SIAM} J. Comput.}, 20(1):100--125, 1991.
\newblock \href {https://doi.org/10.1137/0220006} {\path{doi:10.1137/0220006}}.

\bibitem{Veksler02}
Olga Veksler.
\newblock Stereo correspondence with compact windows via minimum ratio cycle.
\newblock {\em {IEEE} Trans. Pattern Anal. Mach. Intell.}, 24(12):1654--1660,
  2002.
\newblock \href {https://doi.org/10.1109/TPAMI.2002.1114859}
  {\path{doi:10.1109/TPAMI.2002.1114859}}.

\bibitem{WangS01}
Song Wang and Jeffrey~Mark Siskind.
\newblock Image segmentation with minimum mean cut.
\newblock In {\em Proceedings of the Eighth International Conference On
  Computer Vision (ICCV-01), Vancouver, British Columbia, Canada, July 7-14,
  2001 - Volume 1}, pages 517--524, 2001.
\newblock \href {https://doi.org/10.1109/ICCV.2001.10090}
  {\path{doi:10.1109/ICCV.2001.10090}}.

\bibitem{WangS03}
Song Wang and Jeffrey~Mark Siskind.
\newblock Image segmentation with ratio cut.
\newblock {\em {IEEE} Trans. Pattern Anal. Mach. Intell.}, 25(6):675--690,
  2003.
\newblock \href {https://doi.org/10.1109/TPAMI.2003.1201819}
  {\path{doi:10.1109/TPAMI.2003.1201819}}.

\bibitem{WilliamsW18}
Virginia~Vassilevska Williams and R.~Ryan Williams.
\newblock Subcubic equivalences between path, matrix, and triangle problems.
\newblock {\em J. {ACM}}, 65(5):27:1--27:38, 2018.
\newblock \href {https://doi.org/10.1145/3186893} {\path{doi:10.1145/3186893}}.

\bibitem{YoungTO91}
Neal~E. Young, Robert~Endre Tarjan, and James~B. Orlin.
\newblock Faster parametric shortest path and minimum-balance algorithms.
\newblock {\em Networks}, 21(2):205--221, 1991.
\newblock \href {https://doi.org/10.1002/net.3230210206}
  {\path{doi:10.1002/net.3230210206}}.

\bibitem{Zwick02}
Uri Zwick.
\newblock All pairs shortest paths using bridging sets and rectangular matrix
  multiplication.
\newblock {\em J. {ACM}}, 49(3):289--317, 2002.
\newblock \href {https://doi.org/10.1145/567112.567114}
  {\path{doi:10.1145/567112.567114}}.

\end{thebibliography}

\newpage
\appendix

\section{Computing All-Edges Shortest Cycles}\label{a:external}

The following corollary follows easily by Lemma~\ref{l:apsp-piece-parallel}
and extending the algorithm behind Theorem~\ref{t:parallel} to also compute
external DDGs.

\begin{corollary}\label{c:external-parallel}
All external dense distance graphs $\DDG_{G-H}$ can be computed
in parallel using $\Ot(n)$ work and $\Ot(n^{1/6})$ depth.
\end{corollary}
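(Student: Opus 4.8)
The plan is to follow the proof of Theorem~\ref{t:parallel}, but to traverse the recursive decomposition tree $\TG(G)$ in \emph{both} directions. First I would compute the decomposition $\TG(G)$ itself, together with all the ``internal'' dense distance graphs $\DDG_H$, $H\in\TG(G)$, exactly as in the proof of Theorem~\ref{t:parallel}: process $\TG(G)$ bottom-up, and at each level compute every $\DDG_H$ from $\DDG_{H_1}\cup\DDG_{H_2}$ in parallel via Lemma~\ref{l:apsp-piece-parallel}. By the properties of the decomposition recalled in the framework section, this costs $\Ot\bigl(\sum_{H\in\TG(G)}|\bnd{H}|^2\bigr)=\Ot(n)$ work, and since $\TG(G)$ has $O(\log{n})$ levels and each invocation has depth $\Ot(|\bnd{H}|^{1/3})=\Ot((\sqrt{n})^{1/3})=\Ot(n^{1/6})$, the total depth of this phase is $\Ot(n^{1/6})$.

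Next I would compute the external graphs $\DDG_{G-H}$ top-down along $\TG(G)$. The root $G$ of $\TG(G)$ has empty boundary, so $\DDG_{G-G}$ is the (trivial) empty graph. For any non-root piece $H_1$ with parent $H$ and sibling $H_2$, recall (as noted above, following~\cite{MozesS12}) that $\DDG_{G-H_1}$ is obtained by solving the DDG APSP problem on $\DDG_{G-H}\cup\DDG_{H_2}$. Since $\bnd{H}\subseteq\bnd{H_1}\cup\bnd{H_2}$ and $\bnd{(G-H)}=\bnd{H}$, the relevant parameter is $b=|\bnd{H}|+|\bnd{H_2}|=O(|\bnd{H_1}|+|\bnd{H_2}|)=O(\sqrt{n})$, so by Lemma~\ref{l:apsp-piece-parallel} computing $\DDG_{G-H_1}$ takes $\Ot(b^2)$ work and $\Ot(b^{1/3})=\Ot(n^{1/6})$ depth (and the same invocation detects a negative cycle within these bounds if $G$ is infeasible; note that if $G$ has no negative cycle then neither does the subgraph $G-H$, so all these external DDGs are well-defined).

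The key structural point I would make explicit is that all pieces $H_1$ lying at the same level of $\TG(G)$ can be processed simultaneously in the top-down pass: computing $\DDG_{G-H_1}$ only reads $\DDG_{G-H}$ (its parent's external DDG, produced at the previous level of this pass) and $\DDG_{H_2}$ (its sibling's internal DDG, produced in the bottom-up phase), so there is no dependency among the external-DDG computations within a level. Summing the per-level depth $\Ot(n^{1/6})$ over the $O(\log{n})$ levels again gives $\Ot(n^{1/6})$ total depth, and summing $\Ot(b^2)=\Ot\bigl(|\bnd{H_1}|^2+|\bnd{H_2}|^2\bigr)$ over all nodes $H_1$ (each piece appearing $O(1)$ times, once as an $H_1$ and once as a sibling) gives $\Ot\bigl(\sum_{H\in\TG(G)}|\bnd{H}|^2\bigr)=\Ot(n)$ work. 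Together with the bottom-up phase this yields the claimed $\Ot(n)$ work and $\Ot(n^{1/6})$ depth. I do not expect a real obstacle here; the only points requiring a little care are verifying that the level-by-level scheduling is sound (which is exactly the dependency observation above) and that passing to the parent's boundary does not inflate the parameter $b$ (which follows from $\bnd{H}\subseteq\bnd{H_1}\cup\bnd{H_2}$), both standard facts about the recursive decomposition.
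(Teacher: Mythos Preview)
Your proposal is correct and is precisely the ``extension'' the paper has in mind: compute the internal $\DDG_H$'s bottom-up as in Theorem~\ref{t:parallel}, then compute the external $\DDG_{G-H}$'s top-down level by level via the identity $\DDG_{G-H_1}=\mathrm{APSP}(\DDG_{G-H}\cup\DDG_{H_2})$, applying Lemma~\ref{l:apsp-piece-parallel} at each node (the paper explicitly remarks in Section~4.4 that this computation is identical to the internal case, so the lemma applies verbatim). Your work and depth accounting, and the observation that all nodes of a level can be handled in parallel, are exactly the intended argument.
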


This allows us to prove the next lemma.

\begin{lemma}\label{l:all-edges-cycles}
  Let $G$ be a real-weighted planar digraph with no negative cycles.
  Then, one can compute for each $e\in E(G)$
  the shortest cycle going through $e$:
  \begin{itemize}
    \item sequentially in $O(n\log^2{n}\cdot \log{\log{n}}\cdot \alpha(n))$ time,
    \item in parallel using $\Ot(n)$ work and $\Ot(n^{1/6})$ depth.
  \end{itemize}
\end{lemma}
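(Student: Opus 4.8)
The plan is to reduce the all-edges shortest cycle problem to a collection of dense-distance-graph APSP computations, one per piece of the recursive decomposition $\TG(G)$, and then invoke Lemma~\ref{l:inductive} (sequential) or Corollary~\ref{c:external-parallel} together with Lemma~\ref{l:apsp-piece-parallel} (parallel). The key structural observation is that every edge $e=uv\in E(G)$ belongs to some leaf piece $L\in\TG(G)$ (indeed, the leaves partition the edge set), and the shortest cycle through $e$ has length $\wei(e)+\dist_G(v,u)$. Since $\bnd{L}=V(L)$ for a leaf $L$, it suffices, for each leaf $L$, to know the distances $\dist_G(v,u)$ for all $u,v\in V(L)$. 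These can be recovered from $\dist_{G'}(v,u)$ where $G'=\DDG_L\cup\DDG_{G-L}$: since $G\subseteq G'$ and every edge of $G'$ encodes a path of $G$, we have $\dist_{G'}=\dist_G$ on $V(L)\times V(L)\subseteq\bnd{L}\cup\bnd{(G-L)}$. So the whole task is: compute $\DDG_{G-L}$ for every leaf $L$ (an external DDG, available via the machinery already developed), build $\DDG_L$ (trivial for constant-size leaves, or directly), form $G'=\DDG_L\cup\DDG_{G-L}$, run DDG APSP on it, and read off the answers.

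First I would run the external-DDG computation: sequentially this is the $O(n\log^2 n\cdot\log\log n\cdot\alpha(n))$ bound of Lemma~\ref{l:external} (equivalently Lemma~\ref{l:inductive} applied through all pieces), and in parallel it is the $\Ot(n)$-work, $\Ot(n^{1/6})$-depth bound of Corollary~\ref{c:external-parallel}. This already produces $\DDG_{G-L}$ for every $L\in\TG(G)$, in particular for every leaf. Second, for each leaf $L$ I would solve DDG APSP on $\DDG_L\cup\DDG_{G-L}$. Here $b:=|\bnd{L}|+|\bnd{(G-L)}|=|\bnd{L}|+|\bnd{L}|=O(\sqrt{n})$, but more usefully $|V(L)|=O(1)$ so $\DDG_L$ has $O(1)$ vertices and $b=|\bnd{(G-L)}|+O(1)=O(|\bnd{L}|)$ with $\sum_{L}|\bnd{L}|^2=O(n\log n)$. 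Applying Lemma~\ref{l:inductive} with $b=O(|\bnd{L}|)$ gives $O(|\bnd{L}|^2\log n\cdot\log\log n\cdot\alpha(n))$ time per leaf, which sums to $O(n\log^2 n\cdot\log\log n\cdot\alpha(n))$ over all leaves; in parallel, Lemma~\ref{l:apsp-piece-parallel} gives $\Ot(|\bnd{L}|^2)$ work and $\Ot(|\bnd{L}|^{1/3})=\Ot(n^{1/6})$ depth per leaf, running all leaves simultaneously for total $\Ot(n)$ work and $\Ot(n^{1/6})$ depth. Third, for each $e=uv$ in each leaf $L$, output $\wei(e)+\dist_{G'}(v,u)$ as computed; this postprocessing is $O(\sum_L|V(L)|^2)=O(n)$ work and $O(\polylog n)$ depth.

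The one step that needs genuine care rather than routine bookkeeping is the correctness claim $\dist_{G'}(v,u)=\dist_G(v,u)$ for $u,v\in V(L)$, i.e. that gluing $\DDG_L$ to $\DDG_{G-L}$ along $\bnd{L}$ loses no shortest path. The argument is the standard one: a shortest $v\to u$ walk in $G$ (which exists and is a simple path since $G$ has no negative cycle) alternately traverses maximal segments inside $L$ and inside $G-L$, and each maximal segment has both endpoints in $\bnd{L}$ (a segment leaves $L$ only through a boundary vertex, by definition of $\bnd{L}$); replacing each segment by the corresponding edge of $\DDG_L$ or $\DDG_{G-L}$ gives a $v\to u$ walk in $G'$ of the same length, so $\dist_{G'}(v,u)\le\dist_G(v,u)$, while $\dist_{G'}\ge\dist_G$ is immediate since $G'$-edges encode $G$-paths. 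Combined with the cited lemmas, this completes both the sequential and the parallel bounds; note the hypothesis that $G$ has no negative cycles makes all these distances finite and well-defined.
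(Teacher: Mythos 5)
Your proposal is correct and follows essentially the same route as the paper: compute all external dense distance graphs via the already-established machinery (Lemma~\ref{l:external} / Corollary~\ref{c:external-parallel}), then for each leaf $L$ combine it with $\DDG_{G-L}$ and recover $\wei(e)+\dist(v,u)$ for every edge $e=uv\in L$, justified by the same maximal-segment path-decomposition argument. The only cosmetic difference is that the paper glues the constant-size leaf piece $L$ itself to $\DDG_{G-L}$ and solves each resulting $O(1)$-size instance trivially (linear total extra time, polylogarithmic depth), whereas you glue $\DDG_L$ and invoke the full DDG APSP lemmas, which is unnecessary for constant-size leaves but still yields the stated bounds.
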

\begin{proof}
Let $\TG(G)$ be the recursive decomposition of $G$.
By Lemma~\ref{l:external}, we can compute
the external distance graphs $\DDG_{G-H}$
  for all $H\in \TG(G)$ in $O(n\log^2{n}\cdot \log{\log{n}}\cdot \alpha(n))$ time.
By Corollary~\ref{c:external-parallel}, the same
  task can be completed within $\Ot(n)$ work and $\Ot(n^{1/6})$ depth.

  Let $L_e$ be any leaf node $L_e\in \TG(G)$ containing $e=uv$.
  The length of the shortest cycle going through $e$ is clearly
  $\dist_G(v,u)+\wei(e)$. So we need to compute $\dist_G(v,u)$.
  Since $\bnd{L_e}=V(L_e)$, $u,v\in \bnd{L_e}$.
  Consider the graph $G_e=\DDG_{G-L_e}\cup L_e$.
  We show that $\dist_G(v,u)=\dist_{G_e}(v,u)$.
  Indeed, consider a shortest $v\to u$ path $P=P_1\ldots P_k$ in $G$
  such that each $P_i$ is either a maximal subpath entirely contained in $G-L_e$, or a single
  edge in $L_e$. Observe that if $x_i\to y_i=P_i\subseteq G-L_e$, then by
  maximality and $u,v\in \bnd{L_e}$, we have $x_i,y_i\in \bnd{L_e}$.
  As a result, the weight of an edge $x_iy_i$ in $G_e$
  is at most $\dist_{G-L_e}(v,u)$ by the definition of $\DDG_{G-L_e}$.
  On the other hand, if $x_iy_i$ is a single edge in~$L_e$,
  then it is also preserved in $G_e$.
  As a result, a path of length $\ell(P)$ can be found in $G_e$.

  Since each $G_e$ has $O(1)$ size, computing shortest paths in all $G_e$
  takes linear extra time.
  As the computations for each $G_e$ are independent, they can
  be parallelized within $\Ot(n)$ work and $O(\polylog{n})$ depth.
\end{proof}

\end{document}